\documentclass[superscriptaddress, amsmath,amssymb,
aps, prl, letterpaper, tightenlines, reprint, notitlepages]{revtex4-2}

\usepackage{graphicx}
\usepackage{floatrow}
\usepackage[english]{babel}

\usepackage{dcolumn}
\usepackage{bm}

\usepackage{amsfonts}
\usepackage{dsfont}
\usepackage{bm}
\usepackage{bbold}
\usepackage{xcolor}
\usepackage{lipsum,babel}
\usepackage[normalem]{ulem}

\usepackage[utf8]{inputenc}
\usepackage{amsmath}
\usepackage{physics}
\usepackage{hyperref}
\usepackage{amssymb}
\usepackage{braket}
\usepackage{mathtools}
\usepackage{color}
\usepackage{amsthm}
\usepackage{algorithm}
\usepackage{algorithmic}

\DeclareMathOperator*{\Per}{\text{Per}}
\DeclareMathOperator*{\Haf}{\text{Haf}}

\newcommand{\cor}[1]{{\color{red}{#1}}}
\newcommand{\gbs}[1]{{\color{blue}{#1}}}

\theoremstyle{plain}
\newtheorem{theorem}{Theorem}
\newtheorem{corollary}{Corollary}

\newtheorem{definition}{Definition}

\theoremstyle{definition}

\begin{document}

\title{Complexity phase transition for continuous-variable cluster state}

\author{Byeongseon Go}
\email{gbs1997@snu.ac.kr}
\affiliation{NextQuantum Innovation Research Center, Department of Physics and Astronomy, Seoul National University, Seoul 08826, Republic of Korea}
\author{Hyunseok Jeong}
\email{h.jeong37@gmail.com}
\affiliation{NextQuantum Innovation Research Center, Department of Physics and Astronomy, Seoul National University, Seoul 08826, Republic of Korea}
\author{Changhun Oh}
\email{changhun0218@gmail.com}
\affiliation{Department of Physics, Korea Advanced Institute of Science and Technology, Daejeon 34141, Republic of Korea}

\begin{abstract}
Continuous-variable (CV) cluster states offer a promising platform for large-scale measurement-based quantum computations (MBQC).
However, finite squeezing inevitably introduces Gaussian noise during MBQC.
While fault-tolerant MBQC schemes exist in principle, they require the scalable incorporation of non-Gaussian resources, such as GKP states, which remain experimentally challenging.
Consequently, a central question at this stage is how finite squeezing fundamentally constrains the intrinsic computational power of CV cluster states themselves.
In this work, we address this question by analyzing the classical complexity of measurement-based linear optics (MBLO) implemented with such states, motivated by its near-term feasibility and recent experimental progress.
We develop an explicit MBLO framework and examine how the squeezing level governs the complexity of the classical simulation of the resulting output states.
Specifically, we identify squeezing-level thresholds that delineate classically tractable and intractable regimes, thereby revealing a squeezing-driven complexity phase transition.
These findings advance our understanding of the squeezing resources necessary for meaningful quantum computation in current experimental regimes.
Furthermore, they underscore the critical need to either scale the squeezing level or integrate error-correction schemes to achieve reliable, large-scale quantum computation with CV cluster states.
\end{abstract}


\maketitle


\emph{Introduction.---}
As a resource for measurement-based quantum computation (MBQC)~\cite{briegel2001persistent, raussendorf2001one}, continuous-variable (CV) cluster states~\cite{menicucci2006universal, gu2009quantum} offer a promising avenue for large-scale quantum computation.
In recent years, substantial experimental progress has been made in generating large-scale CV cluster states~\cite{yoshikawa2008demonstration, pysher2011parallel, yokoyama2013ultra, chen2014experimental, yoshikawa2016invited, cai2017multimode, pfister2019continuous, larsen2019deterministic, asavanant2019generation, larsen2021deterministic, asavanant2021time, du2023generation, wang2024chip, roh2025generation, jia2025continuous, lingua2025continuous}.
However, finite squeezing induces inevitable Gaussian-type noise during MBQC~\cite{menicucci2006universal, gu2009quantum, zhang2006continuous, cable2010bipartite, menicucci2011graphical, alexander2014noise, walshe2019robust, menicucci2014fault, su2018correcting, larsen2020architecture, larsen2021fault}.
In principle, incorporating non-Gaussian resources such as GKP states into CV cluster states can suppress this noise and enable fault-tolerant MBQC~\cite{menicucci2014fault, douce2017continuous, fukui2018high, su2018correcting, larsen2021fault, larsen2020architecture, wu2020quantum, ferrini2015optimization, walshe2021streamlined, du2025complete}. 
However, scaling these non-Gaussian resources commensurate with the system size remains a formidable experimental challenge. Consequently, current large-scale demonstrations have primarily focused on cluster-state generation and (noisy) Gaussian-gate implementations~\cite{yoshikawa2008demonstration, pysher2011parallel, yokoyama2013ultra, chen2014experimental, yoshikawa2016invited, cai2017multimode, pfister2019continuous, larsen2019deterministic, asavanant2019generation, larsen2021deterministic, asavanant2021time, du2023generation, wang2024chip, roh2025generation, jia2025continuous, lingua2025continuous, miwa2009demonstration, wang2010toward, ukai2011demonstration, ukai2011demonstrationCP, su2013gate}.

In this context, \textit{measurement-based linear optics} (MBLO)~\cite{alexander2017measurement} on CV cluster states offers a particularly well-motivated setting.
Linear optics (LO) plays a central role in quantum-advantage demonstrations~\cite{aaronson2011computational, hamilton2017gaussian, oh2025recent, zhong2021phase, madsen2022quantum, deng2023gaussian, liu2025robust}, yet real-space LO interferometers face limited scalability due to depth-dependent noise accumulation (e.g., photon loss)~\cite{oszmaniec2018classical, garcia2019simulating, qi2020regimes, oh2025classical, oh2022classical}.
In contrast, given CV cluster states, MBLO can be realized via \textit{parallel} homodyne measurements~\cite{menicucci2006universal, gu2009quantum}, which can mitigate such types of noise.

Furthermore, MBLO naturally aligns with the current experimental regime, in which Gaussian-gate implementations are feasible (up to finite-squeezing noise and displacements)~\cite{larsen2021deterministic, asavanant2021time, miwa2009demonstration, wang2010toward, ukai2011demonstration, ukai2011demonstrationCP, su2013gate}, whereas deterministic non-Gaussian gate implementations (required for full universality) remain challenging~\cite{sakaguchi2023nonlinear, konno2021nonlinear, marshall2015repeat, walschaers2018tailoring}.
Hence, beyond mere cluster-state generation, the demonstration of quantum advantage through large-scale MBLO constitutes a compelling near-term objective for current CV cluster-state platforms, as reflected in recent MBLO realizations~\cite{verma2025measurement}.

In this work, we investigate the computational complexity of MBLO.
We establish an explicit MBLO framework based on CV cluster states, specifying the cluster-state structure and measurement rules for universal MBLO, and develop a systematic graph-based noise analysis.
Within this framework, we demonstrate that increasing the squeezing level of CV cluster states drives a phase transition in the complexity of classically simulating the resulting output states, transitioning from a classically tractable to a classically intractable regime.
This yields a complexity phase diagram illustrated in Fig.~\ref{fig: schematics}.


\begin{figure}[b]
\includegraphics[width=0.75\linewidth]{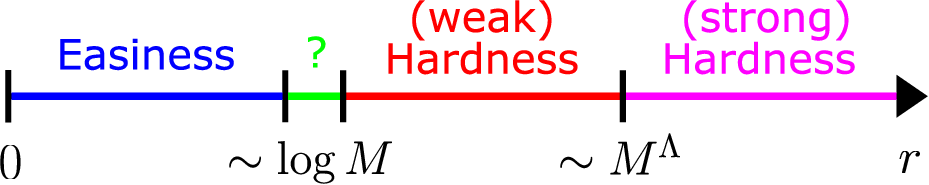}
\caption{Complexity phase diagram for simulating the measurement-based linear optics (MBLO) in Definition~\ref{def: MBLO-based sampling}.
Here, weak (strong) hardness indicates that it requires a stronger (weaker) complexity-theoretic assumption.
}
\label{fig: schematics}
\end{figure}

Overall, our results provide a constructive roadmap for current experiments by identifying a sufficient squeezing level for quantum advantage while also providing a clear experimental certificate of classical simulability.
Although our findings indicate that MBLO at currently accessible squeezing levels remains \textit{susceptible} to classical simulation, and that the asymptotic scaling of the squeezing level appears fundamental, our framework is designed to systematically analyze finite-squeezing noise, thus leaving room for further optimization.

\begin{figure*}[t]
\includegraphics[width=0.98\linewidth]{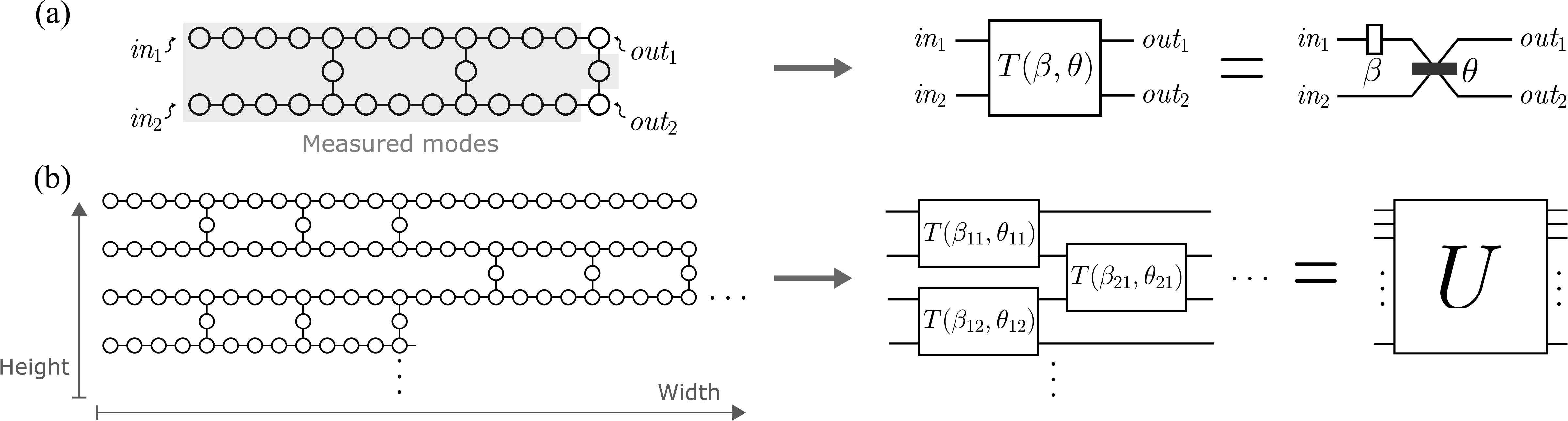}
\caption{
(a) Schematic of the brick graph $G_{T}$. 
The cluster state $\ket{G_{T}}$, with appropriate phase shifts and homodyne measurements on selected modes, can implement an arbitrary (phase-shifted) beam-splitter $T(\beta, \theta)$ up to displacement and finite-squeezing noise in Eq.~\eqref{eq: input-output relation main}. 
(b) Schematic of the two-dimensional brickwork graph $G_{U}$, constructed by cascading brick graph $G_{T}$. 
By~\cite{clements2016optimal}, any LO circuit $U$ can be implemented via a brickwork architecture of depth $\Omega(M)$, with each brick given by a beam splitter $T(\beta, \theta)$. 
Combining this with (a), a cluster state $\ket{G_{U}}$ whose underlying graph $G_{U}$ has width and height $\Theta(M)$ can implement any $M$-mode LO circuit $U$ up to displacement and finite-squeezing noise.
Note that this structure can be easily obtained provided standard two-dimensional cluster states (e.g., grid or hexagonal lattice) via vertex deletion and wire shortening~\cite{gu2009quantum, dahlberg2018transforming, ghosh2023complexity}; see~\cite{supple} for details.
}
\label{fig: graph}
\end{figure*}

\emph{Continuous-variable cluster state.---}
For a given graph $G = (V,E)$ comprising $|V| = n$ vertices, the corresponding $n$-mode CV cluster state is defined as
\begin{align}\label{def: graph state}
    \ket{G} = \prod_{(i,j)\in E} CZ_{ij}\ket{r}^{\otimes n},
\end{align}
where $CZ_{ij} = e^{i\hat{q}_i\hat{q}_j}$ denotes the CZ gate applied on modes $i,j \in [n]$ for each edge $(i,j)\in E$, and $\ket{r}$ is a squeezed-vacuum state along $\hat{p}$-axis with its squeezing level $r$ (i.e., $\langle \hat{p}_{i}^2\rangle = e^{-2r}/2$). 
Conventionally, a CV cluster state refers to $\ket{G}$ with a suitably-chosen graph $G$ that enables MBQC~\cite{briegel2001persistent, raussendorf2001one, menicucci2006universal, gu2009quantum, weedbrook2012gaussian}.
During MBQC, finite squeezing ($r < \infty$) unavoidably induces noise, whereas a perfect implementation is achievable only when $r \to \infty$.
While Eq.~\eqref{def: graph state} assumes ideal CZ gates, such interactions can also be realized using appropriate LO networks and preparing higher squeezing levels of $\ket{r}$~\cite{gu2009quantum, larsen2019deterministic, van2007building, asavanant2019generation, larsen2021deterministic, asavanant2021time, alexander2016one, alexander2018universal, alexander2016flexible, menicucci2011temporal, wang2014weaving}.




\emph{Framework for measurement-based linear optics.---}
Let $M$ denote the size of the target LO circuit to be implemented via MBLO, characterized by a unitary matrix $U \in {\rm U}(M)$; hereafter, we use the terms ``LO circuit" and ``unitary matrix" interchangeably for $U$.

To systematically quantify the finite-squeezing noise during MBLO, we adopt the convention in~\cite{larsen2020architecture, larsen2021deterministic, larsen2021fault, asavanant2021time}, wherein the effect of finite squeezing is captured through an \textit{input-output} relation of the quadrature operators. 
Specifically, after implementing $U$ via MBLO on $\ket{G}$ in Eq.~\eqref{def: graph state} for a suitably chosen graph $G$ (specified below), the $M$-mode output quadratures $(\hat{\bm{q}}_{\rm out}\;\hat{\bm{p}}_{\rm out})^{T}$ are related to the $M$-mode input quadratures $(\hat{\bm{q}}_{\rm in}\;\hat{\bm{p}}_{\rm in})^{T}$ by
\begin{align}\label{eq: input-output relation main}
    \begin{pmatrix}
        \hat{\bm{q}}_{\rm out} \\ \hat{\bm{p}}_{\rm out}
    \end{pmatrix} = \textbf{G} \begin{pmatrix}
        \hat{\bm{q}}_{\rm in} \\ \hat{\bm{p}}_{\rm in}
    \end{pmatrix} + \textbf{N}\hat{\bm{p}} + \textbf{D} \bm{m}  ,
\end{align}
where $\textbf{G}$ is the $2M \times 2M$ symplectic (orthogonal) matrix corresponding to the target LO circuit $U$.
$\hat{\bm{p}}$ is the vector of $\hat{p}$ quadratures of $\ket{r}$ in Eq.~\eqref{def: graph state}, and $\bm{m}$ is the vector of homodyne outcomes, both having dimension $n - M$ (the number of measurements on $\ket{G}$).
The matrices $\bf{D}$ and $\bf{N}$, each of size $2M \times (n-M)$, characterize the displacement and finite-squeezing noise, respectively.
The term $\textbf{D}\bm{m}$ contributes to the first-moment noise, correctable by feed-forward displacement.
In constrast, the term $\textbf{N}\hat{\bm{p}}$ contributes to the second-moment noise, necessitating incorporation of error-correction schemes~\cite{menicucci2014fault, douce2017continuous, fukui2018high, su2018correcting, larsen2021fault, larsen2020architecture, wu2020quantum, ferrini2015optimization, walshe2021streamlined, du2025complete}.




Let $G_{U}$ denote the \textit{brickwork} graph depicted in Fig.~\ref{fig: graph}, the foundation for the MBLO in this work.
This graph is constructed by cascading \textit{brick} graph $G_{T}$ shown in Fig.~\ref{fig: graph}(a).
Specifically, $\ket{G_T}$, when accompanied by appropriate phase shifts and homodyne measurements, can implement an arbitrary beam-splitter operation, up to displacement and finite-squeezing noise.
By leveraging the brickwork LO decomposition of~\cite{clements2016optimal}, a cluster state $\ket{G_{U}}$, whose underlying graph $G_{U}$ in Fig~\ref{fig: graph}(b) has both width and height $\Theta(M)$, can implement universal $U \in {\rm U}(M)$, again up to displacement and finite-squeezing noise.
The precise phase-shift angles required for MBLO on $\ket{G_{U}}$ are provided in~\cite{supple}.

We now further specify our MBLO setting.
As an input, we prepare an $M$-mode \textit{Gaussian} state $\rho_{\rm in}$, which is teleported to the input ports of $\ket{G_U}$ via Bell couplings~\cite{larsen2020architecture, ukai2010universal, alexander2014noise}.
The LO circuit $U$ is then executed on $\rho_{\rm in}$ via MBLO on $\ket{G_U}$ as described above (where $G_{U}$ has width and height $\Omega(M)$ for universality), followed by the feed-forward correction $-\textbf{D}\bm{m}$ given in Eq.~\eqref{eq: input-output relation main}.

Let $\rho_{\rm out}$ be the resulting output state.
Because the MBLO preserves Gaussianity, both $\rho_{\rm in}$ and $\rho_{\rm out}$ are Gaussian states, and thus are fully characterized by their mean vectors and covariance matrices.
Denoting these by $(\bm{\mu}_{\rm in}, V_{\rm in})$ and $(\bm{\mu}, V)$ for $\rho_{\rm in}$ and $\rho_{\rm out}$, respectively, the input-output relation in Eq.~\eqref{eq: input-output relation main} yields
\begin{align}
    \bm{\mu} &= \textbf{G}\bm{\mu}_{\rm in}  , \label{eq: mean of output state main} \\
    V &= \textbf{G}V_{\rm in}\textbf{G}^T + \frac{e^{-2r}}{2}\textbf{NN}^T   .\label{eq: covariance of output state main}
\end{align}


We emphasize that $\rho_{\rm out}$ remains \textit{mixed} even after the feed-forward displacement $-\textbf{D} \bm{m}$ due to the residual noise $\textbf{N} \hat{\bm{p}}$, which vanishes when $r \rightarrow \infty$~\cite{alexander2014noise, menicucci2014fault, larsen2020architecture, larsen2021deterministic, larsen2021fault, walshe2019robust}. 
While one could instead apply a displacement to obtain a pure output state for MBLO, derived by Schur-complement (or by analyzing the Wigner function~\cite{gu2009quantum, alexander2014noise, su2018correcting}), albeit one that still deviates from the target state, we adopt the feed-forward correction $-\textbf{D}\bm{m}$
to characterize the finite-squeezing noise as an \textit{additive} contribution to the covariance as in~\cite{alexander2014noise, menicucci2014fault, larsen2020architecture, larsen2021deterministic, larsen2021fault, walshe2019robust}.


Finally, our goal is to characterize the classical complexity of \textit{weak-simulating} $\rho_{\rm out}$, namely, sampling from projective measurements of $\rho_{\rm out}$ in the local boson-number basis, as formalized below.

\begin{definition}[MBLO-based sampling]\label{def: MBLO-based sampling}
Given as input an $M \times M$ unitary matrix $U$ and a description of an $M$-mode input Gaussian state $\rho_{\rm in}$, the MBLO-based sampling task is to output a sample drawn from the distribution
\begin{align}\label{eq: p(n) main}
    p( \bm{n}) = \Tr\left[\ket{\bm{n}}\!\bra{\bm{n}} \rho_{\rm out} \right]  ,
\end{align} 
where $\rho_{\rm out}$ is the output state produced by MBLO performed on $\ket{G_U}$, with the mean vector and covariance matrix given by Eq.~\eqref{eq: mean of output state main} and Eq.~\eqref{eq: covariance of output state main}, respectively, and $\ket{\bm{n}}$ denotes the boson-number basis corresponding to $\bm{n} = (n_1,\dots,n_{M})$.
\end{definition}

To summarize our results in advance, we show that as the squeezing level $r$ increases, the MBLO-based sampling task in Definition~\ref{def: MBLO-based sampling} undergoes a complexity phase transition, from a classically tractable regime to a classically intractable one, as illustrated in Fig.~\ref{fig: schematics}.

\emph{Easiness regime.---}
Our first result establishes that when the squeezing level $r$ falls below a certain threshold, the MBLO-based sampling task in Definition~\ref{def: MBLO-based sampling} is classically tractable.

\begin{theorem}[Easiness regime]\label{theorem: easiness main}
     There exists a threshold $r_{\rm th} = O\left(\log M\right)$ such that for any squeezing level $r \leq r_{\rm th}$, a polynomial-time classical algorithm exists that can simulate the MBLO-based sampling.
\end{theorem}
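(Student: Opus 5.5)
The strategy is to show that for low squeezing the accumulated finite-squeezing noise $\frac{e^{-2r}}{2}\textbf{NN}^T$ in Eq.~\eqref{eq: covariance of output state main} is large enough that $V - \frac{1}{2}\mathbb{1}$ becomes positive semidefinite. In other words, $\rho_{\rm out}$ is a classical (P-representable) Gaussian state: a mixture of coherent states with a Gaussian distribution over displacements. Number-basis sampling from such a state is efficient. One draws a displacement $\bm{\alpha}$ from the Gaussian P-distribution with mean $\bm{\mu}$ and covariance $V-\frac{1}{2}\mathbb{1}$, and then samples each $n_j$ independently from a Poisson distribution with mean $|\alpha_j|^2$. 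Both steps take polynomial time in $M$.

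The plan then reduces to a spectral bound. First, since $\textbf{G}$ is orthogonal symplectic, $\textbf{G}V_{\rm in}\textbf{G}^T \succeq \lambda_{\min}(V_{\rm in})\mathbb{1}$. A Gaussian input with squeezing parameter $s$ has $\lambda_{\min}(V_{\rm in}) = e^{-2s}/2$, and I treat $s$ as fixed or bounded; for vacuum or thermal inputs the condition is trivially easier. Second, I need a lower bound $\lambda_{\min}(\textbf{NN}^T) \geq c\,L$, where $L=\Theta(M)$ is the depth of the brickwork graph $G_U$. The sufficient condition becomes
\begin{align}
\frac{e^{-2s}}{2} + \frac{e^{-2r}}{2}\lambda_{\min}(\textbf{NN}^T) \geq \frac{1}{2},
\end{align}
which holds whenever $e^{2r} \leq \lambda_{\min}(\textbf{NN}^T)/(1-e^{-2s})$. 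With $\lambda_{\min}(\textbf{NN}^T)=\Omega(M)$, this gives $r_{\rm th} = \frac{1}{2}\log(\Omega(M)) = O(\log M)$, matching the statement.

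The main obstacle is the lower bound on $\lambda_{\min}(\textbf{NN}^T)$. I would obtain it from the graph-based noise analysis of a single brick $G_T$. Each brick acts on its two wire modes as $T(\beta,\theta)$ and adds independent noise $\textbf{N}_T\hat{\bm{p}}_T$. I would verify from the explicit brick construction that $\textbf{N}_T\textbf{N}_T^T \succeq c\,\mathbb{1}_4$ for a constant $c>0$ independent of the angles. This amounts to checking that each measured node feeds $e^{-2r}$-noise into both quadratures of the wire via the $CZ$ coupling, giving the usual noise of order one per teleportation step. Because noise added at later layers is propagated by orthogonal matrices, the noise terms compose as $\textbf{NN}^T=\sum_{\ell} \textbf{G}_{>\ell}(\bigoplus \textbf{N}_T\textbf{N}_T^T)\textbf{G}_{>\ell}^T$. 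In the Clements brickwork, every mode passes through $\Theta(M)$ bricks, and each layer contributes at least $c$ times the identity on the modes it touches, even after conjugation by orthogonal matrices. Layers alternate between covering all modes and all but the boundary modes, and identity wires in the cluster also add noise. Summing the layers therefore gives $\textbf{NN}^T \succeq c' M\,\mathbb{1}$.

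If the uniform-in-angles constant $c$ proves delicate, for instance for special beam-splitter angles, I would fall back on bounding only the identity-wire segments between bricks, whose noise is angle-independent. Alternatively, I would take the worst case over angles, since the phase-shift choices lie in a compact set.
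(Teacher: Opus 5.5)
Your overall architecture matches the paper's. You use $P$-representability once $V-\frac{1}{2}\mathbb{I}_{2M}\succeq 0$, reduce to $e^{2r}\le\lambda_{\min}(\textbf{N}\textbf{N}^T)$, bound each brick by a constant, propagate orthogonally between brick layers, and sum over $\Theta(M)$ layers with Weyl's inequality. (Your $\lambda_{\min}(V_{\rm in})$ refinement is harmless but unnecessary: $e^{2r}\le\lambda_{\min}(\textbf{N}\textbf{N}^T)$ already suffices for every Gaussian input, so no bound on the input squeezing is needed.)

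The gap is in the only nontrivial step, the angle-uniform bound $\textbf{N}_T\textbf{N}_T^T\succeq c\,\mathbb{I}_4$. You propose to verify it via a mechanism that is not correct, because a single node does not feed noise into both quadratures:
\begin{itemize}
\item One step of $G_H^{(2)}$ gives $\textbf{N}\textbf{N}^T=\mathrm{diag}(0,1)$.
\item The interaction node of $G_V^{(3)}$ has $\textbf{N}=\tan\phi_{\rm int}\,(0,0,1,1)^T$, which is rank one and vanishes at $\theta=0$.
\item Full rank appears only because the next chain step $\textbf{G}_i=\begin{pmatrix}-\tan\phi_i&-1\\1&0\end{pmatrix}$ rotates the earlier $\hat p$-noise into $\hat q$. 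Hence the last two nodes of any horizontal chain contribute exactly $\mathbb{I}_2$, whatever the angles.
\end{itemize}
More seriously, noise created inside a brick is propagated by non-orthogonal maps before it reaches the brick output. The entangling step is the shear $\textbf{G}_V=\begin{pmatrix}\mathbb{I}_2&0\\ tJ&\mathbb{I}_2\end{pmatrix}$, with $t=\tan\phi_{\rm int}$ and $J$ the $2\times 2$ all-ones matrix. It maps chain noise $\succeq\mathbb{I}_4$ only to $\succeq\textbf{G}_V\textbf{G}_V^T$, whose smallest eigenvalue is $1+2t^2-2|t|\sqrt{1+t^2}$. This tends to $0$ as $|t|\to\infty$. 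So a uniform constant is not automatic; it depends on the specific choice $\tan\phi_{\rm int}=-\sin\theta\in[-1,1]$.

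The missing argument, as in the paper, runs as follows. Discard, as positive semidefinite, every contribution to $\textbf{N}_T\textbf{N}_T^T$ except that of the final $G_{\sqsupset}$ with angles $\bm{\phi}_a$. Within that block, the two $G_H^{(5)}$ chains give $\mathbb{I}_4$ plus a positive semidefinite term. The shear bound with $|t|\le 1$ then yields $\lambda_{\min}\ge 3-2\sqrt{2}$.

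Your fallbacks do not bypass this step:
\begin{itemize}
\item Compactness of the angle set only turns pointwise positive definiteness into a uniform constant. It does not supply the positive definiteness itself, which is exactly what the chain-plus-shear argument provides.
\item $G_U$ has no angle-independent identity wires between bricks on interior modes. The only identity chains are the $G_H^{(13)}$ segments on the two boundary modes. The chain segments inside a brick are followed by the non-orthogonal shear, so their angle-independent $\mathbb{I}_2$ noise does not survive to the brick output without the bound on $t$.
\end{itemize}
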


\begin{proof}[Proof Sketch of Theorem~\ref{theorem: easiness main}]
(See~\cite{supple} for the full proof) 
Using the phase-space representation given in~\cite{rahimi2016sufficient}, from Eq.~\eqref{eq: p(n) main} we have 
\begin{align}
    p( \bm{n} ) = \pi^{M} \int d\bm{\alpha} Q_{\bm{n}}(\bm{\alpha})P_{\rm out}(\bm{\alpha}), 
\end{align}
where $\bm{\alpha}$ is an $M$-dimensional complex vector representing an $M$-mode phase-space point, $Q_{\bm{n}}(\bm{\alpha})$ is the Husimi Q representation of the number-basis $\ket{\bm{n}}$~\cite{husimi1940some}, and $P_{\rm out}(\bm{\alpha})$ is the Glauber-Sudarshan P representation of the output state $\rho_{\rm out}$~\cite{glauber1963photon, sudarshan1963equivalence}.
Here, for the Gaussian output state $\rho_{\rm out}$, $P_{\rm out}(\bm{\alpha})$ can be written as
\begin{align}\label{eq: p-distribution}
    P_{\rm out}(\bm{\alpha}) = \frac{e^{-({\bm{\alpha}-\bm{\mu}})^{T}\left(V-\frac{1}{2}\mathbb{I}_{2M}\right)^{-1}({\bm{\alpha}-\bm{\mu}})  } }{\pi^{M}\det(V-\frac{1}{2}\mathbb{I}_{2M})}  ,
\end{align}
where $\mathbb{I}_{2M}$ is the $2M \times 2M$ identity matrix, and $\bm{\mu}$ and $V$ are given in Eq.~\eqref{eq: mean of output state main} and Eq.~\eqref{eq: covariance of output state main}, respectively.


By the classical simulability criterion of~\cite{rahimi2016sufficient}, $p(\bm{n})$ can be sampled efficiently whenever $P_{\rm out}(\bm{\alpha})$ is non-negative and efficiently sampleable, which, by Eq.~\eqref{eq: p-distribution}, holds whenever $V - \frac{1}{2}\mathbb{I}_{2M} > 0$.
Moreover, since $V_{\rm in} > 0$ for an arbitrary input state $\rho_{\rm in}$, from Eq.~\eqref{eq: covariance of output state main}, this condition is satisfied whenever
\begin{align}\label{eq: easiness condition main}
    \textbf{N}\textbf{N}^T - e^{2r}\mathbb{I}_{2M} \geq 0  .
\end{align}
Therefore, if the minimum eigenvalue of $\textbf{N}\textbf{N}^T$ is greater than $e^{2r}$, the condition in Eq.~\eqref{eq: easiness condition main} is satisfied, implying that sampling from $p(\bm{n})$ is classically simulable.

Next, our MBLO procedure via the cluster state $\ket{G_{U}}$ implements the target LO circuit $U$ in a \textit{gate-wise} manner (see Fig.~\ref{fig: graph}).
Hence, for LO circuit decomposed by $U = U_{d}U_{d-1}\cdots U_{1}$, where each $U_{i}$ corresponds to the operation applied at the $i$th circuit layer (a \textit{parallel} array of beam-splitter operations) and $d$ is the effective circuit depth of LO circuit, the MBLO procedure implements the overall symplectic transformation $\textbf{G} = \textbf{G}_{d}\textbf{G}_{d-1}\cdots \textbf{G}_{1}$ via the sequential implementation of $\textbf{G}_{i}$, where each $\textbf{G}_{i}$ is the symplectic matrix corresponding to $U_{i}$ (thus orthogonal).
Given that $\textbf{G}$ is constructed through this sequential composition, the noise term $\textbf{NN}^T$ can be decomposed as 
\begin{align}\label{eq: decomposing noise matrix}
    \textbf{NN}^T 
    = \sum_{i = 1}^{d} \bar{\textbf{G}}_i\textbf{N}_i\textbf{N}_i^T\bar{\textbf{G}}_i^{T},
\end{align}
where $\textbf{N}_{i}$ is a noise matrix arises when implementing each $\textbf{G}_{i}$ via MBLO, and $\bar{\textbf{G}}_{k} = \textbf{G}\textbf{G}_{1}^{-1}\textbf{G}_{2}^{-1}\cdots \textbf{G}_{k}^{-1}$.
By a graph-theoretic analysis of the noise matrix in the input-output relation of Eq.~\eqref{eq: input-output relation main}, we show that for arbitrary $\textbf{G}_{i}$, associated with each layer $U_i$ of $U$, the minimum eigenvalue of each $\textbf{N}_{i}\textbf{N}_{i}^T$ is lower-bounded by a non-zero constant.
Since each $\bar{\textbf{G}}_{i}$ is an orthogonal matrix and thus preserves eigenvalues, and since the circuit depth satisfies $d \geq \Omega(M)$ in out settings (for universality~\cite{clements2016optimal}), it follows that the minimum eigenvalue of $\textbf{NN}^T$ is lower-bounded by $\Omega(M)$.
Therefore, there exists a threshold $r_{\rm th} = O(\log M)$ such that for any $r \leq r_{\rm th}$, the simulability condition in Eq.~\eqref{eq: easiness condition main} holds.

\end{proof}

To clarify the value of the simulability threshold $r$ in practice, we numerically analyze the threshold below which MBLO-based sampling becomes classically easy (i.e., $r$ satisfying Eq.~\eqref{eq: easiness condition main}), shown in Fig.~\ref{fig: numerics_easiness}.

While Theorem~\ref{theorem: easiness main} does not cover all possible MBLO implementations, analogous behavior is expected in general (as in~\cite{alexander2017measurement}), since finite-squeezing noise will accumulate in any MBLO framework.
For example, even under the triangular LO decomposition~\cite{reck1994experimental}, implementing interactions between distant modes (e.g., from the first to the last mode) necessitates $\Theta(M)$ circuit depth, which correspondingly accumulates $\Theta(M)$ noise terms $\textbf{N}_{i}\textbf{N}_{i}^T$ in Eq.~\eqref{eq: decomposing noise matrix}, leading to the same conclusion as in Theorem~\ref{theorem: easiness main}.

Also, to compare with recent large-scale MBLO implementations~\cite{verma2025measurement}, their introduced circuit family is \textit{non-universal} as the number of independent parameters scales $\Theta(M)$, whereas many applications~\cite{kwon2022quantum, banchi2018multiphoton, arzani2019random, aaronson2011computational, hamilton2017gaussian} require universality that necessitates $\Theta(M^2)$ parameters~\cite{clements2016optimal}.
Our easiness result thus complements their findings by identifying a barrier when one attempts to \textit{go beyond} such restricted circuit families, because increasing the number of parameters requires increasing circuit depth, which will in turn accumulate finite-squeezing noise proportionally as captured by our analysis.
Suppressing this noise, therefore, requires either scaling the squeezing level accordingly or incorporating error-correction schemes.

\begin{figure}[t]
\includegraphics[width=0.92\linewidth]{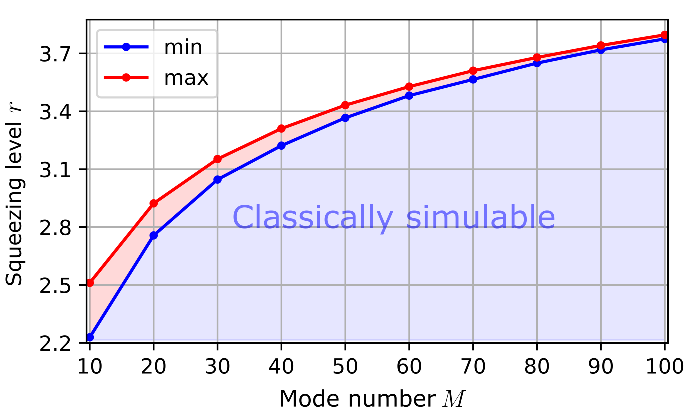}
\caption{Numerically obtained squeezing level $r$ below which MBLO-based sampling becomes classically simulable. 
For each $M \in \{10, 20,\dots,100\}$, we sample $5000$ Haar-random unitaries $U \in \text{U}(M)$.
For each $U$, we compute $r$ such that the minimum eigenvalue of $\textbf{NN}^T$ equals $e^{2r}$.
The blue and red curves indicate the minimum and maximum values of such $r$ over $U$, respectively.
Hence, the blue region implies a simulable regime for all $U$, whereas the red region implies a simulable regime for some portion of $U$. 
}
\label{fig: numerics_easiness}
\end{figure}


\emph{Hardness regime I.---}
Turning to our hardness results, we demonstrate the classical intractability of the MBLO-based sampling task for $r$ beyond a threshold.
This result relies on the widely accepted conjecture that simulating Gaussian boson sampling (GBS) to within an inverse-polynomial total variation distance (TVD) is classically hard~\cite{hamilton2017gaussian, kruse2019detailed, deshpande2022quantum, go2025sufficient}.

\begin{theorem}[Hardness regime I]\label{theorem: weak hardness main}
    Suppose that there exists $\varepsilon = {\rm poly}(M)^{-1}$ such that simulating GBS within TVD error $\varepsilon$ is classically intractable.
    Then, there exists a threshold $r_{\rm th} = \Omega\left(\log M\right)$ such that for any squeezing level $r \geq r_{\rm th}$, no polynomial-time classical algorithm can simulate the MBLO-based sampling.
\end{theorem}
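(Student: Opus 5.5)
The plan is a reduction. If a classical algorithm simulates MBLO-based sampling for some squeezing $r$, then the same algorithm simulates ideal GBS to within total variation distance $\varepsilon$. The link between the two is a bound on the distance between the noisy output state $\rho_{\rm out}$ and the ideal GBS output $\rho_{\rm ideal}$.

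We choose $\rho_{\rm in}$ to be a product of single-mode squeezed vacua, with squeezing parameters bounded by a constant as in standard GBS hardness instances, and $U$ Haar-random, or whatever family the GBS conjecture refers to. Then $\rho_{\rm ideal}$ has mean $\bm{\mu}$ and covariance $\textbf{G}V_{\rm in}\textbf{G}^T$. By Eqs.~\eqref{eq: mean of output state main} and \eqref{eq: covariance of output state main}, $\rho_{\rm out}$ has the same mean and covariance $V_{\rm ideal}+\frac{e^{-2r}}{2}\textbf{NN}^T$. The TVD between the two photon-number distributions is at most the trace distance between the states, by data processing. So it suffices to show
\begin{align}
\tfrac12\|\rho_{\rm out}-\rho_{\rm ideal}\|_1 \le \varepsilon - \varepsilon',
\end{align}
where $\varepsilon'$ is the simulation error allowed for MBLO sampling, and where $\varepsilon,\varepsilon'$ are both inverse polynomials.

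The key steps, in order, are as follows.

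First, upper-bound the spectral norm of $\textbf{NN}^T$. We reuse the gate-wise decomposition in Eq.~\eqref{eq: decomposing noise matrix} together with the graph-theoretic analysis from Theorem~\ref{theorem: easiness main}. Each brick $G_T$ contributes a noise matrix $\textbf{N}_i$ whose entries are bounded by a constant; this uses the explicit phase-shift angles from the supplement. Each $\bar{\textbf{G}}_i$ is orthogonal, and $d=O(M)$. Hence $\|\textbf{NN}^T\|\le \sum_i\|\textbf{N}_i\textbf{N}_i^T\| = O(M)$, and $\operatorname{tr}\textbf{NN}^T=O(M^2)$.

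Second, bound the trace distance between two Gaussian states with equal means and covariances $V_{\rm ideal}$ and $V_{\rm ideal}+\Delta$, where $\Delta=\frac{e^{-2r}}2\textbf{NN}^T\ge 0$. A direct route is to write $\rho_{\rm out}=\mathcal{E}(\rho_{\rm ideal})$, where $\mathcal{E}$ is a classical additive Gaussian noise channel (random displacements with covariance $\Delta$). Then
\begin{align}
\|\rho_{\rm out}-\rho_{\rm ideal}\|_1\le \int d\bm{\xi}\,\mathcal{N}_\Delta(\bm{\xi})\,\|D(\bm{\xi})\rho_{\rm ideal}D(\bm{\xi})^\dagger-\rho_{\rm ideal}\|_1 .
\end{align}
For a pure state, the displacement distance is at most $O(\|\bm{\xi}\|\,\sqrt{\langle \hat{x}^2\rangle})$, using $\|\,[\hat{x},\rho]\,\|$-type bounds. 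This gives a bound of order $\sqrt{\operatorname{tr}\Delta\cdot \|V_{\rm ideal}\|}$. Alternatively, one can use known fidelity formulas for Gaussian states, which give $1-F=O(\operatorname{tr}(V_{\rm ideal}^{-1}\Delta))$. The input squeezing is constant, so $\|V_{\rm ideal}^{\pm1}\|=O(1)$, and either route yields
\begin{align}
\tfrac12\|\rho_{\rm out}-\rho_{\rm ideal}\|_1 = O\!\left(\sqrt{e^{-2r}M^{2}}\right)=O(e^{-r}M).
\end{align}

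Third, set $r_{\rm th}=\log(M/\varepsilon)+O(1)=\Omega(\log M)$. Because $\varepsilon$ is an inverse polynomial, this is still $O(\log M)$ up to a constant factor. For $r\ge r_{\rm th}$ the distance is at most $\varepsilon/2$. An efficient MBLO sampler with error $\varepsilon/2$ would then simulate GBS within $\varepsilon$, contradicting the hypothesis.

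I expect the first step to be the main obstacle. We need a uniform constant bound on the entries of each brick's noise matrix over all beam-splitter parameters $(\beta,\theta)$. Measurement angles near singular values, such as phase shifts close to $\pi/2$, could make entries of $\textbf{N}_i$ blow up. I would first check the explicit angle formulas in the supplement and, if necessary, restrict the decomposition to parameter ranges where the homodyne angles stay bounded away from degeneracy. If that fails, we could accept polynomially large entries, which only change $r_{\rm th}$ by a constant factor while keeping it $\Theta(\log M)$.
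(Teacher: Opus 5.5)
Your proposal is correct and is essentially the paper's proof. The paper's supporting lemma is your ``alternative'' route: the pure-state Gaussian fidelity $F=|V+V_{\rm id}|^{-1/2}$, bounded via AM--GM and Cauchy--Schwarz as $1-F\le \frac{e^{-2r}}{8}\|\textbf{N}_U\textbf{N}_U^T\|_F\|V_{\rm in}^{-1}\|_F$, followed by Fuchs--van de Graaf. This is combined with $\|\textbf{N}_U\textbf{N}_U^T\|_F=O(Mk)$ from the same gate-wise decomposition, and your operator-norm/trace estimate or random-displacement bound only changes the polynomial prefactor.

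On the step you flagged, the paper just asserts that the brick noise entries are bounded by constants, and this holds for its angles. Every $\tan\phi$ in $\bm{\phi}_a$ and $\bm{\phi}_b$ is uniformly bounded in $(\beta,\theta)$, and the apparent $0/0$ points in $\bm{\phi}_b$ at $\sin\beta+\sqrt{2}\sin(\theta-\pi/4)\cos\beta=0$ are removable, so no restriction of the decomposition is needed.
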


We here sketch the proof of Theorem~\ref{theorem: weak hardness main}.
Let $\rho_{\rm id}$ be the output state $\rho_{\rm out}$ of the MBLO scheme in Definition~\ref{def: MBLO-based sampling} in the infinite-squeezing limit $r \rightarrow \infty$. 
Consider an appropriate input state $\rho_{\rm in}$ such that $\rho_{\rm id}$ coincides with the output state of a standard GBS setup~\cite{hamilton2017gaussian, kruse2019detailed, deshpande2022quantum, go2025sufficient}. 
For the Gaussian states $\rho_{\rm out}$ and $\rho_{\rm id}$, we make use of two facts: 
(1) their TVD $\mathcal{D}_{\rho_{\rm out}, \rho_{\rm id}}$ is upper bounded in terms of their quantum fidelity $F_{\rho_{\rm out}, \rho_{\rm id}}$~\cite{fuchs1999cryptographic}, and
(2) $F_{\rho_{\rm out}, \rho_{\rm id}}$ admits a simple expression in terms of their covariance matrices~\cite{marian2012uhlmann, spedalieri2012limit}.
These allow us to bound $\mathcal{D}_{\rho_{\rm out}, \rho_{\rm id}}$ in terms of the squeezing level $r$ and noise matrix $\bf{N}$ by
\begin{align}
    \mathcal{D}_{\rho_{\rm out}, \rho_{\rm id}} 
    \leq \sqrt{1 - F_{\rho_{\rm out}, \rho_{\rm id}}} 
    \le e^{-r} \sqrt{\frac{\|{\bf N}{\bf N}^T \|_{F} \| V_{\rm in}^{-1} \|_{F}}{8}}   , \label{eq: TVD bound}
\end{align}
for $V_{\rm in}$ being the covariance of $\rho_{\rm in}$.
Moreover, the norm $\|{\bf N}{\bf N}^T \|_{F}$ is bounded by $\text{poly}(M)$, which, by Eq.~\eqref{eq: TVD bound}, implies that sampling from $p(\bm{n})$ enables the simulation of GBS within TVD error $\text{poly}(M)e^{-r}$. 
Therefore, if simulating GBS within a certain inverse-polynomial TVD is classically intractable, it follows that there exists a threshold $r_{\rm th} = \Omega\left(\log M\right)$ such that for all $r \geq r_{\rm th}$, the MBLO-based sampling is classically intractable. 
A detailed argument is provided in~\cite{supple}.



Importantly, under the same conjecture, this hardness argument can readily be extended to the \textit{approximate} simulation within a bounded TVD.
Specifically, by the triangle inequality, sampling from $p(\bm{n})$ in Eq.~\eqref{eq: p(n) main} within TVD $\kappa = \text{poly}(M)^{-1}$ enables simulating ideal GBS within TVD $\kappa + \text{poly}(M)e^{-2r}$, which remains $\text{poly}(M)^{-1}$ when $r \geq \Omega\left(\log M\right)$.
Hence, the following corollary follows directly. 


\begin{corollary}[Hardness of approximate simulation]
    Under the same assumption as in Theorem~\ref{theorem: weak hardness main}, there exists a threshold $r_{\rm th} = \Omega\left(\log M\right)$ such that for any squeezing level $r \geq r_{\rm th}$, no polynomial-time classical algorithm can approximately sample from $p(\bm{n})$ in Eq.~\eqref{eq: p(n) main} within inverse-polynomial TVD.
\end{corollary}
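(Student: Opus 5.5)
The plan is a standard reduction: an approximate sampler for $p(\bm{n})$ would also be an approximate sampler for ideal GBS, with only an inverse-polynomial extra error. Fix the input $\rho_{\rm in}$ used in the proof of Theorem~\ref{theorem: weak hardness main}, so that the infinite-squeezing output $\rho_{\rm id}$ coincides with a standard GBS output state. Let $q(\bm{n}) = \Tr[\ket{\bm{n}}\!\bra{\bm{n}}\rho_{\rm id}]$ denote the ideal GBS distribution. Suppose, for contradiction, that a polynomial-time classical algorithm $\mathcal{A}$ outputs samples from a distribution $\tilde{p}$ with $\|\tilde{p} - p\|_{\rm TV} \le \kappa$ for some $\kappa = \text{poly}(M)^{-1}$.

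The key step is the triangle inequality
\begin{align}
\|\tilde{p} - q\|_{\rm TV} \le \|\tilde{p} - p\|_{\rm TV} + \mathcal{D}_{\rho_{\rm out},\rho_{\rm id}} \le \kappa + e^{-r}\sqrt{\frac{\|\mathbf{N}\mathbf{N}^T\|_F\,\|V_{\rm in}^{-1}\|_F}{8}}. \nonumber
\end{align}
The second term is bounded by Eq.~\eqref{eq: TVD bound}. I will then reuse the two estimates from the proof of Theorem~\ref{theorem: weak hardness main}. First, $\|\mathbf{N}\mathbf{N}^T\|_F \le \text{poly}(M)$ for the brickwork graph $G_U$. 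Second, $\|V_{\rm in}^{-1}\|_F \le \text{poly}(M)$ for the GBS input, which holds because the input squeezing is at most constant or logarithmic. Together these give $\|\tilde{p} - q\|_{\rm TV} \le \kappa + \text{poly}(M)\,e^{-r}$.

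The remaining step is the choice of parameters. Let $\varepsilon = \text{poly}(M)^{-1}$ be the TVD tolerance at which GBS is assumed hard. Take $\kappa \le \varepsilon/2$, and choose $r_{\rm th} = c\log M$ with the constant $c$ large enough that $\text{poly}(M)e^{-r} \le \varepsilon/2$ for all $r \ge r_{\rm th}$. Then $\mathcal{A}$ samples ideal GBS within TVD $\varepsilon$, contradicting the assumption.

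The only subtle point is the quantifier order. The assumption of Theorem~\ref{theorem: weak hardness main} supplies one specific $\varepsilon$, so the corollary should be read as ruling out simulation within every $\kappa \le \varepsilon/2$, which is itself inverse-polynomial. I expect the main obstacle to be confirming that the $\text{poly}(M)$ bounds on $\|\mathbf{N}\mathbf{N}^T\|_F$ and $\|V_{\rm in}^{-1}\|_F$ hold uniformly over all $U$. Since the corollary inherits these from Theorem~\ref{theorem: weak hardness main}, I will simply invoke them as established there.
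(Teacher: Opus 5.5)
Your proposal is correct and follows the paper's argument. The triangle inequality, together with the fidelity-based bound of Eq.~\eqref{eq: TVD bound} and the $\text{poly}(M)$ bounds on $\|\mathbf{N}\mathbf{N}^T\|_F$ and $\|V_{\rm in}^{-1}\|_F$, turns a $\kappa$-approximate sampler for $p(\bm{n})$ into a $(\kappa+\text{poly}(M)e^{-r})$-approximate GBS sampler, and this contradicts the assumption once $r\ge c\log M$. Your explicit handling of the quantifiers, taking $\kappa\le\varepsilon/2$ for the specific $\varepsilon$ supplied by the assumption, is more careful than the paper's one-line remark; that remark also writes $e^{-2r}$ where Eq.~\eqref{eq: TVD bound} gives $e^{-r}$, which does not affect the $\Omega(\log M)$ threshold.
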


\emph{Hardness regime II.---}
Furthermore, we establish that when the squeezing level $r$ exceeds a certain, larger threshold than that in Theorem~\ref{theorem: weak hardness main}, then simulating $\rho_{\rm out}$ becomes classically intractable unless the polynomial hierarchy (PH) collapses.
This non-collapse of PH is generally considered a weaker and more foundational assumption than that in Theorem~\ref{theorem: weak hardness main};
existing arguments for the simulation hardness of GBS rely on the non-collapse of PH together with additional conjectures that remain open, including the average-case \#P-hardness of approximating hafnians of random matrices~\cite{hamilton2017gaussian, kruse2019detailed, deshpande2022quantum, go2025sufficient} and the anti-concentration of hafnians~\cite{ehrenberg2025transition, ehrenberg2025second}.


\begin{theorem}[Hardness regime II]\label{theorem: strong hardness main}
    For any constant $\Lambda >0$, there exists a threshold $r_{\rm th} = \Omega\left(M^{\Lambda}\right)$ such that for any squeezing level $r \geq r_{\rm th}$, no polynomial-time classical algorithm can simulate the MBLO-based sampling, unless PH collapses to a finite level.
\end{theorem}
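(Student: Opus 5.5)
We will reduce from the standard PH-collapse argument for \emph{exact} (or multiplicatively approximate) GBS sampling, using a postselection/Stockmeyer argument in the style of Aaronson--Arkhipov and Hamilton et al. The key fact is that for ideal GBS with a suitable input, there exist outcome probabilities $p_{\rm id}(\bm{n})$ that are \#P-hard to approximate within a constant multiplicative factor. An example is $|\Haf(A)|^2$ for an adversarially chosen matrix $A$ embedded into a unitary $U$ via the standard construction, with the photon pattern $\bm{n}=(1,\dots,1,0,\dots,0)$ of total photon number $2k=\Theta(M^{c})$. Since this is a worst-case statement, no average-case or anticoncentration conjecture is needed, provided the chosen probability is not too small. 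We choose $A$ so that $p_{\rm id}(\bm{n})\geq 2^{-\mathrm{poly}(M)}$. This can be arranged by restricting to matrices with integer (or bounded-bit) entries whose hafnian is a nonzero integer, because \#P-hardness of computing such hafnians exactly, equivalently to multiplicative error, still holds.

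The second step bounds the deviation between $p(\bm{n})$ of Definition~\ref{def: MBLO-based sampling} and $p_{\rm id}(\bm{n})$ for this specific outcome. The ideal state $\rho_{\rm id}$ has covariance $V_{\rm id}=\textbf{G}V_{\rm in}\textbf{G}^T$, while $V=V_{\rm id}+\frac{e^{-2r}}{2}\textbf{N}\textbf{N}^T$. We will write the Gaussian number-basis probability as
\begin{align}
p(\bm{n})=\frac{\Haf(\mathcal{A}_{\bm{n}})}{\bm{n}!\sqrt{\det(V+\tfrac12\mathbb{I}_{2M})}}\times(\text{displacement factor}),
\end{align}
where $\mathcal{A}$ is a smooth (rational) function of $V$. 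The difference $|p(\bm{n})-p_{\rm id}(\bm{n})|$ is then controlled by a Lipschitz-type estimate. The hafnian is a polynomial of degree $k$ in the entries of $\mathcal{A}$, and it has at most $(2k-1)!!$ terms. A perturbation $\delta\mathcal{A}=O(\mathrm{poly}(M)e^{-2r})$ in operator norm, which follows from the bound $\|\textbf{N}\textbf{N}^T\|_F\leq\mathrm{poly}(M)$ established for Theorem~\ref{theorem: weak hardness main}, changes the hafnian by at most roughly $k\,(2k)!!\,(\|\mathcal{A}\|+\|\delta\mathcal{A}\|)^{k-1}\|\delta\mathcal{A}\|$. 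This gives
\begin{align}
|p(\bm{n})-p_{\rm id}(\bm{n})|\leq 2^{O(M^{c}\log M)}e^{-2r}.
\end{align}
The relative error $|p-p_{\rm id}|/p_{\rm id}$ is therefore at most $2^{\mathrm{poly}(M)}e^{-2r}$. For any fixed polynomial degree, which we absorb into the constant $\Lambda$, this is below any constant once $r\geq r_{\rm th}=\Omega(M^{\Lambda})$. Hence $p(\bm{n})$ is itself \#P-hard to approximate within constant multiplicative error.

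Third, we apply the standard argument. Suppose a polynomial-time classical sampler exists for the MBLO-based sampling, which we treat as exact weak simulation as in the theorem statement. Stockmeyer approximate counting then yields a multiplicative approximation of $p(\bm{n})$ in $\mathrm{BPP}^{\mathrm{NP}}$. Combined with the previous step, this gives $\mathrm{P}^{\#\mathrm{P}}\subseteq \mathrm{BPP}^{\mathrm{NP}}\subseteq\Sigma_3$, and Toda's theorem then collapses PH.

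The main obstacle is the second step, specifically securing a polynomially-bounded \emph{lower} bound on $p_{\rm id}(\bm{n})$ together with the perturbation estimate for the hafnian. Dependence on $\Lambda$ enters exactly here: the exponent of $2^{\mathrm{poly}(M)}$ is governed by the photon number and the bit-size of the hard instance. My first attempt will be to take hard instances of size $k=M^{\Lambda'}$ with bounded entries, so that every quantity is $2^{O(M^{\Lambda'}\log M)}$. Choosing $\Lambda'$ below $\Lambda$ then makes $r=\Omega(M^{\Lambda})$ sufficient.
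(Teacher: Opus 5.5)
Your architecture is the paper's. You embed a worst-case \#P-hard instance of size $N_0\approx M^{\Lambda'}$, with $\Lambda'<\Lambda$, inside a much larger unitary, so that a collision-free ideal probability is proportional to $\Per(W')^2$. The paper does this with $W$ having off-diagonal blocks $W'\oplus I$, so that $\Haf(W)=\Per(W')$ and $q(\bm n)=c_N\Per(W')^2$ with $c_N=\frac{\tanh^N r_0}{\|W\|_2^N\cosh^N r_0}$. You then show the finite-squeezing perturbation costs relative error $2^{O(M^{\Lambda'}\log M)}e^{-2r}$, and Stockmeyer plus Toda finish. Your Lipschitz bound over all perfect matchings, with its $(2k)!!$ factor, is adequate because your photon number is itself $O(M^{\Lambda'})$. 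The paper instead pads to photon number $N\gg N_0$ and uses the hafnian-of-a-sum expansion, so only $(N_0!)^2$ appears; either works. Note that $\Lambda$ is given, so the polynomial degree cannot be ``absorbed'' into it. Your final choice $\Lambda'<\Lambda$, which is the paper's $\lambda<\Lambda$, is what is actually needed.

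The genuine gap is how you secure the lower bound on $p_{\rm id}(\bm n)$. You cannot simply restrict to instances with nonzero integer hafnian, and your stated reason is false: exact \#P-hardness does not imply multiplicative hardness. For example, permanents of $0/1$ matrices that have a perfect matching are \#P-hard to compute exactly yet admit an FPRAS. Moreover, Stockmeyer only gives $1\pm 1/\mathrm{poly}$ precision, which cannot pin down an integer as large as $(N_0!)^2$.

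The multiplicative hardness you can actually cite (Aaronson's, as used in the paper) is for all $W'\in\{-1,0,1\}^{N_0\times N_0}$. That includes instances with $\Per(W')=0$, on which a multiplicative estimate must return exactly $0$. These instances cannot be recognized in advance, and on them your relative-error bound is meaningless because $p_{\rm id}(\bm n)=0$. A hardness statement under the promise $\Per(W')\neq 0$ would need its own argument.

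The repair is the paper's case split, and your additive estimate already supports it. Since $c_N$ is explicitly computable and $\Per(W')^2\geq 1$ whenever it is nonzero, taking $r\geq r_{\rm th}$ makes $|p(\bm n)-q(\bm n)|\ll c_N$. Hence either $p(\bm n)$ is multiplicatively close to $q(\bm n)$, or $p(\bm n)\leq g\,c_N$. Rescaling the Stockmeyer estimate by $c_N^{-1}$ and rounding values below $1/2$ to $0$ then gives a valid multiplicative estimate of $\Per(W')^2$ in both cases.
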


While CV cluster states with squeezing levels stated in Theorem~\ref{theorem: strong hardness main} are computationally very powerful, the energy required to generate such states now scales \textit{subexponentially} by~\cite{liu2016power}, highlighting an inherent trade-off between achievable computational power and physical resource requirements in CV cluster states.

\begin{proof}[Proof Sketch of Theorem~\ref{theorem: weak hardness main}]
Consider an $N_0\times N_0$ matrix $W' = \{-1,0,1\}^{N_0 \times N_0}$, for which computing $|\Per(W')|^2$ is \#P-hard in the worst-case.
Given $W'$ and a constant $\Lambda > 0$, let $N = \left\lceil N_0^{1/\lambda} \right\rceil$ for any constant $\lambda \in (0, \Lambda)$ and take $M \geq 2N$. 
Then, one can construct LO circuit $U$ and an input state $\rho_{\rm in}$ such that output probability $q(\bm{n})$ for an $N$-boson outcome $\bm{n}$ of the ideal output state $\rho_{\rm id}$ in Eq.~\eqref{eq: TVD bound} is proportional to $|\Per(W')|^2$ up to a multiplicative factor.  
Moreover, by the additional contribution $\frac{e^{-2r}}{2}\textbf{NN}^T$ in the covariance in Eq.~\eqref{eq: covariance of output state main}, $q(\bm{n})$ and $p(\bm{n})$ in Eq.~\eqref{eq: p(n) main} satisfy (when $q(\bm{n}) \neq 0$)
\begin{align}\label{eq: our goal main 2}
    \left| 1 - \frac{p(\bm{n} )}{q(\bm{n})} \right| \leq (N_0!)^2\text{poly}(M)e^{-2r} .
\end{align}
Hence, when $r$ exceeds a certain threshold $r_{\rm th} = \Omega(M^{\Lambda})$ for $\Lambda > \lambda$, $p(\bm{n})$ and $q(\bm{n})$ are multiplicatively close within inverse-polynomial imprecision.
Then, by Stockmeyer's algorithm~\cite{stockmeyer1985approximation}, one can multiplicatively estimate $\Per(W')^2$ in ${\rm BPP^{NP}}$ given oracle access to the MBLO-based sampler, finally implying that the sampling task is classically hard unless PH collapses to a finite level (${\rm BPP^{NP}}$). 
A detailed proof of Theorem~\ref{theorem: strong hardness main}, including how to deal with the case $q(\bm{n}) \propto |\Per(W')|^2 =0$, is presented in~\cite{supple}.
\end{proof}

\emph{Remarks and future work.---}
While any universal CV cluster state can conceptually implement MBLO, we explicitly utilize the brickwork graph $G_{U}$ (Fig.~\ref{fig: graph}) due to its systematic alignment with graph-based finite-squeezing noise analysis and the brickwork LO decomposition~\cite{clements2016optimal}. 
As shown in~\cite{supple}, this structure can be directly embedded into standard grid or hexagonal lattice CV cluster states, as well as dual- or quad-rail lattices~\cite{menicucci2011temporal, wang2014weaving, alexander2016flexible, alexander2017measurement, asavanant2019generation, larsen2021deterministic, asavanant2021time, alexander2016one, alexander2018universal} that support grid-lattice implementations.

To compare our result with existing MBLO arguments, Ref.~\cite{alexander2017measurement} proposed MBLO schemes based on the quad-rail lattice architecture~\cite{menicucci2011temporal, wang2014weaving, alexander2016flexible} that maps the finite-squeezing noise to photon loss, identifying squeezing levels for the hardness of MBLO-based boson sampling.
In contrast, we identified distinct easiness and hardness regimes as a function of squeezing, which help understand the precise squeezing thresholds that delineate the quantum and classical computational regimes of CV cluster states.

Since our easiness result does not cover all possible MBLO implementations, it remains open whether such simulability holds in more general settings in~\cite{menicucci2011temporal, alexander2016flexible, alexander2017measurement, larsen2019deterministic, asavanant2019generation, larsen2021deterministic, asavanant2021time, verma2025measurement}.
Beyond the specific framework analyzed in this work, it would also be interesting to examine the complexity of broader classes of computational tasks using CV cluster states, thereby further elucidating how the squeezing level affects the intrinsic computational capabilities of CV cluster states.

\emph{Acknowledgements.---}
B.G. and H.J. were supported by the Korean government (Ministry of Science and ICT~(MSIT)),
the NRF grants funded by the Korea government~(MSIT)~(Nos.~RS-2024-00413957,~RS-2024-00438415, and NRF-2023R1A2C1006115),~and the Institute of Information \& Communications Technology Planning \& Evaluation (IITP) grant funded by the Korea government (MSIT) (IITP-2025-RS-2020-II201606 and  IITP-2025-RS-2024-00437191).
C.O. was supported by the NRF Grants (No. RS-2024-00431768 and No. RS-2025-00515456) funded by the Korean government (MSIT) and IITP grants funded by the Korea government (MSIT) (No. RS-2024-00437284, No. IITP-2025-RS-2025-02283189 and No. IITP-2025-RS-2025-02263264) and by Global Partnership Program of Leading Universities in Quantum Science and Technology (RS-2025-08542968) through the National Research Foundation of Korea~(NRF) funded by the Korean government (Ministry of Science and ICT(MSIT)).

\let\oldaddcontentsline\addcontentsline

\renewcommand{\addcontentsline}[3]{}

\bibliographystyle{unsrt}
\bibliography{Reference.bib}

@article{verma2025measurement,
  title={Measurement-Induced Multimode Squeezed Light Interferometers with Scalable Architectures},
  author={Verma, Abhinav and Hastrup, Jacob and Neergaard-Nielsen, Jonas S and Andersen, Ulrik L},
  journal={arXiv preprint arXiv:2503.14449},
  year={2025}
}

@article{konno2021nonlinear,
  title={Nonlinear squeezing for measurement-based non-gaussian operations in time domain},
  author={Konno, Shunya and Sakaguchi, Atsushi and Asavanant, Warit and Ogawa, Hisashi and Kobayashi, Masaya and Marek, Petr and Filip, Radim and Yoshikawa, Jun-ichi and Furusawa, Akira},
  journal={Physical Review Applied},
  volume={15},
  number={2},
  pages={024024},
  year={2021},
  publisher={APS}
}

@article{marshall2015repeat,
  title={Repeat-until-success cubic phase gate for universal continuous-variable quantum computation},
  author={Marshall, Kevin and Pooser, Raphael and Siopsis, George and Weedbrook, Christian},
  journal={Physical Review A},
  volume={91},
  number={3},
  pages={032321},
  year={2015},
  publisher={APS}
}

@article{sakaguchi2023nonlinear,
  title={Nonlinear feedforward enabling quantum computation},
  author={Sakaguchi, Atsushi and Konno, Shunya and Hanamura, Fumiya and Asavanant, Warit and Takase, Kan and Ogawa, Hisashi and Marek, Petr and Filip, Radim and Yoshikawa, Jun-ichi and Huntington, Elanor and others},
  journal={Nature Communications},
  volume={14},
  number={1},
  pages={3817},
  year={2023},
  publisher={Nature Publishing Group UK London}
}

@article{kwon2022quantum,
  title={Quantum metrological power of continuous-variable quantum networks},
  author={Kwon, Hyukgun and Lim, Youngrong and Jiang, Liang and Jeong, Hyunseok and Oh, Changhun},
  journal={Physical Review Letters},
  volume={128},
  number={18},
  pages={180503},
  year={2022},
  publisher={APS}
}

@article{banchi2018multiphoton,
  title={Multiphoton tomography with linear optics and photon counting},
  author={Banchi, Leonardo and Kolthammer, W Steven and Kim, MS},
  journal={Physical Review Letters},
  volume={121},
  number={25},
  pages={250402},
  year={2018},
  publisher={APS}
}

@article{arzani2019random,
  title={Random coding for sharing bosonic quantum secrets},
  author={Arzani, Francesco and Ferrini, Giulia and Grosshans, Fr{\'e}d{\'e}ric and Markham, Damian},
  journal={Physical Review A},
  volume={100},
  number={2},
  pages={022303},
  year={2019},
  publisher={APS}
}

@article{liu2025robust,
  title={Robust quantum computational advantage with programmable 3050-photon Gaussian boson sampling},
  author={Liu, Hua-Liang and Su, Hao and Gong, Si-Qiu and Gu, Yi-Chao and Tang, Hao-Yang and Jia, Meng-Hao and Wei, Qian and Song, Yukun and Wang, Dongzhou and Zheng, Mingyang and others},
  journal={arXiv preprint arXiv:2508.09092},
  year={2025}
}

@article{alexander2014noise,
  title={Noise analysis of single-mode Gaussian operations using continuous-variable cluster states},
  author={Alexander, Rafael N and Armstrong, Seiji C and Ukai, Ryuji and Menicucci, Nicolas C},
  journal={Physical Review A},
  volume={90},
  number={6},
  pages={062324},
  year={2014},
  publisher={APS}
}

@article{larsen2021fault,
  title={Fault-tolerant continuous-variable measurement-based quantum computation architecture},
  author={Larsen, Mikkel V and Chamberland, Christopher and Noh, Kyungjoo and Neergaard-Nielsen, Jonas S and Andersen, Ulrik L},
  journal={Prx Quantum},
  volume={2},
  number={3},
  pages={030325},
  year={2021},
  publisher={APS}
}

@article{walshe2019robust,
  title={Robust fault tolerance for continuous-variable cluster states with excess anti-squeezing},
  author={Walshe, Blayney W and Mensen, Lucas J and Baragiola, Ben Q and Menicucci, Nicolas C},
  journal={arXiv preprint arXiv:1903.02162},
  year={2019}
}

@article{van2007building,
  title={Building Gaussian cluster states by linear optics},
  author={van Loock, Peter and Weedbrook, Christian and Gu, Mile},
  journal={Physical Review A—Atomic, Molecular, and Optical Physics},
  volume={76},
  number={3},
  pages={032321},
  year={2007},
  publisher={APS}
}

@misc{supple,
    title={Supplemental {M}aterial}}

@article{walschaers2018tailoring,
  title={Tailoring non-Gaussian continuous-variable graph states},
  author={Walschaers, Mattia and Sarkar, Supratik and Parigi, Valentina and Treps, Nicolas},
  journal={Physical review letters},
  volume={121},
  number={22},
  pages={220501},
  year={2018},
  publisher={APS}
}

@article{menicucci2011graphical,
  title={Graphical calculus for Gaussian pure states},
  author={Menicucci, Nicolas C and Flammia, Steven T and van Loock, Peter},
  journal={Physical Review A—Atomic, Molecular, and Optical Physics},
  volume={83},
  number={4},
  pages={042335},
  year={2011},
  publisher={APS}
}

@article{ehrenberg2025transition,
  title={Transition of anticoncentration in Gaussian boson sampling},
  author={Ehrenberg, Adam and Iosue, Joseph T and Deshpande, Abhinav and Hangleiter, Dominik and Gorshkov, Alexey V},
  journal={Physical Review Letters},
  volume={134},
  number={14},
  pages={140601},
  year={2025},
  publisher={APS}
}

@article{zhang2006continuous,
  title={Continuous-variable Gaussian analog of cluster states},
  author={Zhang, Jing and Braunstein, Samuel L},
  journal={Physical Review A—Atomic, Molecular, and Optical Physics},
  volume={73},
  number={3},
  pages={032318},
  year={2006},
  publisher={APS}
}

@article{cable2010bipartite,
  title={Bipartite entanglement in continuous variable cluster states},
  author={Cable, Hugo and Browne, Daniel E},
  journal={New Journal of Physics},
  volume={12},
  number={11},
  pages={113046},
  year={2010},
  publisher={IOP Publishing}
}

@article{su2018correcting,
  title={Correcting finite squeezing errors in continuous-variable cluster states},
  author={Su, Daiqin and Weedbrook, Christian and Br{\'a}dler, Kamil},
  journal={arXiv preprint arXiv:1801.03488},
  year={2018}
}

@article{bouland2024average,
  title={On the average-case hardness of BosonSampling},
  author={Bouland, Adam and Datta, Ishaun and Fefferman, Bill and Hernandez, Felipe},
  journal={arXiv preprint arXiv:2411.04566},
  year={2024}
}

@article{weyl1912asymptotische,
  title={Das asymptotische Verteilungsgesetz der Eigenwerte linearer partieller Differentialgleichungen (mit einer Anwendung auf die Theorie der Hohlraumstrahlung)},
  author={Weyl, Hermann},
  journal={Mathematische Annalen},
  volume={71},
  number={4},
  pages={441--479},
  year={1912},
  publisher={Springer}
}

@article{aaronson2011linear,
  title={A linear-optical proof that the permanent is\# P-hard},
  author={Aaronson, Scott},
  journal={Proceedings of the Royal Society A: Mathematical, Physical and Engineering Sciences},
  volume={467},
  number={2136},
  pages={3393--3405},
  year={2011},
  publisher={The Royal Society Publishing}
}

@article{douce2017continuous,
  title={Continuous-variable instantaneous quantum computing is hard to sample},
  author={Douce, Tom and Markham, Damian and Kashefi, Elham and Diamanti, Eleni and Coudreau, Thomas and Milman, Perola and Van Loock, Peter and Ferrini, Giulia},
  journal={Physical review letters},
  volume={118},
  number={7},
  pages={070503},
  year={2017},
  publisher={APS}
}

@article{menicucci2014fault,
  title={Fault-tolerant measurement-based quantum computing with continuous-variable cluster states},
  author={Menicucci, Nicolas C},
  journal={Physical review letters},
  volume={112},
  number={12},
  pages={120504},
  year={2014},
  publisher={APS}
}

@inproceedings{aaronson2011computational,
  title={The computational complexity of linear optics},
  author={Aaronson, Scott and Arkhipov, Alex},
  booktitle={Proceedings of the forty-third annual ACM symposium on Theory of computing},
  pages={333--342},
  year={2011}
}

@article{deshpande2022quantum,
  title={Quantum computational advantage via high-dimensional Gaussian boson sampling},
  author={Deshpande, Abhinav and Mehta, Arthur and Vincent, Trevor and Quesada, Nicol{\'a}s and Hinsche, Marcel and Ioannou, Marios and Madsen, Lars and Lavoie, Jonathan and Qi, Haoyu and Eisert, Jens and others},
  journal={Science advances},
  volume={8},
  number={1},
  pages={eabi7894},
  year={2022},
  publisher={American Association for the Advancement of Science}
}

@article{hamilton2017gaussian,
  title={Gaussian boson sampling},
  author={Hamilton, Craig S and Kruse, Regina and Sansoni, Linda and Barkhofen, Sonja and Silberhorn, Christine and Jex, Igor},
  journal={Physical review letters},
  volume={119},
  number={17},
  pages={170501},
  year={2017},
  publisher={APS}
}

@article{madsen2022quantum,
  title={Quantum computational advantage with a programmable photonic processor},
  author={Madsen, Lars S and Laudenbach, Fabian and Askarani, Mohsen Falamarzi and Rortais, Fabien and Vincent, Trevor and Bulmer, Jacob FF and Miatto, Filippo M and Neuhaus, Leonhard and Helt, Lukas G and Collins, Matthew J and others},
  journal={Nature},
  volume={606},
  number={7912},
  pages={75--81},
  year={2022},
  publisher={Nature Publishing Group UK London}
}

@article{zhong2021phase,
  title={Phase-programmable {G}aussian boson sampling using stimulated squeezed light},
  author={Zhong, Han-Sen and Deng, Yu-Hao and Qin, Jian and Wang, Hui and Chen, Ming-Cheng and Peng, Li-Chao and Luo, Yi-Han and Wu, Dian and Gong, Si-Qiu and Su, Hao and others},
  journal={Physical review letters},
  volume={127},
  number={18},
  pages={180502},
  year={2021},
  publisher={APS}
}

@article{deng2023gaussian,
  title = {Gaussian Boson Sampling with Pseudo-Photon-Number-Resolving Detectors and Quantum Computational Advantage},
  author={Deng, Yu-Hao and Gu, Yi-Chao and Liu, Hua-Liang and Gong, Si-Qiu and Su, Hao and Zhang, Zhi-Jiong and Tang, Hao-Yang and Jia, Meng-Hao and Xu, Jia-Min and Chen, Ming-Cheng and others},
  journal={Physical review letters},
  volume = {131},
  number={15},
  pages = {150601},
  year = {2023},
  publisher = {APS}
}

@article{stockmeyer1985approximation,
  title={On approximation algorithms for\# P},
  author={Stockmeyer, Larry},
  journal={SIAM Journal on Computing},
  volume={14},
  number={4},
  pages={849--861},
  year={1985},
  publisher={SIAM}
}

@article{raussendorf2001one,
  title={A one-way quantum computer},
  author={Raussendorf, Robert and Briegel, Hans J},
  journal={Physical review letters},
  volume={86},
  number={22},
  pages={5188},
  year={2001},
  publisher={APS}
}

@article{oszmaniec2018classical,
  title={Classical simulation of photonic linear optics with lost particles},
  author={Oszmaniec, Micha{\l} and Brod, Daniel J},
  journal={New Journal of Physics},
  volume={20},
  number={9},
  pages={092002},
  year={2018},
  publisher={IOP Publishing}
}

@article{garcia2019simulating,
  title={Simulating boson sampling in lossy architectures},
  author={Garc{\'\i}a-Patr{\'o}n, Ra{\'u}l and Renema, Jelmer J and Shchesnovich, Valery},
  journal={Quantum},
  volume={3},
  pages={169},
  year={2019},
  publisher={Verein zur F{\"o}rderung des Open Access Publizierens in den Quantenwissenschaften}
}

@article{qi2020regimes,
  title={Regimes of classical simulability for noisy Gaussian boson sampling},
  author={Qi, Haoyu and Brod, Daniel J and Quesada, Nicol{\'a}s and Garc{\'\i}a-Patr{\'o}n, Ra{\'u}l},
  journal={Physical review letters},
  volume={124},
  number={10},
  pages={100502},
  year={2020},
  publisher={APS}
}

@article{toda1991pp,
  title={PP is as hard as the polynomial-time hierarchy},
  author={Toda, Seinosuke},
  journal={SIAM Journal on Computing},
  volume={20},
  number={5},
  pages={865--877},
  year={1991},
  publisher={SIAM}
}

@article{oh2022classical,
  title={Classical simulation of boson sampling based on graph structure},
  author={Oh, Changhun and Lim, Youngrong and Fefferman, Bill and Jiang, Liang},
  journal={Physical Review Letters},
  volume={128},
  number={19},
  pages={190501},
  year={2022},
  publisher={APS}
}

@article{gu2009quantum,
  title={Quantum computing with continuous-variable clusters},
  author={Gu, Mile and Weedbrook, Christian and Menicucci, Nicolas C and Ralph, Timothy C and van Loock, Peter},
  journal={Physical Review A—Atomic, Molecular, and Optical Physics},
  volume={79},
  number={6},
  pages={062318},
  year={2009},
  publisher={APS}
}

@article{pfister2019continuous,
  title={Continuous-variable quantum computing in the quantum optical frequency comb},
  author={Pfister, Olivier},
  journal={Journal of Physics B: Atomic, Molecular and Optical Physics},
  volume={53},
  number={1},
  pages={012001},
  year={2019},
  publisher={IOP Publishing}
}

@article{oh2025recent,
  title={Recent theoretical and experimental progress on boson sampling},
  author={Oh, Changhun},
  journal={Current Optics and Photonics},
  volume={9},
  number={1},
  pages={1--18},
  year={2025},
  publisher={Optical Society of Korea}
}

@article{alexander2016one,
  title={One-way quantum computing with arbitrarily large time-frequency continuous-variable cluster states from a single optical parametric oscillator},
  author={Alexander, Rafael N and Wang, Pei and Sridhar, Niranjan and Chen, Moran and Pfister, Olivier and Menicucci, Nicolas C},
  journal={Physical Review A},
  volume={94},
  number={3},
  pages={032327},
  year={2016},
  publisher={APS}
}

@article{oh2025classical,
  title={Classical simulability of constant-depth linear-optical circuits with noise},
  author={Oh, Changhun},
  journal={npj Quantum Information},
  volume={11},
  number={1},
  pages={126},
  year={2025},
  publisher={Nature Publishing Group UK London}
}

@article{wang2014weaving,
  title={Weaving quantum optical frequency combs into continuous-variable hypercubic cluster states},
  author={Wang, Pei and Chen, Moran and Menicucci, Nicolas C and Pfister, Olivier},
  journal={Physical Review A},
  volume={90},
  number={3},
  pages={032325},
  year={2014},
  publisher={APS}
}

@article{liu2016power,
  title={Power of one qumode for quantum computation},
  author={Liu, Nana and Thompson, Jayne and Weedbrook, Christian and Lloyd, Seth and Vedral, Vlatko and Gu, Mile and Modi, Kavan},
  journal={Physical Review A},
  volume={93},
  number={5},
  pages={052304},
  year={2016},
  publisher={APS}
}

@article{wang2010toward,
  title={Toward demonstrating controlled-X operation based on continuous-variable four-partite cluster states and quantum teleporters},
  author={Wang, Yu and Su, Xiaolong and Shen, Heng and Tan, Aihong and Xie, Changde and Peng, Kunchi},
  journal={Physical Review A—Atomic, Molecular, and Optical Physics},
  volume={81},
  number={2},
  pages={022311},
  year={2010},
  publisher={APS}
}

@article{menicucci2011temporal,
  title={Temporal-mode continuous-variable cluster states using linear optics},
  author={Menicucci, Nicolas C},
  journal={Physical Review A—Atomic, Molecular, and Optical Physics},
  volume={83},
  number={6},
  pages={062314},
  year={2011},
  publisher={APS}
}

@article{alexander2016flexible,
  title={Flexible quantum circuits using scalable continuous-variable cluster states},
  author={Alexander, Rafael N and Menicucci, Nicolas C},
  journal={Physical Review A},
  volume={93},
  number={6},
  pages={062326},
  year={2016},
  publisher={APS}
}

@article{dahlberg2018transforming,
  title={Transforming graph states using single-qubit operations},
  author={Dahlberg, Axel and Wehner, Stephanie},
  journal={Philosophical Transactions of the Royal Society A: Mathematical, Physical and Engineering Sciences},
  volume={376},
  number={2123},
  pages={20170325},
  year={2018},
  publisher={The Royal Society Publishing}
}

@article{alexander2018universal,
  title={Universal quantum computation with temporal-mode bilayer square lattices},
  author={Alexander, Rafael N and Yokoyama, Shota and Furusawa, Akira and Menicucci, Nicolas C},
  journal={Physical Review A},
  volume={97},
  number={3},
  pages={032302},
  year={2018},
  publisher={APS}
}

@article{larsen2020architecture,
  title={Architecture and noise analysis of continuous-variable quantum gates using two-dimensional cluster states},
  author={Larsen, Mikkel V and Neergaard-Nielsen, Jonas S and Andersen, Ulrik L},
  journal={Physical Review A},
  volume={102},
  number={4},
  pages={042608},
  year={2020},
  publisher={APS}
}

@article{yoshikawa2008demonstration,
  title={Demonstration of a quantum nondemolition sum gate},
  author={Yoshikawa, Jun-ichi and Miwa, Yoshichika and Huck, Alexander and Andersen, Ulrik L and van Loock, Peter and Furusawa, Akira},
  journal={Physical Review Letters},
  volume={101},
  number={25},
  pages={250501},
  year={2008},
  publisher={APS}
}

@article{su2013gate,
  title={Gate sequence for continuous variable one-way quantum computation},
  author={Su, Xiaolong and Hao, Shuhong and Deng, Xiaowei and Ma, Lingyu and Wang, Meihong and Jia, Xiaojun and Xie, Changde and Peng, Kunchi},
  journal={Nature communications},
  volume={4},
  number={1},
  pages={2828},
  year={2013},
  publisher={Nature Publishing Group UK London}
}

@article{miwa2009demonstration,
  title={Demonstration of a universal one-way quantum quadratic phase gate},
  author={Miwa, Yoshichika and Yoshikawa, Jun-ichi and van Loock, Peter and Furusawa, Akira},
  journal={Physical Review A—Atomic, Molecular, and Optical Physics},
  volume={80},
  number={5},
  pages={050303},
  year={2009},
  publisher={APS}
}

@article{du2025complete,
  title={A complete continuous-variable quantum computation architecture based on the 2D spatiotemporal cluster state},
  author={Du, Peilin and Zhang, Jing and Zhang, Tiancai and Yang, Rongguo and Gao, Jiangrui},
  journal={Scientific Reports},
  volume={15},
  number={1},
  pages={18199},
  year={2025},
  publisher={Nature Publishing Group UK London}
}

@article{walshe2021streamlined,
  title={Streamlined quantum computing with macronode cluster states},
  author={Walshe, Blayney W and Alexander, Rafael N and Menicucci, Nicolas C and Baragiola, Ben Q},
  journal={Physical Review A},
  volume={104},
  number={6},
  pages={062427},
  year={2021},
  publisher={APS}
}

@article{ferrini2015optimization,
  title={Optimization of networks for measurement-based quantum computation},
  author={Ferrini, Giulia and Roslund, Jonathan and Arzani, Francesco and Cai, Yin and Fabre, Claude and Treps, Nicolas},
  journal={Physical Review A},
  volume={91},
  number={3},
  pages={032314},
  year={2015},
  publisher={APS}
}

@article{glauber1963photon,
  title={Photon correlations},
  author={Glauber, Roy J},
  journal={Physical Review Letters},
  volume={10},
  number={3},
  pages={84},
  year={1963},
  publisher={APS}
}

@article{husimi1940some,
  title={Some formal properties of the density matrix},
  author={Husimi, K{\^o}di},
  journal={Proceedings of the Physico-Mathematical Society of Japan. 3rd Series},
  volume={22},
  number={4},
  pages={264--314},
  year={1940},
  publisher={The Physical Society of Japan, The Mathematical Society of Japan}
}

@article{sudarshan1963equivalence,
  title={Equivalence of semiclassical and quantum mechanical descriptions of statistical light beams},
  author={Sudarshan,  E. C. G.},
  journal={Physical Review Letters},
  volume={10},
  number={7},
  pages={277},
  year={1963},
  publisher={APS}
}

@article{clements2016optimal,
  title={Optimal design for universal multiport interferometers},
  author={Clements, William R and Humphreys, Peter C and Metcalf, Benjamin J and Kolthammer, W Steven and Walmsley, Ian A},
  journal={Optica},
  volume={3},
  number={12},
  pages={1460--1465},
  year={2016},
  publisher={Optica Publishing Group}
}

@article{weedbrook2012gaussian,
  title={Gaussian quantum information},
  author={Weedbrook, Christian and Pirandola, Stefano and Garc{\'\i}a-Patr{\'o}n, Ra{\'u}l and Cerf, Nicolas J and Ralph, Timothy C and Shapiro, Jeffrey H and Lloyd, Seth},
  journal={Reviews of Modern Physics},
  volume={84},
  number={2},
  pages={621--669},
  year={2012},
  publisher={APS}
}

@article{ghosh2023complexity,
  title={Complexity phase transitions generated by entanglement},
  author={Ghosh, Soumik and Deshpande, Abhinav and Hangleiter, Dominik and Gorshkov, Alexey V and Fefferman, Bill},
  journal={Physical Review Letters},
  volume={131},
  number={3},
  pages={030601},
  year={2023},
  publisher={APS}
}

@article{marian2012uhlmann,
  title={Uhlmann fidelity between two-mode Gaussian states},
  author={Marian, Paulina and Marian, Tudor A},
  journal={Physical Review A—Atomic, Molecular, and Optical Physics},
  volume={86},
  number={2},
  pages={022340},
  year={2012},
  publisher={APS}
}

@article{menicucci2006universal,
  title={Universal quantum computation with continuous-variable cluster states},
  author={Menicucci, Nicolas C and Van Loock, Peter and Gu, Mile and Weedbrook, Christian and Ralph, Timothy C and Nielsen, Michael A},
  journal={Physical review letters},
  volume={97},
  number={11},
  pages={110501},
  year={2006},
  publisher={APS}
}

@article{mezher2018efficient,
  title={Efficient quantum pseudorandomness with simple graph states},
  author={Mezher, Rawad and Ghalbouni, Joe and Dgheim, Joseph and Markham, Damian},
  journal={Physical Review A},
  volume={97},
  number={2},
  pages={022333},
  year={2018},
  publisher={APS}
}

@article{efimov2021hafnian,
  title={Hafnian of two-parameter matrices},
  author={Efimov, Dmitry},
  journal={arXiv preprint arXiv:2101.09722},
  year={2021}
}

@article{rahimi2016sufficient,
  title={Sufficient conditions for efficient classical simulation of quantum optics},
  author={Rahimi-Keshari, Saleh and Ralph, Timothy C and Caves, Carlton M},
  journal={Physical Review X},
  volume={6},
  number={2},
  pages={021039},
  year={2016},
  publisher={APS}
}

@inproceedings{engelfriet1996concatenation,
  title={Concatenation of graphs},
  author={Engelfriet, Joost and Vereijken, Jan Joris},
  booktitle={Graph Grammars and Their Application to Computer Science: 5th International Workshop Williamsburg, VA, USA, November 13--18, 1994 Selected Papers 5},
  pages={368--382},
  year={1996},
  organization={Springer}
}

@article{kruse2019detailed,
  title={Detailed study of Gaussian boson sampling},
  author={Kruse, Regina and Hamilton, Craig S and Sansoni, Linda and Barkhofen, Sonja and Silberhorn, Christine and Jex, Igor},
  journal={Physical Review A},
  volume={100},
  number={3},
  pages={032326},
  year={2019},
  publisher={APS}
}

@article{ukai2011demonstration,
  title={Demonstration of unconditional one-way quantum computations for continuous variables},
  author={Ukai, Ryuji and Iwata, Noriaki and Shimokawa, Yuji and Armstrong, Seiji C and Politi, Alberto and Yoshikawa, Jun-ichi and Van Loock, Peter and Furusawa, Akira},
  journal={Physical review letters},
  volume={106},
  number={24},
  pages={240504},
  year={2011},
  publisher={APS}
}

@article{ukai2011demonstrationCP,
  title={Demonstration of a Controlled-Phase Gate for Continuous-Variable One-Way Quantum Computation},
  author={Ukai, Ryuji and Yokoyama, Shota and Yoshikawa, Jun-ichi and van Loock, Peter and Furusawa, Akira},
  journal={Physical review letters},
  volume={107},
  number={25},
  pages={250501},
  year={2011},
  publisher={APS}
}

@article{reck1994experimental,
  title={Experimental realization of any discrete unitary operator},
  author={Reck, Michael and Zeilinger, Anton and Bernstein, Herbert J and Bertani, Philip},
  journal={Physical review letters},
  volume={73},
  number={1},
  pages={58},
  year={1994},
  publisher={APS}
}

@article{ukai2010universal,
  title={Universal linear Bogoliubov transformations through one-way quantum computation},
  author={Ukai, Ryuji and Yoshikawa, Jun-ichi and Iwata, Noriaki and van Loock, Peter and Furusawa, Akira},
  journal={Physical Review A—Atomic, Molecular, and Optical Physics},
  volume={81},
  number={3},
  pages={032315},
  year={2010},
  publisher={APS}
}

@article{pysher2011parallel,
  title={Parallel generation of quadripartite cluster entanglement in the optical frequency comb},
  author={Pysher, Matthew and Miwa, Yoshichika and Shahrokhshahi, Reihaneh and Bloomer, Russell and Pfister, Olivier},
  journal={Physical review letters},
  volume={107},
  number={3},
  pages={030505},
  year={2011},
  publisher={APS}
}

@article{yokoyama2013ultra,
  title={Ultra-large-scale continuous-variable cluster states multiplexed in the time domain},
  author={Yokoyama, Shota and Ukai, Ryuji and Armstrong, Seiji C and Sornphiphatphong, Chanond and Kaji, Toshiyuki and Suzuki, Shigenari and Yoshikawa, Jun-ichi and Yonezawa, Hidehiro and Menicucci, Nicolas C and Furusawa, Akira},
  journal={Nature Photonics},
  volume={7},
  number={12},
  pages={982--986},
  year={2013},
  publisher={Nature Publishing Group UK London}
}

@article{chen2014experimental,
  title={Experimental realization of multipartite entanglement of 60 modes of a quantum optical frequency comb},
  author={Chen, Moran and Menicucci, Nicolas C and Pfister, Olivier},
  journal={Physical review letters},
  volume={112},
  number={12},
  pages={120505},
  year={2014},
  publisher={APS}
}

@article{yoshikawa2016invited,
  title={Invited article: Generation of one-million-mode continuous-variable cluster state by unlimited time-domain multiplexing},
  author={Yoshikawa, Jun-ichi and Yokoyama, Shota and Kaji, Toshiyuki and Sornphiphatphong, Chanond and Shiozawa, Yu and Makino, Kenzo and Furusawa, Akira},
  journal={APL photonics},
  volume={1},
  number={6},
  year={2016},
  publisher={AIP Publishing}
}

@article{larsen2019deterministic,
  title={Deterministic generation of a two-dimensional cluster state},
  author={Larsen, Mikkel V and Guo, Xueshi and Breum, Casper R and Neergaard-Nielsen, Jonas S and Andersen, Ulrik L},
  journal={Science},
  volume={366},
  number={6463},
  pages={369--372},
  year={2019},
  publisher={American Association for the Advancement of Science}
}

@article{asavanant2019generation,
  title={Generation of time-domain-multiplexed two-dimensional cluster state},
  author={Asavanant, Warit and Shiozawa, Yu and Yokoyama, Shota and Charoensombutamon, Baramee and Emura, Hiroki and Alexander, Rafael N and Takeda, Shuntaro and Yoshikawa, Jun-ichi and Menicucci, Nicolas C and Yonezawa, Hidehiro and others},
  journal={Science},
  volume={366},
  number={6463},
  pages={373--376},
  year={2019},
  publisher={American Association for the Advancement of Science}
}

@article{larsen2021deterministic,
  title={Deterministic multi-mode gates on a scalable photonic quantum computing platform},
  author={Larsen, Mikkel V and Guo, Xueshi and Breum, Casper R and Neergaard-Nielsen, Jonas S and Andersen, Ulrik L},
  journal={Nature Physics},
  volume={17},
  number={9},
  pages={1018--1023},
  year={2021},
  publisher={Nature Publishing Group UK London}
}

@article{asavanant2021time,
  title={Time-domain-multiplexed measurement-based quantum operations with 25-MHz clock frequency},
  author={Asavanant, Warit and Charoensombutamon, Baramee and Yokoyama, Shota and Ebihara, Takeru and Nakamura, Tomohiro and Alexander, Rafael N and Endo, Mamoru and Yoshikawa, Jun-ichi and Menicucci, Nicolas C and Yonezawa, Hidehiro and others},
  journal={Physical Review Applied},
  volume={16},
  number={3},
  pages={034005},
  year={2021},
  publisher={APS}
}

@article{du2023generation,
  title={Generation of large-scale continuous-variable cluster states multiplexed both in time and frequency domains},
  author={Du, Peilin and Wang, Yu and Liu, Kui and Yang, Rongguo and Zhang, Jing},
  journal={Optics Express},
  volume={31},
  number={5},
  pages={7535--7544},
  year={2023},
  publisher={Optica Publishing Group}
}

@article{wang2024chip,
  title={Chip-scale generation of 60-mode continuous-variable cluster states},
  author={Wang, Ze and Li, Kangkang and Wang, Yue and Zhou, Xin and Cheng, Yinke and Jing, Boxuan and Sun, Fengxiao and Li, Jincheng and Li, Zhilin and Gong, Qihuang and others},
  journal={arXiv e-prints},
  pages={arXiv--2406},
  year={2024}
}

@article{roh2025generation,
  title={Generation of three-dimensional cluster entangled state},
  author={Roh, Chan and Gwak, Geunhee and Yoon, Young-Do and Ra, Young-Sik},
  journal={Nature Photonics},
  pages={1--7},
  year={2025},
  publisher={Nature Publishing Group UK London}
}

@article{jia2025continuous,
  title={Continuous-variable multipartite entanglement in an integrated microcomb},
  author={Jia, Xinyu and Zhai, Chonghao and Zhu, Xuezhi and You, Chang and Cao, Yunyun and Zhang, Xuguang and Zheng, Yun and Fu, Zhaorong and Mao, Jun and Dai, Tianxiang and others},
  journal={Nature},
  pages={1--8},
  year={2025},
  publisher={Nature Publishing Group UK London}
}

@article{lingua2025continuous,
  title={Continuous-variable square-ladder cluster states in a microwave frequency comb},
  author={Lingua, Fabio and Rivera Hern{\'a}ndez, Juan Carlos and Cortinovis, Michele and Haviland, David B},
  journal={Physical Review Letters},
  volume={134},
  number={18},
  pages={183602},
  year={2025},
  publisher={APS}
}

@article{cai2017multimode,
  title={Multimode entanglement in reconfigurable graph states using optical frequency combs},
  author={Cai, Yin and Roslund, Jonathan and Ferrini, Giulia and Arzani, Francesco and Xu, X and Fabre, Claude and Treps, Nicolas},
  journal={Nature communications},
  volume={8},
  number={1},
  pages={15645},
  year={2017},
  publisher={Nature Publishing Group UK London}
}

@article{fukui2018high,
  title={High-threshold fault-tolerant quantum computation with analog quantum error correction},
  author={Fukui, Kosuke and Tomita, Akihisa and Okamoto, Atsushi and Fujii, Keisuke},
  journal={Physical review X},
  volume={8},
  number={2},
  pages={021054},
  year={2018},
  publisher={APS}
}

@article{wu2020quantum,
  title={Quantum computing with multidimensional continuous-variable cluster states in a scalable photonic platform},
  author={Wu, Bo-Han and Alexander, Rafael N and Liu, Shuai and Zhang, Zheshen},
  journal={Physical Review Research},
  volume={2},
  number={2},
  pages={023138},
  year={2020},
  publisher={APS}
}

@article{spedalieri2012limit,
  title={A limit formula for the quantum fidelity},
  author={Spedalieri, Gaetana and Weedbrook, Christian and Pirandola, Stefano},
  journal={Journal of Physics A: Mathematical and Theoretical},
  volume={46},
  number={2},
  pages={025304},
  year={2012},
  publisher={IOP Publishing}
}

@article{go2025sufficient,
  title={Sufficient conditions for hardness of lossy Gaussian boson sampling},
  author={Go, Byeongseon and Oh, Changhun and Jeong, Hyunseok},
  journal={arXiv preprint arXiv:2511.07853},
  year={2025}
}

@article{alexander2017measurement,
  title={Measurement-based linear optics},
  author={Alexander, Rafael N and Gabay, Natasha C and Rohde, Peter P and Menicucci, Nicolas C},
  journal={Physical Review Letters},
  volume={118},
  number={11},
  pages={110503},
  year={2017},
  publisher={APS}
}

@article{briegel2001persistent,
  title={Persistent entanglement in arrays of interacting particles},
  author={Briegel, Hans J and Raussendorf, Robert},
  journal={Physical Review Letters},
  volume={86},
  number={5},
  pages={910},
  year={2001},
  publisher={APS}
}

@article{fuchs1999cryptographic,
  title={Cryptographic distinguishability measures for quantum-mechanical states},
  author={Fuchs, Christopher A and Van De Graaf, Jeroen},
  journal={IEEE Transactions on Information Theory},
  volume={45},
  number={4},
  pages={1216--1227},
  year={1999},
  publisher={IEEE}
}

@article{ehrenberg2025second,
  title={Second moment of Hafnians in Gaussian boson sampling},
  author={Ehrenberg, Adam and Iosue, Joseph T and Deshpande, Abhinav and Hangleiter, Dominik and Gorshkov, Alexey V},
  journal={Physical Review A},
  volume={111},
  number={4},
  pages={042412},
  year={2025},
  publisher={APS}
}

\let\addcontentsline\oldaddcontentsline

\end{document}


\definecolor{red}{RGB}{255,0,0}
\preprint{APS/123-QED}

\onecolumngrid
\clearpage
\widetext	

\begin{center}
\textbf{\large Supplemental Material for ``Complexity phase transition for continuous-variable cluster state''}
\end{center}

\setcounter{page}{1}
\renewcommand{\thesection}{S\arabic{section}}
\renewcommand{\theequation}{S\arabic{equation}}
\renewcommand{\thefigure}{S\arabic{figure}}
\renewcommand{\thetable}{S\arabic{table}}

\addtocontents{toc}{\protect\setcounter{tocdepth}{0}}

\tableofcontents


\clearpage

\section{Conventions for graph}\label{appendix: section: graph convention}

We review the conventions for graph operations that are introduced in~\cite{engelfriet1996concatenation}.
For a graph $G = (V,E)$ with vertices $V$ and edges $E$, one can define the set of \textit{begin} nodes $\text{begin}(G)$ and \textit{end} nodes $\text{end}(G)$ as the subset of $V$ with implicitly designated sequence. These subsequently correspond to the input and output modes (with designated sequence) of the cluster state.

Using these conventions for the begin and end nodes of a graph, we can define the graph operations of concatenation and sum.
For given graphs $G$ and $G'$ with $|\text{begin}(G')| = |\text{end}(G)| = k$, we define the \textit{concatenation} $G \circ G'$ as the graph obtained by first taking the disjoint union of $G$ and $G'$, and then identifying the $i$-th end node of $G$ with the $i$-th begin node of $G'$ for every
$i \in [k]$, such that $\text{begin}(G \circ G') =  \text{begin}(G)$ and $\text{end}(G \circ G') = \text{end}(G')$.
Also, for given graphs $G$ and $G'$, we define the \textit{sum} $G \oplus G'$ as their disjoint union, such that $\text{begin}(G \oplus G') =  \text{begin}(G)\cdot \text{begin}(G')$ and $\text{end}(G \oplus G') = \text{end}(G)\cdot\text{end}(G')$, where now $\cdot$ denotes the concatenation of sequences.

With these definitions established, we construct the fundamental graphs utilized throughout this work. 
First, we define $G_{H}^{(l)}$ as a (horizontal) one-dimensional chain graph with $l$ nodes, where $\text{begin}(G_{H}^{(l)})$ and $\text{end}(G_{H}^{(l)})$ corresponds to each of the end nodes of the one-dimensional chain graph, respectively (such that $|\text{begin}(G_{H}^{(l)})| = |\text{end}(G_{H}^{(l)})| = 1$). 
Similarly, we define $G_{V}^{(l)}$ as a (vertical) one-dimensional chain graph with $l \geq 2$ nodes, where now both $\text{begin}(G_{V}^{(l)})$ and $\text{end}(G_{V}^{(l)})$ are identified with the two end vertices of the chain (such that, $\text{begin}(G_{H}^{(l)}) = \text{end}(G_{H}^{(l)})$ with $|\text{begin}(G_{H}^{(l)})| = |\text{end}(G_{H}^{(l)})| = 2$). 
Note that $G_{H}^{(l)}$ and $G_{V}^{(l)}$ correspond to the identical graph (i.e., one-dimensional chain with $l$ number of vertices), but their begin and end nodes are differently assigned.


To proceed, we introduce graphs that will be used throughout this work.
We define the specific ``$\sqsupset$-shaped" graph $G_{\sqsupset}$ structured as 
\begin{align}\label{eq: def of G_subseteq}
    G_{\sqsupset} = (G_{H}^{(5)} \oplus G_{H}^{(5)})\circ G_{V}^{(3)}, 
\end{align} 
composed of total $11$ vertices with $|\text{begin}(G_{\sqsupset})| = |\text{end}(G_{\sqsupset})| = 2$ (see Fig.~\ref{fig: graph G_subseteq} for the illustration).
Using $G_{\sqsupset}$ we define the \textit{brick} graph $G_{T}$ as three-concatenations of $G_{\sqsupset}$, such that 
\begin{align}\label{eq: def of G_T}
    G_{T} = G_{\sqsupset} \circ G_{\sqsupset} \circ G_{\sqsupset},
\end{align}
composed of total $29$ vertices with $|\text{begin}(G_{T})| = |\text{end}(G_{T})| = 2$ (see Fig.~\ref{fig: brickwork graph G_T} for the illustration).
Using brick graph $G_{T}$ we define the \textit{unit-depth} graph $G_{D}^{(M)}$ as a parallel composition of $G_{T}$, given by 
\begin{align}\label{eq: def of G_D^M}
    G_{D}^{(M)} = ( \underbrace{G_{T} \oplus G_{T} \oplus \cdots \oplus G_{T}}_{\frac{M}{2} \; \text{times}} )\circ (G_{H}^{(13)} \oplus \underbrace{G_{T}\oplus G_{T}\oplus \cdots \oplus G_{T}}_{\frac{M}{2} - 1 \; \text{times}} \oplus \, G_{H}^{(13)}) ,
\end{align}
where we set $M$ as even for simplicity.
Here, $|\text{begin}(G_{D}^{(M)})| = |\text{end}(G_{D}^{(M)})| = M$, such that the total number of vertices of $G_{D}^{(M)}$ is $\Theta(M)$. 
Finally, we define the \textit{$k$-depth} graph $G_{D}^{(M,k)}$ as the concatenation of $k$ copies of $G_{D}^{(M)}$ such that 
\begin{align}\label{eq: def of G_D^Mk}
    G_{D}^{(M,k)} = \underbrace{G_{D}^{(M)} \circ G_{D}^{(M)}\circ \cdots \circ G_{D}^{(M)}}_{k\;\text{times}},
\end{align}
which contains $\Omega(Mk)$ vertices.
As it turns out, the cluster state $\ket{G_{D}^{(M)}}$ corresponding to the $k$-depth graph $G_{D}^{(M)}$ for sufficiently large $k$ serves as a resource for a universal $M$-mode LO circuit implementation via a measurement-based computation scheme.

\section{Introduction to measurement-based quantum computation with continuous-variable cluster state}\label{appendix: section: introduction to cluster state}

This section reviews fundamental teleportation protocols on continuous-variable (CV) cluster states, which serve as the foundational building blocks for our measurement-based computation framework.

\subsection{Elementary teleportation scheme using infinitely-squeezed ancilla}

We begin by reviewing the elementary measurement-based teleportation scheme in the ideal setting introduced in~\cite{menicucci2006universal, gu2009quantum}, consisting of a zero-momentum eigenstate $\ket{0}_{p}$, a $CZ$ gate, and homodyne measurement in the $\hat{p}$ basis, as depicted below.

\begin{figure}[h]
\includegraphics[width=0.27\linewidth]{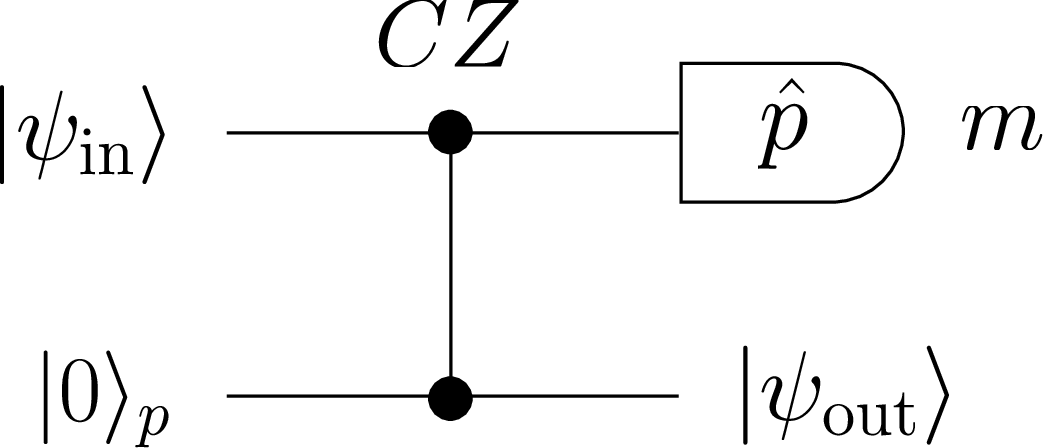}
\caption{Elementary teleportation scheme for infinitely-squeezed ancilla}
\end{figure}
Specifically, given an input state $\ket{\psi_{\rm{in}}}$ combined with an ancillary zero-momentum eigenbasis $\ket{0}_{p}$ as $\ket{\phi}\otimes\ket{0}_{p} = \int ds f(s) \ket{s}_{q}\ket{0}_{p}$ for $f(s) = \bra{s}_{q}\ket{\phi}$, the state after $CZ = e^{i\hat{q}_1\hat{q}_2}$ operation is 
\begin{align}
    CZ (\ket{\phi}\otimes\ket{0}_{p}) &= \int ds f(s) e^{is\hat{q}_2}\ket{s}_{q}\ket{0}_{p} = \int ds f(s)\ket{s}_{q}\ket{s}_{p}.
\end{align}
Conditioning on a $\hat{p}$-basis measurement of the first mode with outcome $m$, the remaining state is projected onto (up to an overall normalization factor)
\begin{align}
    \ket{\psi_{\rm{out}}} = \int ds f(s)\bra{m}_{p}\ket{s}_{q}\ket{s}_{p} = \int ds f(s)e^{-ism}\ket{s}_{p} 
    = e^{-im\hat{p}}\int ds f(s)\ket{s}_{p} 
    = X(m)F\ket{\psi_{\rm{in}}}, \label{eq: elementary teleportation}
\end{align}
where $X(m) = e^{-im\hat{p}}$ denotes a displacement along $\hat{q}$-axis by $m$, and $F = e^{i\pi(\hat{q}^2 + \hat{p}^2)/4}$ the Fourier gate.
Hence, in the ideal case, the input state in the first mode is perfectly teleported to the second mode, up to a phase rotation by  $\frac{\pi}{2}$ and displacement by $m$.

\subsection{Elementary teleportation scheme using finitely-squeezed ancilla}

Because the ideal zero-momentum eigenstate $\ket{0}_{p}$ requires infinite energy and is physically unrealizable, we follow~\cite{menicucci2006universal, gu2009quantum} by employing a finitely-squeezed vacuum state $\ket{r}$ with squeezing level $r$, whose variance along $\hat{p}$-axis is given $\frac{e^{-2r}}{2}$, as a physically realizable approximation to $\ket{0}_{p}$, as illustrated below.

\begin{figure}[h]
\centering
\includegraphics[width=0.27\linewidth]{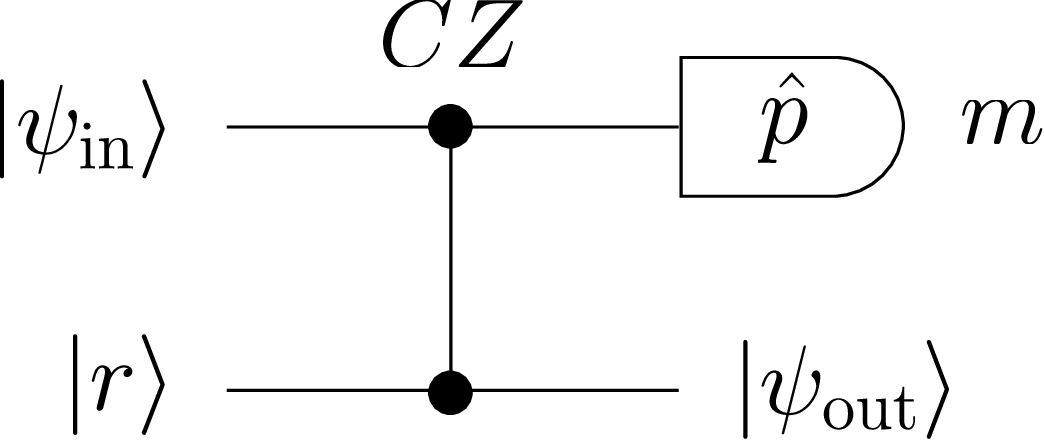}
\caption{Elementary teleportation scheme for finitely-squeezed ancilla}
\label{fig: elementary finite}
\end{figure}

However, teleportation based on $\ket{r}$ is intrinsically imperfect and introduces noise~\cite{menicucci2006universal, gu2009quantum, alexander2014noise, menicucci2014fault}.
To quantify this noise induced by finite squeezing, we follow the quadrature-transformation description of teleportation with a finitely squeezed ancilla established in~\cite{larsen2020architecture, larsen2021deterministic, larsen2021fault}.

Let $(\hat{q}_1,\hat{q}_2,\hat{p}_1,\hat{p}_2)^T$ and $(\hat{q}'_1,\hat{q}'_2,\hat{p}'_1,\hat{p}'_2)^T$ denote the quadrature operators before and after the $CZ = e^{i\hat{q}_1\hat{q}_2}$ gate in Fig.~\ref{fig: elementary finite}, respectively. 
Here, the first mode corresponds to the input state, while the second mode corresponds to the ancilla state, which becomes the output mode after teleportation.
In the Heisenberg picture, they are related by
\begin{align}
    \begin{pmatrix}
        \hat{q}'_1 \\ \hat{q}'_2 \\ \hat{p}'_1 \\ \hat{p}'_2 
    \end{pmatrix} = \begin{pmatrix}
        1 & 0 & 0 & 0 \\ 0 & 1 & 0 & 0 \\ 0 & 1 & 1 & 0 \\ 1 & 0 & 0 & 1
    \end{pmatrix} \begin{pmatrix}
        \hat{q}_1 \\ \hat{q}_2 \\ \hat{p}_1 \\ \hat{p}_2 
    \end{pmatrix}.
\end{align}
Accordingly,  
\begin{align}
    \begin{pmatrix}
        \hat{q}'_2 \\ \hat{p}'_2
    \end{pmatrix} = \begin{pmatrix}
        \hat{q}_2 \\ \hat{q}_1 + \hat{p}_2
    \end{pmatrix} = \begin{pmatrix}
        \hat{p}'_1 - \hat{p}_1 \\ \hat{q}_1 + \hat{p}_2
    \end{pmatrix} = \begin{pmatrix}
        0 & -1 \\ 1 & 0
    \end{pmatrix} \begin{pmatrix}
        \hat{q}_1 \\ \hat{p}_1
    \end{pmatrix} + \begin{pmatrix}
        0 \\ 1
    \end{pmatrix}\hat{p}_2 + \begin{pmatrix}
        1 \\ 0
    \end{pmatrix} \hat{p}'_1   .  \label{eq: input output relation for elementary teleport}
\end{align}
Hence, conditioned on obtaining outcome $m$ from a $\hat{p}$-basis measurement on the first mode, the corresponding input-output relation is
\begin{align}
    \begin{pmatrix}
        \hat{q}'_2 \\ \hat{p}'_2
    \end{pmatrix} = \begin{pmatrix}
        0 & -1 \\ 1 & 0
    \end{pmatrix} \begin{pmatrix}
        \hat{q}_1 \\ \hat{p}_1
    \end{pmatrix} + \begin{pmatrix}
        0 \\ 1
    \end{pmatrix}\hat{p}_2 + \begin{pmatrix}
        1 \\ 0
    \end{pmatrix} m = \textbf{G}\begin{pmatrix}
        \hat{q}_1 \\ \hat{p}_1
    \end{pmatrix} + \textbf{N}\hat{p}_2 + \textbf{D} m   .  \label{eq: input output relation for elementary teleportation}
\end{align}
Here, the matrix $\bf{G}$ in the first term represents the resulting gate operation after teleportation, corresponding to the Fourier gate $F$ in Eq.~\eqref{eq: elementary teleportation}. 
The matrix $\bf{D}$ in the last term represents the displacement that occurred during the teleportation scheme, corresponding to the displacement $X(m)$ in Eq.~\eqref{eq: elementary teleportation}. 
The matrix $\bf{N}$ in the second term represents the noise added to the quadrature due to finite squeezing.
Specifically, $\hat{p}_2$ denotes the $\hat{p}$ quadrature of the squeezed-vacuum ancilla $\ket{r}$ that has zero mean and variance $\langle \hat{p}_2^2\rangle = \frac{e^{-2r}}{2}$, which, after multiplication by $\bf{N}$, induces Gaussian-type noise. 
Indeed, this noise vanishes in the infinite squeezing limit $r\rightarrow \infty$, in which case the input-output relation in Eq.~\eqref{eq: input output relation for elementary teleportation} converges to the ideal teleportation scheme in Eq.~\eqref{eq: elementary teleportation} (i.e., ${\bf{D}} m $ and $\bf{G}$ in Eq.~\eqref{eq: input output relation for elementary teleportation} correspond to the displacement $X(m)$ and the Fourier gate $F$ in Eq.~\eqref{eq: elementary teleportation}, respectively).

Now, in the same setting, consider inserting an additional phase shift before the $\hat{p}$-basis measurement, as illustrated below. 


\begin{figure}[h]
\centering
\includegraphics[width=0.67\linewidth]{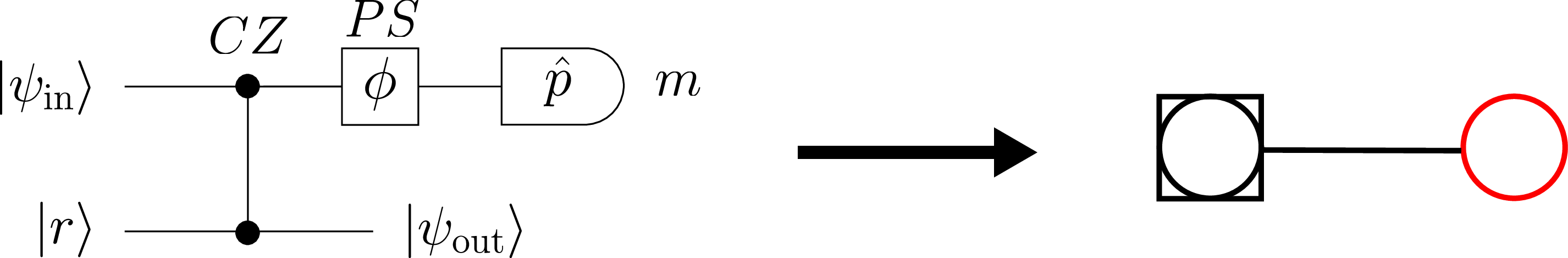}
\caption{Teleportation scheme with phase shift and corresponding graphical notation $G_{H}^{(2)}$. In the graphical notation, the edge of the graph represents the applied $CZ$ gate, and the vertices of the graph represent the modes. Also, the square symbol is to represent the input of the graph corresponding to the input mode (i.e., begin of the graph $G_{H}^{(2)}$), and the red vertex is to represent the output of the graph corresponding to the output mode (i.e., end of the graph $G_{H}^{(2)}$). }
\label{fig: cluster_2mode_PS}
\end{figure}

Let $\phi$ denote the corresponding phase-shift angle.
The quadrature operators before and after the sequential application of the $CZ$ gate and the phase shift are then related by
\begin{align}
    \begin{pmatrix}
        \hat{q}'_1 \\ \hat{q}'_2 \\ \hat{p}'_1 \\ \hat{p}'_2 
    \end{pmatrix} = \begin{pmatrix}
        \cos\phi & -\sin\phi & -\sin\phi & 0 \\ 0 & 1 & 0 & 0 \\ \sin\phi & \cos\phi & \cos\phi & 0 \\ 1 & 0 & 0 & 1
    \end{pmatrix} \begin{pmatrix}
        \hat{q}_1 \\ \hat{q}_2 \\ \hat{p}_1 \\ \hat{p}_2 
    \end{pmatrix}.
\end{align}
Upon measuring the first mode in the $\hat{p}$-basis with outcome $m$, and following the same analysis as in Eq.~\eqref{eq: input output relation for elementary teleport}, we arrive at the input-output relation
\begin{align}
    \begin{pmatrix}
        \hat{q}'_2 \\ \hat{p}'_2
    \end{pmatrix} = \begin{pmatrix}
        -\tan\phi & -1 \\ 1 & 0
    \end{pmatrix} \begin{pmatrix}
        \hat{q}_1 \\ \hat{p}_1
    \end{pmatrix} + \begin{pmatrix}
        0 \\ 1
    \end{pmatrix}\hat{p}_2 + \frac{1}{\cos\phi}\begin{pmatrix}
        1 \\ 0
    \end{pmatrix} m = \textbf{G}\begin{pmatrix}
        \hat{q}_1 \\ \hat{p}_1
    \end{pmatrix} + \textbf{N} \hat{p}_2 + \textbf{D}  m    . \label{eq: input output relation for 2 mode cluster}
\end{align}
Although a single teleportation step realizes a restricted class of operations ($\bf{G}$ in Eq.~\eqref{eq: input output relation for 2 mode cluster}), sequential applications of this primitive suffice to synthesize arbitrary single-mode Gaussian operations~\cite{ukai2010universal}.

It is worth emphasizing that $\tan\phi$ (and likewise $1/\cos\phi$) diverges as $\phi\rightarrow\pm\frac {\pi}{2}$, which can cause both $\bf{G}$ and $\bf{N}$ to diverge in principle. 
However, this limit corresponds to a $\hat{q}$-basis measurement, which removes the measured mode that carries the input state~\cite{larsen2020architecture}. 
Since our main focus is on teleportation-based gate (or circuit) implementation on the input state, this limiting case is not relevant to our results. 



We can also use graphical notation for this measurement-based teleportation scheme, illustrated in Fig.~\ref{fig: cluster_2mode_PS}, as previously used in~\cite{mezher2018efficient}, but with slightly different conventions. 
Specifically, the edges represent the $CZ$ gate, and the vertices represent the modes.
Also, the square symbol is to represent the input of the graph corresponding to the input mode (i.e., $\text{begin}(G)$ of the graph $G$), while the red vertex is to represent the output of the graph corresponding to the output mode (i.e., $\text{end}(G)$ of the graph $G$).

\subsection{Input-output relation for one-dimensional cluster states}\label{appendix: section: example of input output relations}

We now extend the preceding arguments to circuits with \textit{sequential} application of $CZ$ gates, following the framework of~\cite{ukai2010universal} while generalizing their analysis to the finite-squeezing regime.
Note that this setting is naturally described by a one-dimensional CV cluster state $\ket{G}$ associated with a chain graph $G$, where the input state is identified with one or two squeezed-vacuum states used in the cluster-state preparation, depending on whether $G$ corresponds to a vertical or horizontal graph.
In the following, we present several examples of one-dimensional CV cluster states with different choices of input and output modes, and their corresponding input-output relations.

We first consider a three-mode \textit{horizontal} chain with two sequential $CZ$ gate implementations as depicted in Fig.~\ref{fig: horizontal 3 mode cluster state}.
This corresponds to a cluster state $\ket{G_{H}^{(3)}}$ for a graph $G_{H}^{(3)}$ when the input is given by a single-mode squeezed vacuum. 

\begin{figure}[h]
\centering
\includegraphics[width=0.7\linewidth]{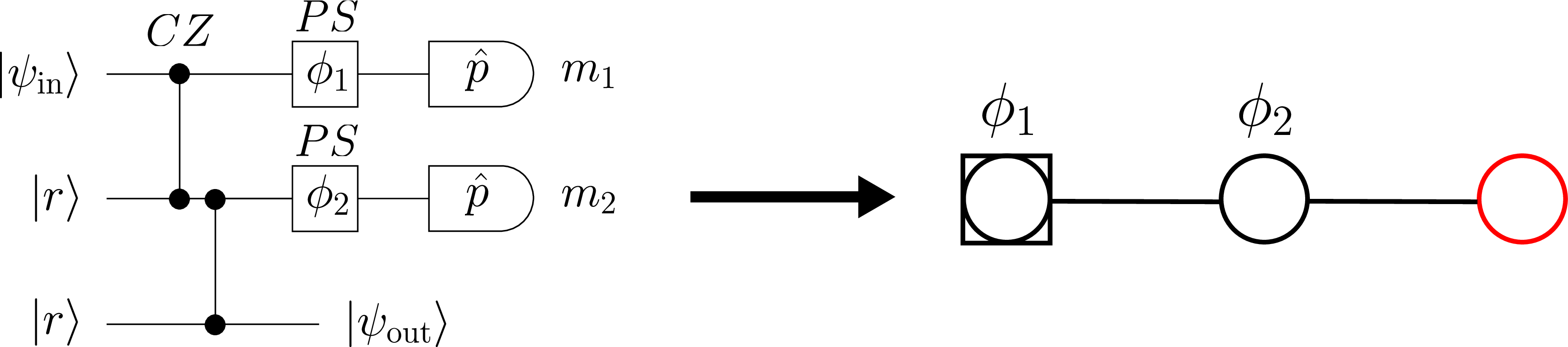}
\caption{Teleportation scheme for three-mode one-dimensional CV cluster state corresponding to the graph $G_{H}^{(3)}$}
\label{fig: horizontal 3 mode cluster state}
\end{figure}

Similarly to before, we perform phase shifts and homodyne measurements for all modes except the output mode.
The input-output relation can then be derived by applying the input-output relation in Eq.~\eqref{eq: input output relation for 2 mode cluster} twice, yielding
\begin{align}
    \begin{pmatrix}
        \hat{q}_{\rm out} \\ \hat{p}_{\rm out}
    \end{pmatrix} &= \begin{pmatrix}
        \tan\phi_2\tan\phi_1 -1 & \tan\phi_2 \\ -\tan\phi_1 & -1
    \end{pmatrix} \begin{pmatrix}
        \hat{q}_{\rm in} \\ \hat{p}_{\rm in}
    \end{pmatrix} + \begin{pmatrix}
        -1 & 0 \\ 0 & 1
    \end{pmatrix}\begin{pmatrix}
        \hat{p}_2 \\ \hat{p}_3
    \end{pmatrix} + \begin{pmatrix}
        -\tan\phi_2 & 1 \\ 1 &  0
    \end{pmatrix} \begin{pmatrix}
        \sec\phi_1 & 0 \\ 0 & \sec\phi_2
    \end{pmatrix} \begin{pmatrix}
        m_1 \\ m_2
    \end{pmatrix} \\ &= \textbf{G}\begin{pmatrix}
        \hat{q}_{\rm in} \\ \hat{p}_{\rm in}
    \end{pmatrix} + \textbf{N}\begin{pmatrix}
        \hat{p}_2 \\ \hat{p}_3
    \end{pmatrix}+ \textbf{D}\begin{pmatrix}
        m_1 \\ m_2
    \end{pmatrix}.
\end{align}
Compared with Eq.~\eqref{eq: input output relation for 2 mode cluster}, the two-measurement teleportation scheme above can realize a broader class of operations $\bf{G}$ than the single-measurement scheme in Fig.~\ref{fig: cluster_2mode_PS} (i.e., its ``expressiveness" increases).
Indeed, Ref.~\cite{ukai2010universal} shows that two repetitions of this procedure, requiring four measurements in total, suffice to realize an arbitrary single-mode Gaussian operation.

We next consider a three-mode \textit{vertical} chain, for which the input and output modes of the measurement-based computation are the same (see Fig.~\ref{fig: vertical 3 mode cluster state}).

\begin{figure}[h]
\centering
\includegraphics[width=0.48\linewidth]{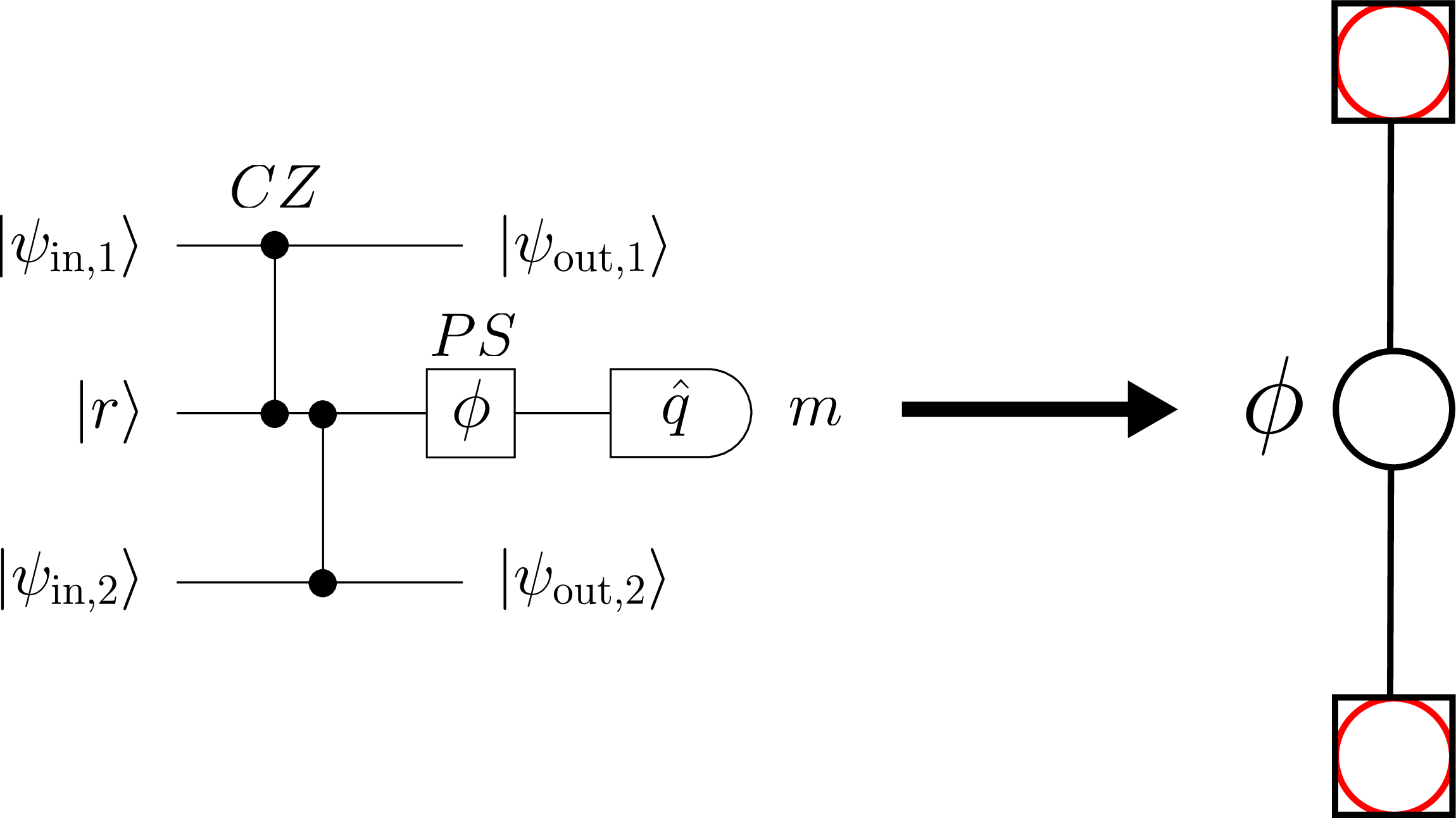}
\caption{Teleportation scheme for three-mode one-dimensional cluster state corresponding to the graph $G_{V}^{(3)}$}
\label{fig: vertical 3 mode cluster state}
\end{figure}

Note that this corresponds to a cluster state $\ket{G_{V}^{(3)}}$ for a graph $G_{V}^{(3)}$ whose input is given by two single-mode squeezed vacuums. 
Then, after applying a phase shift by $\phi$ on the second mode, the quadrature operators for the whole system transform as 
\begin{align}
    \begin{pmatrix}
        \hat{q}'_1 \\ \hat{q}'_2  \\ \hat{q}'_3 \\ \hat{p}'_1 \\ \hat{p}'_2 \\ \hat{p}'_3 
    \end{pmatrix} = \begin{pmatrix}
        1 & 0 & 0 & 0 & 0 & 0 \\
        -\sin\phi & \cos\phi & -\sin\phi & 0 & -\sin\phi & 0 \\ 
        0 & 0 & 1 & 0 & 0 & 0\\ 
        0 & 1 & 0 & 1 & 0 & 0 \\ 
        \cos\phi & \sin\phi & \cos\phi & 0 &  \cos\phi & 0 \\
        0 & 1 & 0 & 0 & 0 & 1
    \end{pmatrix} \begin{pmatrix}
        \hat{q}_1 \\ \hat{q}_2 \\ \hat{q}_3 \\ \hat{p}_1 \\ \hat{p}_2  \\\hat{p}_3  
    \end{pmatrix} 
    \quad \Rightarrow \quad \begin{pmatrix}
        \hat{q}'_1  \\ \hat{q}'_3 \\ \hat{p}'_1 \\ \hat{p}'_3 
    \end{pmatrix} =  \begin{pmatrix}
        \hat{q}_1 \\ \hat{q}_3 \\ \hat{q}_2 + \hat{p}_1  \\ \hat{q}_2 +\hat{p}_3  
    \end{pmatrix}  ,
\end{align}
where $\hat{q}_2$ satisfies the relation $\hat{q}'_2 = -\sin\phi\hat{q}_1 + \cos\phi\hat{q}_2 - \sin\phi\hat{q}_3 -\sin\phi\hat{p}_2$, which gives $\hat{q}_2 = \sec\phi\hat{q}'_2 + \tan\phi\hat{q}_1 + \tan\phi\hat{q}_3 + \tan\phi\hat{q}_2$.
Hence, when measuring the second mode in the $\hat{q}$-basis and obtaining outcome $m$, the corresponding input-output relation is
\begin{align}\label{eq: input output relation for vertical cluster state}
    \begin{pmatrix}
        \hat{q}_{\rm out,1}  \\ \hat{q}_{\rm out,2} \\ \hat{p}_{\rm out,1} \\ \hat{p}_{\rm out,2}
    \end{pmatrix} = \begin{pmatrix}
        1 & 0 & 0 & 0 \\ 0 & 1 & 0 & 0 \\ \tan\phi & \tan\phi & 1 & 0 \\ \tan\phi & \tan\phi & 0 & 1
    \end{pmatrix} \begin{pmatrix}
        \hat{q}_{\rm in,1}  \\ \hat{q}_{\rm in,2} \\ \hat{p}_{\rm in,1} \\ \hat{p}_{\rm in,2}
    \end{pmatrix} + \begin{pmatrix}
        0 \\ 0 \\ \tan\phi \\ \tan\phi 
    \end{pmatrix} \hat{p}_{2} + \begin{pmatrix}
        0 \\ 0 \\ \sec\phi \\ \sec\phi
    \end{pmatrix} m = \textbf{G}  \begin{pmatrix}
        \hat{q}_{\rm in,1}  \\ \hat{q}_{\rm in,2} \\ \hat{p}_{\rm in,1} \\ \hat{p}_{\rm in,2}
    \end{pmatrix} + \textbf{N}   \hat{p}_{2}    + \textbf{D}  m ,
\end{align}
which implements an entangling operation between the two input modes.
Once again, $\bf{G}$ and $\bf{N}$ here can diverge in principle as $\phi\rightarrow\pm\frac {\pi}{2}$.
In the present setting, however, this limit corresponds to a $\hat{p}$-basis measurement, which removes the mode carrying the input state (e.g., mode 1 and 3 in Fig.~\ref{fig: vertical 3 mode cluster state})~\cite{larsen2020architecture}.
Since our interest in this teleportation scheme is to mediate an entangling operation on the input state, this limiting case is not relevant to our results.

To avoid confusion, we stress the following phase-shift convention for measurement-based gate implementations: modes that carry the state to be teleported are, by default, measured in the $\hat{p}$ basis, whereas ancillary modes used solely to mediate entangling operations are, by default, measured in the $\hat{q}$ basis. 
Measurements in the complementary bases are excluded, as they would destroy the teleported state.


\section{Graph-based analysis for input-output relation}\label{appendix: section: conventions for input output relation}

As demonstrated in Sec.~\ref{appendix: section: example of input output relations} and~\cite{larsen2020architecture, larsen2021deterministic, larsen2021fault}, the input-output relation for a measurement-based computation scheme on a CV cluster state is entirely determined by the underlying graph structure (with designated input and output nodes) and the applied phase shifts. 
Hence, the graph conventions introduced in Sec.~\ref{appendix: section: graph convention}, combined with phase-shift angles, can be naturally extended to cluster state analysis for measurement-based computation.
This framework structurally requires the number of begin nodes to equal the number of end nodes, with each end node connected to at least one begin node to ensure a well-defined input-output relation.
For the present analysis, we consider the input state for measurement-based computation (corresponding to the input quadrature) to be ancilla states of the cluster state (i.e., squeezed vacuum states); we later describe in Sec.~\ref{appendix: section: coupling} how we can arbitrarily set the input state for measurement-based computation given a cluster state and the desired input state.

More specifically, consider a graph $G = (V,E)$ with pre-assigned ${\rm{begin}}(G)$ and ${\rm{end}}(G)$ satisfying $|{\rm{begin}}(G)| = |{\rm{end}}(G)|$, where each element of ${\rm{end}}(G)$ is connected with at least one of ${\rm{begin}}(G)$. 
Given such a graph $G$, let $\ket{G}$ be a cluster state with its input and output modes given by ${\rm{begin}}(G)$ and ${\rm{end}}(G)$, respectively.
Also, let $\bm{\phi}$ be a set of phase-shift angles with a designated sequence, with its size given by $|V| - |{\rm{end}}(G)|$.
Then, after phase shift operations according to $\bm{\phi}$ and homodyne measurements on all the modes except the output modes of $\ket{G}$ (i.e., $V \setminus {\rm{end}}(G)$), the output quadrature operators $(\hat{\bm{q}}_{\rm out}, \hat{\bm{p}}_{\rm out})^{T}$ for the output modes (i.e., ${\rm{end}}(G)$) follow the input-output relation 
\begin{align}\label{eq: input output relation 1}
    \begin{pmatrix}
        \hat{\bm{q}}_{\rm out} \\ \hat{\bm{p}}_{\rm out}
    \end{pmatrix} = {\bf{G}} \begin{pmatrix}
        \hat{\bm{q}}_{\rm in} \\ \hat{\bm{p}}_{\rm in}
    \end{pmatrix} + {\bf{N}}   \hat{\bm{p}} + {\bf{D}}   \bm{m} ,
\end{align}
where $(\hat{\bm{q}}_{\rm in}, \hat{\bm{p}}_{\rm in})^{T}$ denotes the quadrature operators for the input modes (i.e., ${\rm{begin}}(G)$), $\hat{\bm{p}}$ denotes the momentum quadrature operators for all the modes except input modes (i.e., $V \setminus {\rm{begin}}(G)$), and $\bm{m}$ denotes the measured outcomes for all the modes except the output modes (i.e., $V \setminus {\rm{end}}(G)$). 
Therefore, given a graph $G$ and a set of phase-shift angles $\bm{\phi}$, based on the measurement-based computation scheme for the cluster state $\ket{G}$ as above, we can determine a symplectic transformation matrix $\bf{G}$, a noise matrix $\bf{N}$, and a displacement matrix $\bf{D}$.
Accordingly, we formulate these matrices in terms of a graph $G$ and phase shift angle $\bm{\phi}$ as follows.

\begin{definition}[Matrices for input-output relation]\label{def: matrices for input output relation}
Given a graph $G$ with $|{\rm{begin}}(G)| = |{\rm{end}}(G)|$, and phase-shift angles $\bm{\phi}$ (with a designated sequence) satisfying $|\bm{\phi}| = |V| - |{\rm{end}}(G)|$, let $\ket{G}$ be the corresponding CV cluster state, with input and output modes of $\ket{G}$ specified by ${\rm{begin}}(G)$ and ${\rm{end}}(G)$, respectively.
By applying phase shifts according to $\bm{\phi}$ and performing homodyne measurements on all the modes except the output modes of $\ket{G}$, the input-output relation in Eq.~\eqref{eq: input output relation 1} is established, where the matrices ${\bf G}$, ${\bf N}$, and ${\bf D}$ in Eq.~\eqref{eq: input output relation 1} are fully determined by the graph $G$ and the choice of general-dyne angles $\bm{\phi}$.
Accordingly, we define ${\bf G}(G,\bm{\phi})$ as the symplectic transformation matrix obtained through this procedure (i.e., $\bf{G}$ in Eq.~\eqref{eq: input output relation 1}).
We further define ${\bf N}(G,\bm{\phi})$ as the noise matrix (i.e., $\bf{N}$ in Eq.~\eqref{eq: input output relation 1}) and ${\bf D}(G,\bm{\phi})$ as the displacement matrix (i.e., $\bf{D}$ in Eq.~\eqref{eq: input output relation 1}) induced during this procedure.
\end{definition}


We now study how the matrices for the input-output relation in Definition~\ref{def: matrices for input output relation} transform under graph operations (concatenation and sum) introduced in Sec.~\ref{appendix: section: graph convention}. 
Let $G_1 = (V_1, E_1)$ and $G_2 = (V_2, E_2)$ be graphs equipped with designated begin and end nodes.
Also, let $\bm{\phi}_1$ be a set of phase-shift angles for $G_1$ assigned for each of $V_1$ except $\text{end}(G_1)$ with designated sequence, and similar for $\bm{\phi}_{2}$ and $G_2$.
For concatenation of graph $G_1$ and $G_2$ with $|\text{begin}(G_2)| = |\text{end}(G_1)|$, together with phase-shift angles $\bm{\phi}_1$ and $\bm{\phi}_2$ for each of $G_1$ and $G_2$, we have the following properties
\begin{align}
    {\bf G}(G_{1} \circ G_{2},\bm{\phi}_1 \cdot \bm{\phi}_2) &= {\bf G}(G_2,\bm{\phi}_2){\bf G}(G_1,\bm{\phi}_1), \label{eq: transformation matrix after graph concatenation} \\
    {\bf N}(G_{1} \circ G_{2},\bm{\phi}_1 \cdot \bm{\phi}_2) &= \left[ {\bf G}(G_2,\bm{\phi}_2) {\bf N}(G_1,\bm{\phi}_1) \;|\; {\bf N}(G_2,\bm{\phi}_2)  \right],  \label{eq: noise matrix after graph concatenation} \\
    {\bf D}(G_{1} \circ G_{2},\bm{\phi}_1 \cdot \bm{\phi}_2) &= \left[ {\bf G}(G_2,\bm{\phi}_2) {\bf D}(G_1,\bm{\phi}_1) \;|\; {\bf D}(G_2,\bm{\phi}_2)  \right], \label{eq: displacement matrix after graph concatenation} 
\end{align}
where $\cdot$ denotes the concatenation of sequences, and $[ A \;|\; B]$ denotes the concatenation of the matrices $A$ and $B$ along column axis. 
Also, for the sum of graphs $G_1$ and $G_2$, we have the following properties
\begin{align}
    {\bf G}(G_{1} \oplus G_{2},\bm{\phi}_1 \cdot \bm{\phi}_2) &= P({\bf G}(G_1,\bm{\phi}_1) \oplus {\bf G}(G_2,\bm{\phi}_2) )P^{T} ,\label{eq: transformation matrix after graph sum}\\
    {\bf N}(G_{1} \oplus G_{2},\bm{\phi}_1 \cdot \bm{\phi}_2) &=  P({\bf N}(G_1,\bm{\phi}_1) \oplus {\bf N}(G_2,\bm{\phi}_2) ) ,\label{eq: noise matrix after graph sum} \\
    {\bf D}(G_{1} \oplus G_{2},\bm{\phi}_1 \cdot \bm{\phi}_2) &= P({\bf D}(G_1,\bm{\phi}_1) \oplus {\bf D}(G_2,\bm{\phi}_2) )  , \label{eq: displacement matrix after graph sum}
\end{align}
where $P$ represents the permutation that reorders the quadrature operators from $(\hat{\bm{q}}_{1} \; \hat{\bm{p}}_{1} \; \hat{\bm{q}}_{2} \; \hat{\bm{p}}_{2})^{T}$ to $(\hat{\bm{q}}_{1} \;  \hat{\bm{q}}_{2} \; \hat{\bm{p}}_{1} \; \hat{\bm{p}}_{2})^{T}$ (i.e., $xpxp$-ordering to $xxpp$-ordering).
More explicitly, $P$ is an $2m\times 2m$ permutation matrix for $m = m_1 + m_2$ with $m_1 = |{\rm{begin}}(G_1)| = |{\rm{end}}(G_1)|$ and $m_2 = |{\rm{begin}}(G_2)| = |{\rm{end}}(G_2)|$, defined by
\begin{align}\label{eq: def of permutation}
    P = \begin{pmatrix}
        \mathbb{I}_{m_1} & 0_{m_1} & 0_{m_2} & 0_{m_2} \\ 0_{m_1} & 0_{m_1} & \mathbb{I}_{m_2} & 0_{m_2} \\ 0_{m_1} & \mathbb{I}_{m_1} & 0_{m_2} & 0_{m_2} \\ 0_{m_1} & 0_{m_1} & 0_{m_2} & \mathbb{I}_{m_2}
    \end{pmatrix} ,
\end{align}
where $\mathbb{I}_{m_1}$ and $0_{m_1}$ are $m_1\times m_1$ identity and all-zero matrix, respectively (similarly for $\mathbb{I}_{m_2}$ and $0_{m_2}$).

\section{Universal cluster state for linear optical circuit implementation}\label{appendix: section: implementing universal}

We present a graph structure of a CV cluster state and the corresponding set of phase-shift angles that enable measurement-based linear optics (MBLO), i.e., the implementation of an arbitrary linear optical (LO) circuit via measurement-based computation on the cluster state. 
We begin by identifying a cluster-state graph structure and a corresponding set of phase-shift angles that implement an arbitrary brick beamsplitter, a fundamental building block for universal LO circuits.
Building on this construction, we then propose a full graph structure and phase-shift angles for a universal LO circuit via MBLO.

\subsection{Cluster state for brick beam splitters}

We first introduce the graph structure of a CV cluster state and the corresponding phase-shift angles that can implement building blocks for universal LO circuits. 
Based on the foundational works on constructing LO circuits~\cite{reck1994experimental, clements2016optimal}, an arbitrary LO circuit can be decomposed into parameterized \textit{brick} beam splitters, as shown in Fig.~\ref{fig: LOdecomposition} below.

\begin{figure}[h]
\centering
\includegraphics[width=0.95\linewidth]{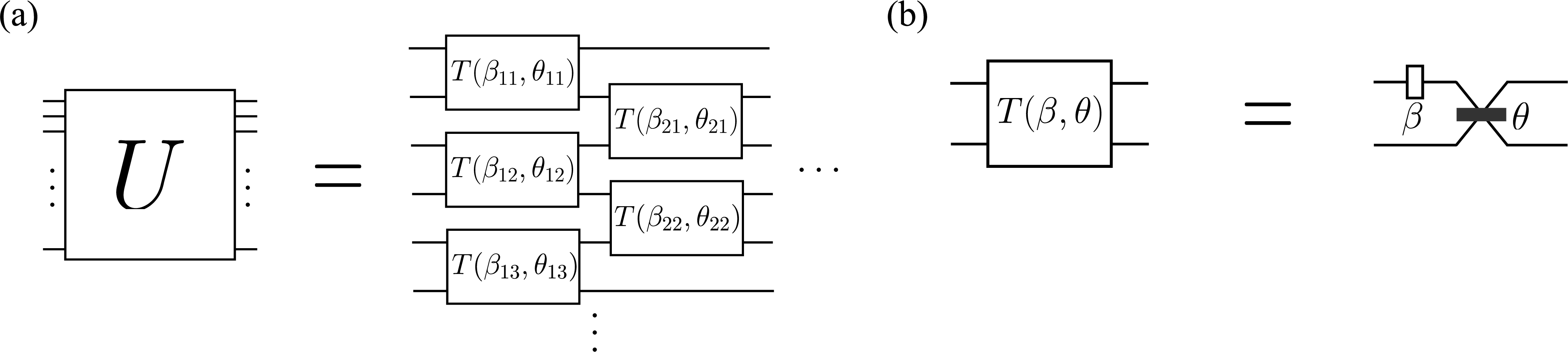}
\caption{(a) Brickwork decomposition of LO circuit by~\cite{clements2016optimal}. 
An arbitrary LO circuit $U$ can be decomposed into a brickwork LO circuit composed of parameterized brick beam splitters.
(b) Brick beam splitter defined in Eq.~\eqref{eq: brickwork beamsplitter T}, representing a beam-splitter operation with transmissivity $\cos\theta$, preceded by a phase shift $\beta$ on the first mode. 
}
\label{fig: LOdecomposition}
\end{figure}

Here, each brick beam splitter composing the LO circuit (see Fig~\ref{fig: LOdecomposition}(b)) can be written as 
\begin{align}\label{eq: brickwork beamsplitter T}
    T = \begin{pmatrix}
        e^{i\beta}\cos\theta & -\sin\theta \\ e^{i\beta}\sin\theta & \cos\theta
    \end{pmatrix} ,
\end{align}
with $\theta \in [0, \frac{\pi}{2}]$ and $\beta \in [0, 2\pi]$.
In the following, we propose a cluster state and corresponding phase-shift angles to implement an arbitrary brick beam splitter $T$.

First, consider the cluster state $\ket{G_{\sqsupset}}$ corresponding to the graph $G_{\sqsupset} = (G_{H}^{(5)} \oplus G_{H}^{(5)})\circ G_{V}^{(3)}$ defined in Eq.~\eqref{eq: def of G_subseteq}, illustrated in Fig.~\ref{fig: graph G_subseteq} below.

\begin{figure}[h]
\centering
\includegraphics[width=0.35\linewidth]{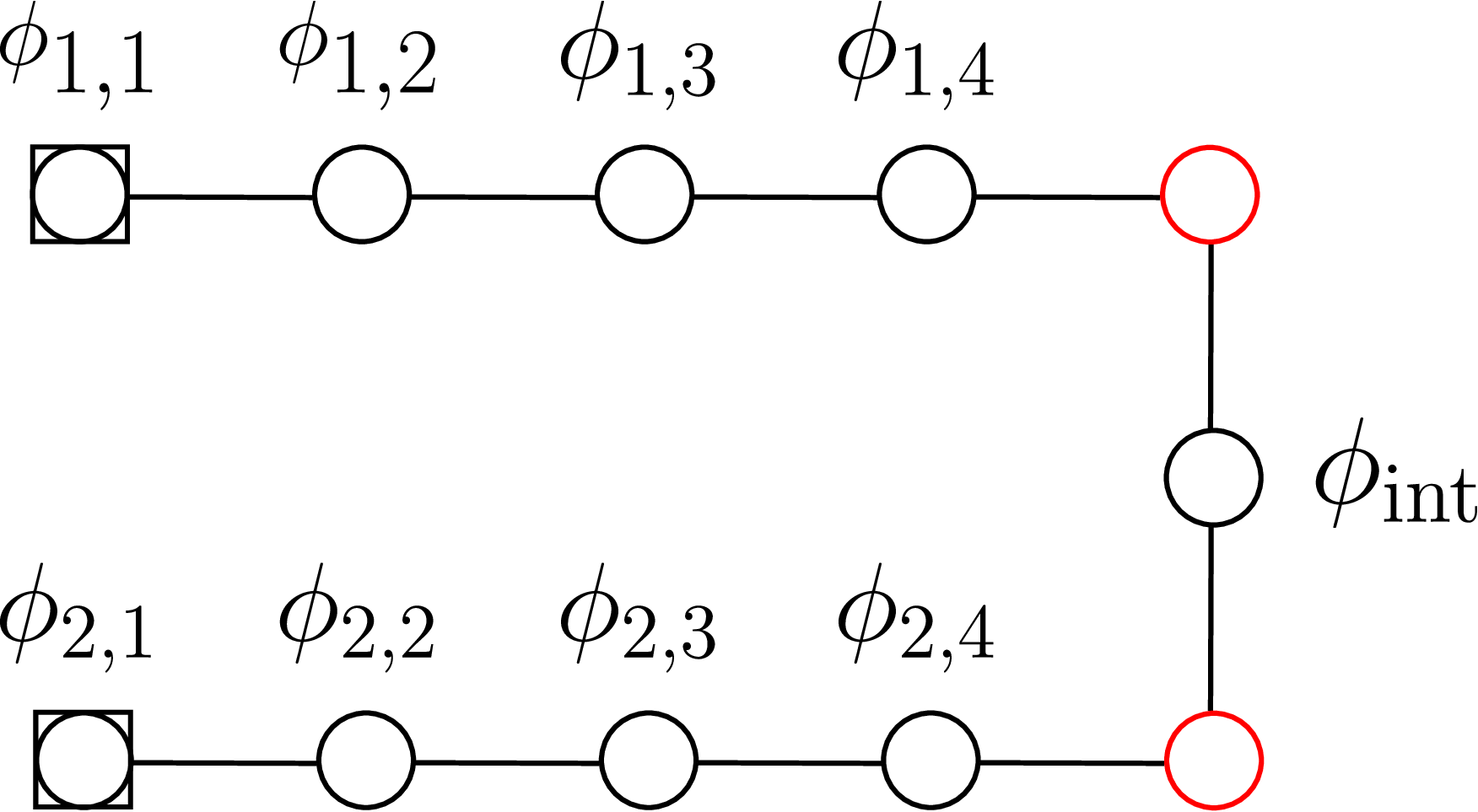}
\caption{Schematic for a cluster state corresponding to the graph $G_{\sqsupset}$ defined in Eq.~\eqref{eq: def of G_subseteq}, with assigned input/output and phase-shift angles.}
\label{fig: graph G_subseteq}
\end{figure}

Also, we choose the phase-shift angles in Fig.~\ref{fig: graph G_subseteq} as follows:
\begin{align}\label{eq: phase-shift angles a for brickwork}
    \begin{pmatrix}
        \tan\phi_{1,1} \\ \tan\phi_{1,2} \\ \tan\phi_{1,3} \\ \tan\phi_{1,4}
    \end{pmatrix} = \begin{pmatrix}
        -\frac{1}{2} \\ - \frac{1 - (-\cos\theta + \sin\theta )}{1 - \frac{1}{2}(-\cos\theta + \sin\theta)} \\ -1 + \frac{1}{2}(-\cos\theta + \sin\theta) \\ -\frac{\frac{1}{2}}{1 - \frac{1}{2}(-\cos\theta + \sin\theta ) }
    \end{pmatrix},\quad  
    \begin{pmatrix}
        \tan\phi_{2,1} \\ \tan\phi_{2,2} \\ \tan\phi_{2,3} \\ \tan\phi_{2,4}
    \end{pmatrix} = \begin{pmatrix}
        -\frac{1}{2} \\ - \frac{1 - (\cos\theta + \sin\theta )}{1 - \frac{1}{2}(\cos\theta + \sin\theta)} \\ -1 + \frac{1}{2}(\cos\theta + \sin\theta) \\ -\frac{\frac{1}{2}}{1 - \frac{1}{2}(\cos\theta + \sin\theta ) }
    \end{pmatrix},\quad 
    \tan\phi_{\text{int}} = - \sin\theta ,
\end{align}
Then, for the graph $G_{\sqsupset}= (G_{H}^{(5)} \oplus G_{H}^{(5)})\circ G_{V}^{(3)}$ and the corresponding phase-shift angles $\bm{\phi}_{a} = \{\phi_{1,1},\dots,\phi_{1,4},\phi_{2,1},\dots, \phi_{2,4}, \phi_{\text{int}}\}$ in Eq.~\eqref{eq: phase-shift angles a for brickwork}, using the transformation rule depicted in Eq.~\eqref{eq: transformation matrix after graph concatenation} and Eq.~\eqref{eq: transformation matrix after graph sum}, we obtain
\begin{align}
    {\bf G}(G_{\sqsupset}, \bm{\phi}_{a}) = \begin{pmatrix}
        0 & 0 & 1 & 0 \\ 0 & 0 & 0 & 1 \\ -1 & 0 & -\cos\theta & -\sin\theta \\ 0 & -1 & -\sin\theta & \cos\theta
    \end{pmatrix} \quad \Rightarrow \quad 
    {\bf G}( G_{\sqsupset} \circ G_{\sqsupset} \circ G_{\sqsupset}, \bm{\phi}_{a}\cdot \bm{\phi}_{a} \cdot \bm{\phi}_{a}) = \begin{pmatrix}
       \cos\theta & \sin\theta & 0 & 0 \\ \sin\theta & -\cos\theta & 0 & 0 \\ 0 & 0 & \cos\theta & \sin\theta \\ 0 & 0 & \sin\theta & -\cos\theta
    \end{pmatrix} .
\end{align}
Indeed, this gives an arbitrary beam-splitter operation parameterized by $ \theta$ when repeated three times, with an additional $ \pi$ phase shift on the second mode.
Hence, by preparing cluster state $\ket{G_{T}}$ with its graph $G_{T} = G_{\sqsupset} \circ G_{\sqsupset} \circ G_{\sqsupset}$ defined in Eq.~\eqref{eq: def of G_T} (see Fig.~\ref{fig: brickwork graph G_T} for illustration of $G_{T}$) and performing homodyne measurements after applying phase shifts according to $\bm{\phi}_{a}\cdot \bm{\phi}_{a} \cdot \bm{\phi}_{a}$, one can realize an arbitrary beam splitter operation.

Based on this understanding, one can further implement the brick beam splitter $T$ in Eq.~\eqref{eq: brickwork beamsplitter T} by implementing an additional phase shift $\beta$ before the beam splitter operation depicted above. 
More specifically, to implement a phase shift angle $\beta$ on the first mode (and undo the unwanted $\pi$-phase shift described above), we consider the following phase-shift angles in Fig~\ref{fig: graph G_subseteq}:

\begin{align}
    &\begin{pmatrix}
        \tan\phi_{1,1} \\ \tan\phi_{1,2} \\ \tan\phi_{1,3} \\ \tan\phi_{1,4}
    \end{pmatrix} = \begin{pmatrix}
        \frac{-\cos\beta + \sqrt{2}\sin\left( \theta - \frac{\pi}{4} \right) \sin\beta}{| -\cos\beta + \sqrt{2}\sin\left( \theta - \frac{\pi}{4} \right) \sin\beta |} \frac{| -\cos\beta + \sqrt{2}\sin\left( \theta - \frac{\pi}{4} \right) \sin\beta | - \sqrt{1 + 2\sin^2\left( \theta - \frac{\pi}{4} \right)}}{\sin\beta + \sqrt{2}\sin\left( \theta - \frac{\pi}{4} \right) \cos\beta} 
        \\
        \frac{-\cos\beta + \sqrt{2}\sin\left( \theta - \frac{\pi}{4} \right) \sin\beta}{| -\cos\beta + \sqrt{2}\sin\left( \theta - \frac{\pi}{4} \right) \sin\beta |}  \frac{1 - (\sin\beta + \sqrt{2}\sin\left( \theta - \frac{\pi}{4} \right) \cos\beta )   }{\sqrt{1 + 2\sin^2\left( \theta - \frac{\pi}{4} \right)}}  
        \\  
        \frac{-\cos\beta + \sqrt{2}\sin\left( \theta - \frac{\pi}{4} \right) \sin\beta}{| -\cos\beta + \sqrt{2}\sin\left( \theta - \frac{\pi}{4} \right) \sin\beta | } \sqrt{1 + 2\sin^2\left( \theta - \frac{\pi}{4} \right)} 
        \\
        - \frac{\cos\beta}{\sin\beta + \sqrt{2}\sin\left( \theta - \frac{\pi}{4} \right) \cos\beta}
        -
        \frac{-\cos\beta + \sqrt{2}\sin\left( \theta - \frac{\pi}{4} \right) \sin\beta}{| -\cos\beta + \sqrt{2}\sin\left( \theta - \frac{\pi}{4} \right) \sin\beta |}  \frac{1 - (\sin\beta + \sqrt{2}\sin\left( \theta - \frac{\pi}{4} \right) \cos\beta )   }{(\sin\beta + \sqrt{2}\sin\left( \theta - \frac{\pi}{4} \right) \cos\beta) \sqrt{1 + 2\sin^2\left( \theta - \frac{\pi}{4} \right)}}  
    \end{pmatrix},  \nonumber \\ 
    &\begin{pmatrix}
        \tan\phi_{2,1} \\ \tan\phi_{2,2} \\ \tan\phi_{2,3} \\ \tan\phi_{2,4}
    \end{pmatrix} =\begin{pmatrix}
        \frac{1}{2} \\ \frac{1 + (\cos\theta + \sin\theta )}{1 + \frac{1}{2}(\cos\theta + \sin\theta)} \\ 1 + \frac{1}{2}(\cos\theta + \sin\theta) \\ \frac{\frac{1}{2}}{1 + \frac{1}{2}(\cos\theta + \sin\theta ) }
    \end{pmatrix}, \quad 
    \tan\phi_{\text{int}} =  - \sin\theta   , \label{eq: phase-shift angles b for brickwork}
\end{align}
where we set $x/|x| = 1$ when $x = 0$. 
Similarly, for phase-shift angles $\bm{\phi}_{b} = \{ \phi_{1,1}, \dots, \phi_{1,4}, \phi_{2,1}, \dots, \phi_{2,4}, \phi_{\text{int}}\}$ given as Eq.~\eqref{eq: phase-shift angles b for brickwork}, using the transformation rule depicted in Eq.~\eqref{eq: transformation matrix after graph concatenation} and Eq.~\eqref{eq: transformation matrix after graph sum}, we have
\begin{align}
    {\bf G}(G_{\sqsupset}, \bm{\phi}_{b}) 
    &= \begin{pmatrix} \sin\beta & 0 & \cos\beta & 0 \\ 0 & 0 & 0 & -1 \\ -\cos\beta -\sin\beta\cos\theta & 0 & \sin\beta - \cos\beta\cos\theta & \sin\theta \\ -\sin\beta\sin\theta & 1 & -\cos\beta\sin\theta & -\cos\theta \end{pmatrix}  \\
    &= \begin{pmatrix}
        0 & 0 & 1 & 0 \\ 0 & 0 & 0 & 1 \\ -1 & 0 & -\cos\theta & -\sin\theta \\ 0 & -1 & -\sin\theta & \cos\theta
    \end{pmatrix} \cdot \begin{pmatrix}
        \cos\beta & 0 & -\sin\beta & 0 \\ 0 & -1 & 0 & 0 \\ \sin\beta & 0 & \cos\beta & 0 \\ 0 & 0 & 0 & -1
    \end{pmatrix},
\end{align}
which is indeed a product of ${\bf G}(G_{\sqsupset}, \bm{\phi}_{a})$ and a symplectic transformation corresponding to phase shift operation ($\beta$ on the first mode, $\pi$ on the second mode). 
Then, by graph concatenation $G_{T} = G_{\sqsupset} \circ G_{\sqsupset} \circ G_{\sqsupset}$ and corresponding phase-shift angles $\bm{\phi}_{T} = \bm{\phi}_{b}\cdot \bm{\phi}_{a} \cdot \bm{\phi}_{a}$, we finally obtain the desired operation
\begin{align}\label{eq: measurementbased brickwork beam splitter}
    {\bf G}(G_{T},\bm{\phi}_{T} ) = \begin{pmatrix}
        \cos\beta\cos\theta & -\sin\theta & -\sin\beta\cos\theta & 0 \\ \cos\beta\sin\theta & \cos\theta & -\sin\beta\sin\theta & 0 \\ \sin\beta\cos\theta & 0 & \cos\beta\cos\theta & -\sin\theta \\ \sin\beta\sin\theta & 0 & \cos\beta\sin\theta & \cos\theta
    \end{pmatrix} = \begin{pmatrix}
        T_{R} & -T_{I} \\ T_{I} & T_{R}
    \end{pmatrix} \quad \text{for} \quad T = \begin{pmatrix}
        e^{i\beta}\cos\theta & -\sin\theta \\ e^{i\beta}\sin\theta & \cos\theta
    \end{pmatrix}   ,
\end{align}
where $T_{R}$ and $T_{I}$ are the real part and imaginary part of the brick beam splitter $T$. 
Hence, cluster state $\ket{G_{T}}$ corresponding to the graph $G_{T}$ with phase-shift angles $\bm{\phi}_{T} = \bm{\phi}_{b}\cdot \bm{\phi}_{a} \cdot \bm{\phi}_{a}$ can implement an arbitrary brick beam splitter $T$, represented in Fig.~\ref{fig: brickwork graph G_T} below.


\begin{figure}[h]
\centering
\includegraphics[width=0.9\linewidth]{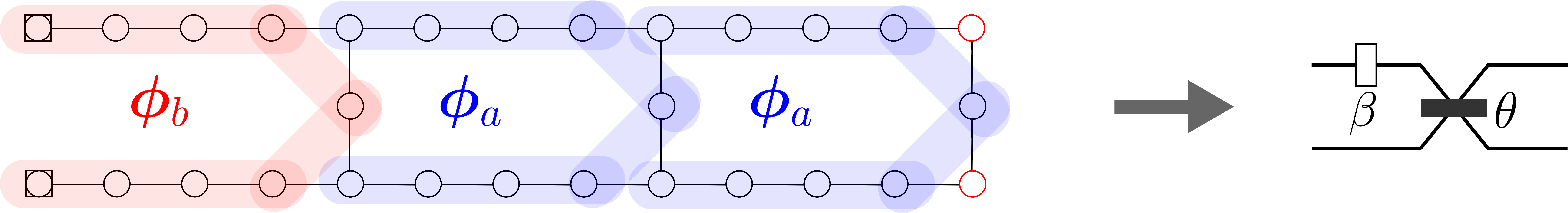}
\caption{Schematic for a cluster state corresponding to the graph $G_{T}$ defined in Eq.~\eqref{eq: def of G_T}, with assigned input/output and phase-shift angles. 
As given in Eq.~\eqref{eq: measurementbased brickwork beam splitter}, $\ket{G_{T}}$ with phase-shift angles $\bm{\phi}_{T} = \bm{\phi}_{b}\cdot \bm{\phi}_{a} \cdot \bm{\phi}_{a}$ for $\bm{\phi}_{a}$ and $\bm{\phi}_{b}$ specified in Eq.~\eqref{eq: phase-shift angles a for brickwork} and Eq.~\eqref{eq: phase-shift angles b for brickwork} can implement an arbitrary brick beam splitter $T$.
}
\label{fig: brickwork graph G_T}
\end{figure}

For analytical convenience, we define the phase-shift angles $\bm{\phi}_{T}$ such that the resulting measurement-based operation on $\ket{G_{T}}$ implements a brick beam splitter.

\begin{definition}\label{def: brickwork cluster state and angle}
    We define $\bm{\phi}_{T} = \bm{\phi}_{b}\cdot \bm{\phi}_{a} \cdot \bm{\phi}_{a}$ as the set of brick phase-shift angles that, combined with the brick graph $G_{T}$ in Eq.~\eqref{eq: def of G_T}, gives the operation ${\bf G}(G_{T},\bm{\phi}_{T} ) = \begin{pmatrix}
        T_{R} & -T_{I} \\ T_{I} & T_{R}
    \end{pmatrix}$ for a brick beam splitter $T = \begin{pmatrix}
        e^{i\beta}\cos\theta & -\sin\theta \\ e^{i\beta}\sin\theta & \cos\theta
    \end{pmatrix}$.
    Here, the phase-shift angles sets $\bm{\phi}_{a} = \{\phi_{1,1},\dots,\phi_{1,4},\phi_{2,1},\dots, \phi_{2,4}, \phi_{\text{int}}\}$ and $\bm{\phi}_{b} = \{\phi_{1,1},\dots,\phi_{1,4},\phi_{2,1},\dots, \phi_{2,4}, \phi_{\text{int}}\}$ are specified in Eq.~\eqref{eq: phase-shift angles a for brickwork} and Eq.~\eqref{eq: phase-shift angles b for brickwork}, respectively.
\end{definition}

\subsection{Cluster state for universal linear optical circuits}\label{section: cluster state for universal LO circuit}

We now describe the graph structure of the cluster state and the corresponding phase-shift angles that can realize universal LO circuits.
Among many decomposition methods for LO circuits, we select the decomposition scheme described in~\cite{clements2016optimal}.
To sketch the LO decomposition scheme in~\cite{clements2016optimal} briefly (see Fig.~\ref{fig: LOdecomposition} for illustration), given an arbitrary $M \times M$ unitary matrix $U$ characterizing an $M$-mode LO circuit (let $M$ be even for convenience), one can construct a set of angles $\{\beta_{i,j}, \theta_{i,j}\}$ for $i \in [M-1] $ and $j \in [\frac{M}{2}]$ corresponding to $M(M-1)/2$-number of brick beam splitters 
\begin{align}
    T_{i,j} = \begin{pmatrix}
        e^{i\beta_{i,j}}\cos\theta_{i,j} & -\sin\theta_{i,j} \\ e^{i\beta_{i,j}}\sin\theta_{i,j} & \cos\theta_{i,j}
    \end{pmatrix}   ,
\end{align}
together with phase-shift angles $\{\beta_{m}\}$ for $m \in [M]$, such that the matrix $U$ can be decomposed by
\begin{align}\label{eq: decomposing unitary circuit}
    U = U_{\rm ps}U_{\frac{M}{2}}U_{\frac{M}{2}-1}\cdots U_{2}U_{1} \quad \text{where} 
    \quad U_{j} = \left( \mathbb{I}_{1} \oplus \left( \bigoplus_{i = \frac{M}{2} + 1}^{M -1} T_{i,j} \right) \oplus \mathbb{I}_1 \right) \left( \bigoplus_{i = 1}^{\frac{M}{2}}T_{i,j} \right) \quad \text{and} \quad U_{\rm ps} = \bigoplus_{m = 1}^{M } e^{i\beta_{m}} .
\end{align}

Now recall that, we define in Sec.~\ref{appendix: section: graph convention} the $k$-depth graph as
\begin{align}\label{eq: recalling k-depth graph}
    G_{D}^{(M,k)} = \underbrace{G_{D}^{(M)} \circ G_{D}^{(M)}\circ \cdots \circ G_{D}^{(M)}}_{k\;\text{times}}  \quad \text{where} \quad G_{D}^{(M)} = ( \underbrace{G_{T} \oplus G_{T} \oplus \cdots \oplus G_{T}}_{\frac{M}{2} \; \text{times}} )\circ (G_{H}^{(13)} \oplus \underbrace{G_{T}\oplus G_{T}\oplus \cdots \oplus G_{T}}_{\frac{M}{2} - 1 \; \text{times}} \oplus \, G_{H}^{(13)}) .
    \quad
\end{align}
Also recall that for each $G_{T}$ composing the graph $G_{D}^{(M,k)}$, the cluster state $\ket{G_{T}}$ with the set of phase-shift angles $\bm{\phi}_{T}$ can implement an arbitrary brick beam splitter $T$.
Comparing Eq.~\eqref{eq: decomposing unitary circuit} and Eq.~\eqref{eq: recalling k-depth graph} reveals that the cluster state $\ket{G_{D}^{(M,k)}}$ for $k \geq \frac{M}{2}$, accompanied with appropriately chosen phase-shift angles, can synthesize the unitary sequence $U_{\frac{M}{2}}U_{\frac{M}{2}-1}\cdots U_{2}U_{1}$ in Eq.~\eqref{eq: decomposing unitary circuit}. 
Moreover, since an arbitrary single-mode phase shift $e^{i\beta_{m}}$ can be implemented using a one-dimensional cluster state $\ket{G_{H}^{(l)}}$ provided $l \geq 5$, as shown in~\cite{ukai2010universal}, the cluster state $\ket{G_{D}^{(M)}}$ with appropriately chosen phase-shift angles can implement the final phase-shift operation $U_{\rm ps}$ in Eq.~\eqref{eq: decomposing unitary circuit} (since $G_{D}^{(M)}$ contains the parallel array of one-dimensional chain graph).
Taken together, these imply that the cluster state $\ket{G_{D}^{(M,k)}}$ with $k \geq \frac{M}{2} + 1$, and with appropriately chosen phase-shift angles, is sufficient to implement an arbitrary LO circuit $U$.


More concretely, consider the brick phase-shift angles $\bm{\phi}_{T_{i,j}}$ defined in Definition~\ref{def: brickwork cluster state and angle} that satisfy
\begin{align}
    {\bf G}(G_{T}, \bm{\phi}_{T_{i,j}}) = \begin{pmatrix}
     (T_{i,j})_{R} & -(T_{i,j})_{I} \\ (T_{i,j})_{I} & (T_{i,j})_{R}
\end{pmatrix}    ,
\end{align}
for the real $(T_{i,j})_{R}$ and imaginary $(T_{i,j})_{R}$ part of $T_{i,j}$. 
From these building blocks, we construct a set of phase-shift angles $\bm{\phi}_{T_{j}}$ such that
\begin{align}
    {\bf G}(G_{D}^{(M)}, \bm{\phi}_{T_{j}}) = \begin{pmatrix}
        (U_{j})_{R} & -(U_j)_{I} \\ (U_j)_{I} & (U_j)_{R} 
    \end{pmatrix} ,
\end{align}
where $(U_{j})_{R}$ and $(U_j)_{I}$ are the real part and imaginary part of the unitary matrix $U_j$ in Eq.~\eqref{eq: decomposing unitary circuit} composing the $j$th layer of $U$.
Specifically, $\bm{\phi}_{T_{j}}$ can be obtained by concatenating the brick phase-shift angles $\bm{\phi}_{T_{i,j}}$ as
\begin{align}
    \bm{\phi}_{T_{j}} = \bm{\phi}_{T_{1,j}}\cdot\bm{\phi}_{T_{2,j}}\cdot \cdots \cdot \bm{\phi}_{T_{\frac{M}{2},j}} \cdot \bm{\phi}_0 \cdot \bm{\phi}_{T_{\frac{M}{2}+1, j}} \cdot \cdots \cdot \bm{\phi}_{T_{M-1, j}}\cdot \bm{\phi}_0 ,
\end{align}
where $\bm{\phi}_0$ is an all-zero phase-shift angles $\bm{\phi}_0 = \{0,\dots,0\}$ which, combined with the horizontal chain graph $G_{H}^{(13)}$, gives the identity matrix ${\bf G}(G_{H}^{(13)}, \bm{\phi}_0) = \mathbb{I}_2$.
Also, following the construction in~\cite{ukai2010universal}, one can explicitly construct a set of phase-shift angles $\bm{\phi}_{\rm ps}$ satisfying
\begin{align}
    {\bf G}(G_{D}^{(M)}, \bm{\phi}_{\rm ps}) = \begin{pmatrix}
        (U_{\rm ps})_{R} & -(U_{\rm ps})_{I} \\ (U_{\rm ps})_{I} & (U_{\rm ps})_{R} 
    \end{pmatrix} ,
\end{align}
by assigning the phase-shift angles specified in~\cite{ukai2010universal} at the beginning modes of $G_{D}^{(M)}$ (to implement each $e^{i\beta_{m}}$) and setting all remaining phase-shift angles to zero.
Hence, one can construct universal phase-shift angles $\bm{\phi}_{U}$ by concatenating $\bm{\phi}_{T_{j}}$ for all $j \in [k]$ with $k \geq \frac{M}{2}$ and $\bm{\phi}_{\rm ps}$ such that
\begin{align}\label{eq: universal graph and angles}
    \bm{\phi}_{U} = \bm{\phi}_{T_{1}}\cdot \cdots\cdot\bm{\phi}_{T_{k}} \cdot  \bm{\phi}_{\rm ps} ,
\end{align}
which enables one to implement the target $U$ such that
\begin{align}
    {\bf G}(G_{D}^{(M,k)}, \bm{\phi}_{U}) = \begin{pmatrix}
    U_{R} & -U_{I} \\ U_{I} & U_{R}
\end{pmatrix} ,
\end{align}
for the real part $U_{R}$ and imaginary part $U_{I}$ of the unitary matrix $U$.
Therefore, the cluster state $\ket{G_{D}^{(M,k)}}$, corresponding to $k$-depth graph $G_{D}^{(M,k)}$ with $k \geq \frac{M}{2} + 1$, serves as a universal resource for implementing LO circuits, which we used throughout our main results.

\begin{theorem}\label{lemma: condition of graph for universality}
    Cluster state $\ket{G_{D}^{(M,k)}}$ corresponding to the $k$-depth graph $G_{D}^{(M,k)}$ with $k \geq \frac{M}{2} + 1$ serves as a resource for universal LO circuit implementation. 
    In other words, given an arbitrary $M\times M$ unitary circuit matrix $U$, one can efficiently construct a set of phase-shift angles $\bm{\phi}_U$ such that ${\bf G}(G_{D}^{(M,k)}, \bm{\phi}_U) = \begin{pmatrix}
    U_{R} & -U_{I} \\ U_{I} & U_{R}
\end{pmatrix}$ for real part $U_{R}$ and imaginary part $U_{I}$ of the unitary matrix $U$.
\end{theorem}

Figure~\ref{fig: brickwork graph} further illustrates this MBLO procedure, showing that the state $\ket{G_{D}^{(M,k)}}$ with phase-shift angles $ \bm{\phi}_{U} $ in Eq.~\eqref{eq: universal graph and angles} can implement the brickwork decomposition of the LO circuit depicted in Fig.~\ref{fig: LOdecomposition}(a).

\begin{figure}[h]
\centering
\includegraphics[width=0.95\linewidth]{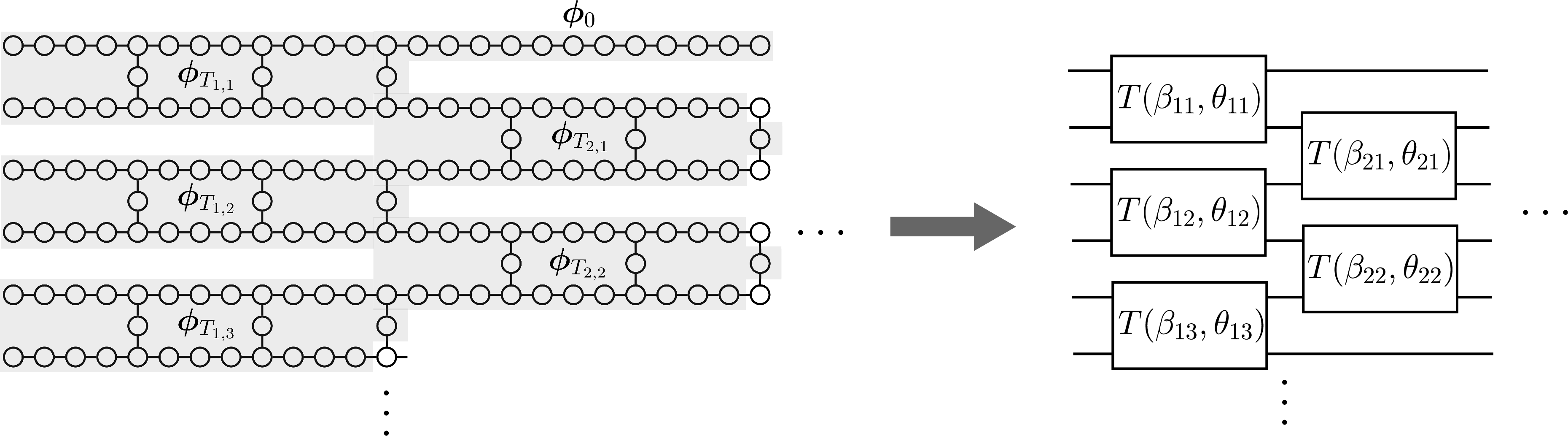}
\caption{Schematic of the cluster state $\ket{G_{D}^{(M,k)}}$ for the graph $G_{D}^{(M,k)}$ defined in Eq.~\eqref{eq: def of G_D^Mk}.
As discussed, with phase-shift angles $ \bm{\phi}_{U} $ given in Eq.~\eqref{eq: universal graph and angles}, this state realizes the brickwork decomposition of the LO circuit shown in Fig.~\ref{fig: LOdecomposition}(a).
}
\label{fig: brickwork graph}
\end{figure}

For convenience in later analysis, we define the \textit{universal} brickwork graph and the phase-shift angles.

\begin{definition}[Universal brickwork graph and universal phase-shift angles]\label{def: brickwork graph and angles}
    We define the universal brickwork graph $G_U$ as $G_{D}^{(M,k)}$ with $k \geq \frac{M}{2} + 1$.
    Also, given a unitary circuit matrix $U$ we define $\bm{\phi}_{U}$ as a set of universal phase-shift angles given in Eq.~\eqref{eq: universal graph and angles} for implementing $U$ by cluster state $\ket{G_U}$, such that ${\bf G}(G_{U}, \bm{\phi}_U) = \begin{pmatrix}
    U_{R} & -U_{I} \\ U_{I} & U_{R}
\end{pmatrix}$ for real part $U_{R}$ and imaginary part $U_{I}$ of the unitary matrix $U$.
\end{definition}

\section{Input state coupling via Bell measurement}\label{appendix: section: coupling}

Our preceding analysis implicitly assumed that the MBLO input states were the intrinsic ancilla modes (i.e., a product of squeezed-vacuum states) used to construct the CV cluster state.
In this section, we review the standard Bell-measurement construction in~\cite{larsen2020architecture, ukai2010universal, alexander2014noise} and explain how an arbitrary input state can be incorporated into the input mode of a cluster state through a Bell measurement that couples the state to the cluster-state input mode.

\begin{figure}[h]
\centering
\includegraphics[width=0.75\linewidth]{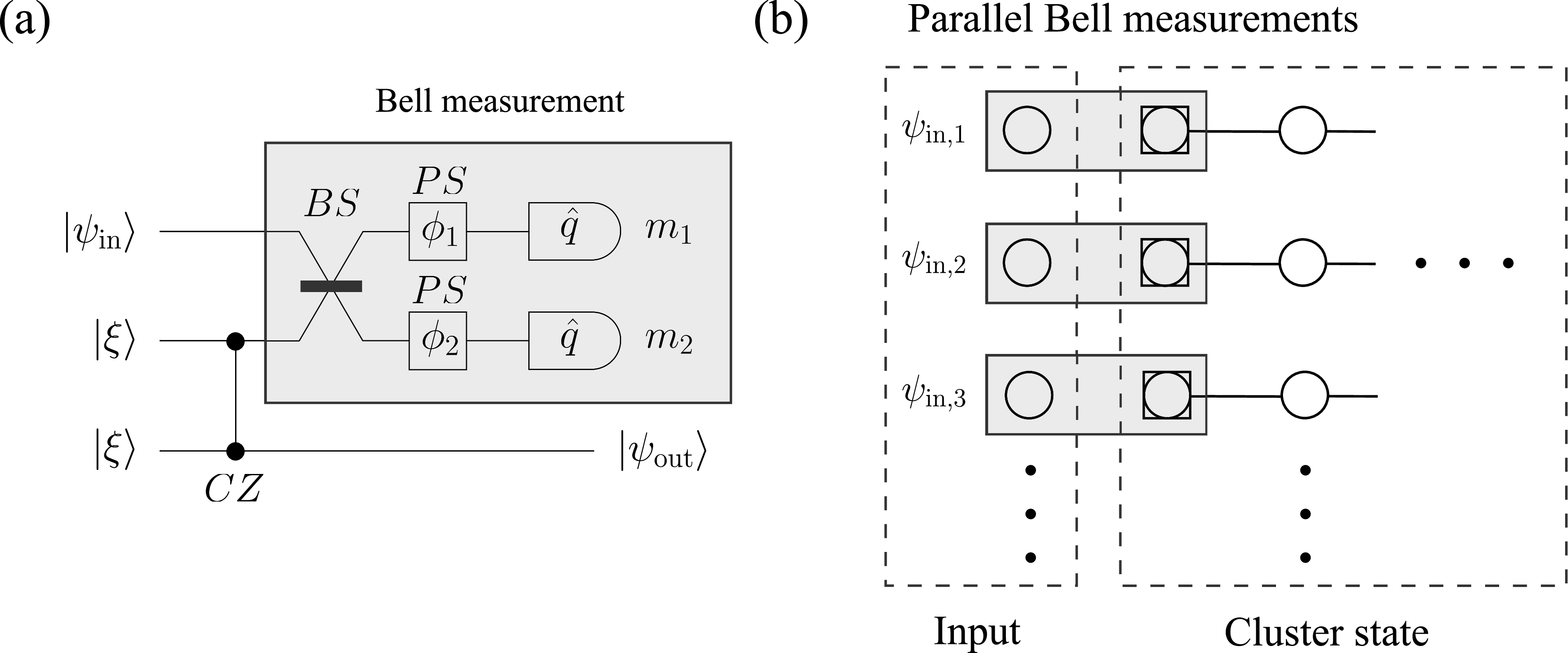}
\caption{Schematics for generalized teleportation scheme via Bell measurements, which teleports an arbitrary quantum state to an input port of a cluster state.}
\label{fig: bell measurement}
\end{figure}

We first schematize how to teleport an arbitrary single-mode input state to a cluster state in Fig.~\ref{fig: bell measurement} (a). 
The scheme consists of an input state, an entangled ancillary state (i.e., cluster state), and an entangled measurement (i.e., Bell measurement) performed jointly on the input state and one mode of the entangled ancillary state (see~\cite{larsen2020architecture, ukai2010universal, alexander2014noise} for more details). 
As illustrated in Fig.~\ref{fig: bell measurement} (a), the Bell measurement is implemented by first applying a balanced beam splitter on the input state and one mode of the entangled ancillary state, followed by a phase shift and homodyne measurement on $\hat{q}$ basis for each mode. 
Then, as described in~\cite{larsen2020architecture}, after measuring $m_1$ and $m_2$ through the Bell measurement, the output quadrature operators are related to the input quadrature operators by
\begin{align}
    \begin{pmatrix}
        \hat{q}_{\rm out} \\ \hat{p}_{\rm out} 
    \end{pmatrix}= \frac{1}{\sin \phi_{-}}\begin{pmatrix}
        \cos\phi_{+} + \cos\phi_{-} & \sin\phi_{+} \\
        -\sin\phi_{+} & \cos\phi_{+} - \cos\phi_{-}
    \end{pmatrix}\begin{pmatrix}
        \hat{q}_{\rm in} \\ \hat{p}_{\rm in} 
    \end{pmatrix} + \begin{pmatrix}
        -1 & 0 \\ 
        0 & 1
    \end{pmatrix}\begin{pmatrix}
        \hat{p}_2 \\ \hat{p}_3 
    \end{pmatrix} + \frac{\sqrt{2}}{\sin \phi_{-}}\begin{pmatrix}
       -\cos\phi_{2} & -\cos\phi_{1} \\
        \sin\phi_{2} & \sin\phi_{1} 
    \end{pmatrix}\begin{pmatrix}
        m_{1} \\ m_{2} 
    \end{pmatrix} ,
\end{align}
where $\phi_{\pm} = \phi_{1} \pm \phi_{2}$.
Since we aim to teleport the input state without additional single-mode operations, we can set the phase-shift angles to $\phi_{1} = -\phi_{2} = \frac{\pi}{4}$, which yields the identity operation. 
Hence, an arbitrary single-mode input state can be successfully teleported to the cluster state using this process, while introducing additional noise and displacement.

Based on this understanding, one can teleport an arbitrary $M$-mode input state to a cluster state by parallel Bell measurements, as depicted in Fig.~\ref{fig: bell measurement} (b).
By performing a Bell measurement on each mode, with phase-shift angles given as above for each Bell measurement, one obtains the following input-output relation:
\begin{align}\label{eq: input output for M bell measurements}
    \begin{pmatrix}
        \hat{\bm{q}}_{\rm out} \\ \hat{\bm{p}}_{\rm out}
    \end{pmatrix} =  \begin{pmatrix}
        \hat{\bm{q}}_{\rm in} \\ \hat{\bm{p}}_{\rm in}
    \end{pmatrix} + (-\mathbb{I}_{M}\oplus \mathbb{I}_{M})  \hat{\bm{p}} + \begin{pmatrix}
        -\mathbb{I}_{M} & -\mathbb{I}_{M} \\ -\mathbb{I}_{M} & \mathbb{I}_{M}
    \end{pmatrix}  \bm{m}.
\end{align}
Therefore, by the parallel Bell measurements illustrated in Fig.~\ref{fig: bell measurement} (b), an arbitrary $M$-mode input state can be successfully teleported to the cluster state, with additional noise and displacement denoted in the right-hand side of Eq.~\eqref{eq: input output for M bell measurements}.

It should be emphasized that, in order to teleport an input state to the target modes (in our case, input modes of the cluster state $\ket{G_{U}}$ for universal circuit implementation), the Bell measurement requires an ancillary state that is entangled with those target modes.
Specifically, as depicted in Fig.~\ref{fig: bell measurement} (a), teleporting a prepared input state at the first mode to the third mode requires an ancillary state on the second mode. 
This implies that, even with the preparation of both the input state and the cluster state $\ket{G_{U}}$, it is still insufficient for implementing a universal LO circuit on the prepared input state via Bell measurements as in Fig.~\ref{fig: bell measurement} (b).
Rather, one should prepare the \textit{appended} cluster state $\ket{G_0}$ corresponding to a concatenated graph $G_0 = (G_{H}^{(2)} \oplus \cdots \oplus G_{H}^{(2)}) \circ G_{U}$ to perform Bell measurements as in Fig.~\ref{fig: bell measurement} (b), which finally allows teleportation of an arbitrary input state to the input modes of $\ket{G_{U}}$.


\section{Our settings for measurement-based linear optics}\label{section: our settings supple}

We formally define the main computational problem regarding MBLO, which is the primary focus of our complexity analysis.
To describe our MBLO procedure, given an $M$-mode unitary circuit matrix $U$ and an $M$-mode input state $\rho_{\rm in}$, we first prepare the CV cluster state $\ket{G_0}$ corresponding to the $G_0 = (G_{H}^{(2)} \oplus \cdots \oplus G_{H}^{(2)})\circ G_{U}$, which is a concatenation of $(G_{H}^{(2)} \oplus \cdots \oplus G_{H}^{(2)})$ (required for Bell measurements) to the universal brickwork graph $G_{U}$ defined in Definition~\ref{def: brickwork graph and angles}.
Next, we apply Bell measurements between $\rho_{\rm in}$ and $\ket{G_0}$ as depicted in Sec.~\ref{appendix: section: coupling}, thus obtaining the cluster state $\ket{G_{U}}$ (with additional noise and displacement given in Eq.~\eqref{eq: input output for M bell measurements}), whose input is now updated to $\rho_{\rm in}$.
We then perform homodyne measurements on the obtained $\ket{G_U}$ with their angles given by the universal phase-shift angles $\bm{\phi}_U$ in Definition~\ref{def: brickwork graph and angles}, on all modes except the designated output modes.

By doing so, the output quadrature operators $(\hat{\bm{q}}_{\rm out}, \hat{\bm{p}}_{\rm out})^{T}$ satisfy the following input-output relation.
\begin{align}\label{eq: inputoutputrelation}
    \begin{pmatrix}
        \hat{\bm{q}}_{\rm out} \\ \hat{\bm{p}}_{\rm out}
    \end{pmatrix} = \textbf{G}_{U} \begin{pmatrix}
        \hat{\bm{q}}_{\rm in} \\ \hat{\bm{p}}_{\rm in}
    \end{pmatrix} + \textbf{N}_{U}   \hat{\bm{p}} + \textbf{D}_{U}  \bm{m}   ,
\end{align}
where $(\hat{\bm{q}}_{\rm in}, \hat{\bm{p}}_{\rm in})^{T}$ denotes the quadrature operators of $\rho_{\rm in}$, $\hat{\bm{p}}$ denotes the momentum quadrature operators of $\Omega(M^2)$ number of the squeezed vacuum ancilla states of the cluster state, and $\bm{m}$ denotes the $\Omega(M^2)$ number of measured outcomes on all the modes except the output modes. 
Also, $\textbf{G}_{U}$ is a target symplectic transformation, and $\textbf{N}_{U}$ and $\textbf{D}_{U}$ are induced noise and displacement matrices in our setup. 
Specifically, they are given by
\begin{align}\label{eq: matrices for our setup}
    &\textbf{G}_{U} = \begin{pmatrix}
    U_{R} & -U_{I} \\ U_{I} & U_{R}
    \end{pmatrix}, 
    \quad 
    \textbf{N}_{U} = \left[ \textbf{G}_{U} \cdot  (-\mathbb{I}_{M}\oplus \mathbb{I}_{M}) \;|\;{\bf N}(G_{U}, \bm{\phi}_U)\right] ,
    \quad \textbf{D}_{U} = \left[ \textbf{G}_{U} \cdot \begin{pmatrix}
        -\mathbb{I}_{M} & -\mathbb{I}_{M} \\ -\mathbb{I}_{M} & \mathbb{I}_{M}
    \end{pmatrix}  \; \bigg| \; {\bf D}(G_{U}, \bm{\phi}_U)\right] ,
\end{align}
where ${\bf N}(G_{U}, \bm{\phi}_U)$ and ${\bf D}(G_{U}, \bm{\phi}_U)$ denote the noise and displacement matrix induced when implementing $U$ via the cluster state $\ket{G_{U}}$ as defined in Definition~\ref{def: matrices for input output relation}, and $[ A \;|\; B]$ denotes the concatenation of the matrices $A$ and $B$ along column axis.

Here, we apply feed-forward displacement $-\textbf{D}_U \bm{m}$ at the output state to undo the unwanted displacement.
Note that we can compute in advance the displacement matrix $\textbf{D}_U$ before obtaining the homodyne measurement outcome $\bm{m}$.
Specifically, when $U$ is specified such that $\bm{\phi}_U$ is also specified, we can compute the matrix ${\bf D}(G_{U}, \bm{\phi}_U)$ by matrix transformation rule in Eq.~\eqref{eq: displacement matrix after graph concatenation} and Eq.~\eqref{eq: displacement matrix after graph sum}.
Then, after obtaining the homodyne outcome $\bm{m}$, we compute $\textbf{D}_U \bm{m}$ with a pre-computed $\textbf{D}_U$, and operate feed-forward displacement to the output state by $-\textbf{D}_U \bm{m}$.

Let $\rho_{\rm out}$ be the final output state after the whole MBLO process. 
To further specify the problem, we consider a Gaussian input state $\rho_{\rm in}$; then the final output state $\rho_{\rm out}$ is also Gaussian because the entire process conserves Gaussianity. 
Hence, input and output states can be fully described by their mean vectors and covariance matrices.
By the input-output relation in Eq.~\eqref{eq: inputoutputrelation}, after feed-forward displacement by $-\textbf{D}_U \bm{m}$, they are related by
\begin{align}
    \bm{\mu} &= \textbf{G}_{U}\bm{\mu}_{\rm in}   ,  \label{eq: mean of output state} \\
    V &= \textbf{G}_{U}V_{\rm in}\textbf{G}_{U}^T + \frac{e^{-2r}}{2}\textbf{N}_{U}\textbf{N}_{U}^T    , \label{eq: covariance of output state}
\end{align}
where we denote $(\bm{\mu}_{\rm in}, V_{\rm in})$ and $(\bm{\mu}, V)$ as the mean vector and the covariance matrix of $\rho_{\rm in}$ and $\rho_{\rm out}$, respectively.

As stated in the main text, for the output state $\rho_{\rm out}$ with its mean vector and covariance matrix specified above, the computational task of interest is the \textit{weak-simulation} of $\rho_{\rm out}$ in the local boson-number basis, formalized as follows.

\begin{definition}[MBLO-based sampling]\label{def: MBLO-based sampling supple}
We define the MBLO-based sampling task as follows. 
On input an $M \times M$ unitary matrix $U$ and a polynomial-size description of an $M$-mode input Gaussian state $\rho_{\rm in}$ with its mean vector and covariance matrix $(\bm{\mu}_{\rm in}, V_{\rm in})$, the task is to output a sample drawn from the probability distribution
\begin{align}\label{eq: probability distribution for photon number outcome}
    p( \bm{n}) = \Tr\left[\ket{\bm{n}}\bra{\bm{n}} \rho_{\rm out} \right]  ,
\end{align} 
where $\rho_{\rm out}$ is the Gaussian state with its mean vector and covariance matrix given by Eq.~\eqref{eq: mean of output state} and Eq.~\eqref{eq: covariance of output state}, and $\ket{\bm{n}}$ denotes the local boson-number basis represented by $\bm{n} \in \mathbb{N}^{M}$.
\end{definition}

Here, a polynomial-size description means that the mean vector and the covariance matrix of $\rho_{\rm in}$ admit a description using polynomial-size bits and can be computed in polynomial time, which holds unless $\rho_{\rm in}$ has super-exponential energy.

\section{Easiness regime}\label{appendix: section: easiness}

We here restate our easiness result (Theorem~1 in the main text) and provide a detailed proof.

\begin{theorem}[Easiness regime]\label{theorem: easiness}
     Given an arbitrary $U$ and $\rho_{\rm in}$, there exists a squeezing level threshold $r_{\rm th} = O\left(\log M\right)$ such that for any $r \leq r_{\rm th}$, the MBLO-based sampling task in Definition~\ref{def: MBLO-based sampling supple} can be simulated in classical polynomial time.
\end{theorem}

\begin{proof}[Proof of Theorem~\ref{theorem: easiness}]

Using the phase space representation given in~\cite{rahimi2016sufficient}, the output probability can be expressed as
\begin{align}
    p( \bm{n} ) = \Tr\left[\ket{\bm{n}}\bra{\bm{n}} \rho_{\rm out} \right] 
    = \pi^{M} \int d\bm{\alpha} Q_{\bm{n}}(\bm{\alpha})P_{\rm out}(\bm{\alpha}), 
\end{align}
where $\bm{\alpha}$ denotes a $2M$-dimensional vector representing a phase space point, $Q_{\bm{n}}(\bm{\alpha})$ denotes the Husimi Q representation of the Fock state $\ket{\bm{n}}\bra{\bm{n}}$~\cite{husimi1940some}, and $P_{\rm out}(\bm{\alpha})$ denotes a Glauber-Sudarshan P representation of the output state $\rho_{\rm out}$~\cite{glauber1963photon, sudarshan1963equivalence}.

By the classical simulability criteria presented in~\cite{rahimi2016sufficient}, efficient sampling from $p(\bm{n})$ is possible if both $Q_{\bm{n}}$ and $P_{\rm out}(\bm{\alpha})$ are non-negative and efficiently sampleable probability distributions. 
Since $Q_{\bm{n}}$ follows a Poisson distribution and therefore always non-negative and efficiently sampleable, the remaining problem is to determine whether $P_{\rm out}(\bm{\alpha})$ is likewise non-negative and efficiently sampleable.
Moreover, because $\rho_{\rm out}$ is a Gaussian state, we can represent $P_{\rm out}(\bm{\alpha})$ as 
\begin{align}
    P_{\rm out}(\bm{\alpha}) = \frac{1}{\pi^{M}|V-\frac{1}{2}\mathbb{I}_{2M}|}\exp[-({\bm{\alpha}-\bm{\mu}})^{T}\left(V-\frac{1}{2}\mathbb{I}_{2M}\right)^{-1}({\bm{\alpha}-\bm{\mu}})  ],
\end{align}
where $|\cdot|$ denotes the determinant. 
Therefore, $P_{\rm out}(\bm{\alpha})$ becomes non-negative and efficiently computable whenever $V - \frac{1}{2}\mathbb{I}_{2M} > 0$. 
Note that, a sufficient condition for the positive definiteness of $V - \frac{1}{2}\mathbb{I}_{2M}$ is
\begin{align}\label{eq: psd}
    \textbf{N}_{U}\textbf{N}_{U}^T - e^{2r}\mathbb{I}_{2M} \geq 0 ,
\end{align}
since this implies
\begin{align}
    V - \frac{1}{2}\mathbb{I}_{2M} = \textbf{G}_{U}V_{\rm in}\textbf{G}_{U}^T + \frac{e^{-2r}}{2}\textbf{N}_{U}\textbf{N}_{U}^T - \frac{1}{2}\mathbb{I}_{2M} > 0 ,
\end{align}
where we used the fact that $V_{\rm in}$ is positive definite for any input state $\rho_{\rm in}$.


To show the positive semi-definiteness in Eq.~\eqref{eq: psd}, we investigate the minimum eigenvalue of $\textbf{N}_{U}\textbf{N}_{U}^T$.
Note that as depicted in Eq.~\eqref{eq: matrices for our setup}, $\textbf{N}_{U}\textbf{N}_{U}^T$ can be expressed as 
\begin{align}
    \textbf{N}_{U}\textbf{N}_{U}^T = \mathbb{I}_{2M} + {\bf N}(G_{U}, \bm{\phi}_U) {\bf N}(G_{U}, \bm{\phi}_U)^{T}, 
\end{align}
where $G_{U}$ and $\bm{\phi}_U$ are universal brickwork graph and universal phase shfit angles for implementing $U$ defined in Definition~\ref{def: brickwork cluster state and angle}, and ${\bf N}(G_{U}, \bm{\phi}_U)$ is a noise matrix induced when implementing $U$ via the cluster state $\ket{G_{U}}$ defined in Definition~\ref{def: matrices for input output relation}.
Hence, by Weyl's inequality~\cite{weyl1912asymptotische}, the minimum eigenvalue of $\textbf{N}_{U}\textbf{N}_{U}^T$ is at least as large as the minimum eigenvalue of ${\bf N}(G_{U}, \bm{\phi}_U) {\bf N}(G_{U}, \bm{\phi}_U)^{T}$ plus $1$.

To proceed, we now introduce a crucial lemma, which gives a lower bound of the minimum eigenvalue of the symmetric product ${\bf N}(G_{U}, \bm{\phi}_U) {\bf N}(G_{U}, \bm{\phi}_U)^{T}$.

\begin{lemma}\label{lemma: minimum eigenvalue for universal brickwork graph state}
    Let ${\bf N}(G_{U}, \bm{\phi}_U)$ be a noise matrix for a universal brickwork graph $G_{U}$ and the corresponding set of universal phase-shift angles $\bm{\phi}_U$ defined in Definition~\ref{def: brickwork graph and angles}. 
    Then, the minimum eigenvalue $\lambda_{min}$ of ${\bf N}(G_{U}, \bm{\phi}_U) {\bf N}(G_{U}, \bm{\phi}_U)^{T}$ satisfies 
    \begin{align}
        \lambda_{min} \geq (3-2\sqrt{2})M  + 1  .
    \end{align}
\end{lemma}
\begin{proof}
    See Sec.~\ref{appendix: section: lemma for minimum eigenvalue}.
\end{proof}
Therefore, by Lemma~\ref{lemma: minimum eigenvalue for universal brickwork graph state}, we finally obtain the squeezing level bound $r_{\rm th}$ for classical simulability:
\begin{align}
    e^{2r} \leq (3-2\sqrt{2})M + 2   \quad \Rightarrow \quad r \leq r_{\rm th} =  \frac{1}{2}\log((3-2\sqrt{2})M + 2)  = O ( \log M ),
\end{align}
concluding the proof.

\end{proof}

\section{Hardness regime I}

This section restates our first hardness result (Theorem~2 in the main text) and provides its detailed proof.

To this end, we first introduce a general setting of Gaussian boson sampling (GBS) and formally define the (approximate) classical simulation of GBS, i.e., sampling within bounded total variation distance (TVD), following the standard convention in the literature~\cite{hamilton2017gaussian, kruse2019detailed, deshpande2022quantum, go2025sufficient}.

\begin{definition}[Approximate GBS]\label{def: GBS}
    Fix an error parameter $\kappa < 1$. 
    Consider an input state $\sigma_{\rm in} = \ket{r_0}\bra{r_0}^{K}\otimes\ket{0}\bra{0}^{M-K}$ for a single-mode squeezed vacuum state $\ket{r_0}$ with bounded squeezing level $r_0 = O(1)$ and $K \leq M$.
    For an $M$-mode LO circuit $U$, let $\sigma_U$ denote the output state obtained by evolving $\sigma_{\rm in}$ through $U$.
    The approximate GBS refers to the task that, on input an LO circuit $U$, samples an outcome $\bm{n} \in \mathbb{N}^{M}$ according to any probability distribution $\bar{q}(\bm{n})$ satisfying
    \begin{align}
        D_{\rm TV}(\bar{q}, q) \le \kappa  ,
    \end{align}
    where $D_{\rm TV}$ denotes TVD and 
    \begin{align}
        q(\bm{n}) = \Tr\!\left[\ket{\bm{n}}\!\bra{\bm{n}}\,\sigma_U\right]
    \end{align}
    denotes the ideal output distribution of GBS.
\end{definition}

Evidenced by the hardness results for GBS, we here conjecture that this approximate GBS in Definition~\ref{def: GBS} is classically intractable for suitably chosen $K$ and $\kappa$~\cite{hamilton2017gaussian, kruse2019detailed, deshpande2022quantum, go2025sufficient}.

\begin{conjecture}[Classical hardness of GBS]\label{conj: hardness of GBS}
    There exists $K \leq M$ and an error parameter $\kappa = O({\rm poly}(M)^{-1})$ such that no polynomial-time classical algorithm can simulate the approximate GBS in Definition~\ref{def: GBS}.
\end{conjecture}

Based on these arguments, we restate our first hardness result.

\begin{theorem}[Hardness regime I]\label{theorem: weak hardness}
    Suppose that Conjecture~\ref{conj: hardness of GBS} holds.
    Then, there exists a threshold $r_{\rm th} = \Omega\left(\log M\right)$ such that for any squeezing level $r \geq r_{\rm th}$, no polynomial-time classical algorithm can simulate the MBLO-based sampling in Definition~\ref{def: MBLO-based sampling supple}.
\end{theorem}

To prove Theorem~\ref{theorem: weak hardness}, let $\rho_{\rm id}$ be the output state $\rho_{\rm out}$ of the MBLO scheme in Definition~\ref{def: MBLO-based sampling supple} in the infinite-squeezing limit $r \rightarrow \infty$. 
Set the input state of MBLO be $\rho_{\rm in} = \ket{r_0}\bra{r_0}^{K}\otimes\ket{0}\bra{0}^{M-K}$ for a single-mode squeezed vacuum state $\ket{r_0}$ with bounded squeezing level $r_0 = O(1)$, such that $\rho_{\rm id}$ coincides with the GBS output state $\sigma_{\rm in}$ in Definition~\ref{def: GBS}.
Namely, $\rho_{\rm id}$ has zero mean and covariance matrix $V_{\rm id} = \textbf{G}_{U}V_{\rm in}\textbf{G}_{U}^T$, with ${\bf G}$ being the symplectic (orthogonal) matrix corresponding to $U$ and $V_{\rm in}$ being the covariance matrix of $\rho_{\rm id}$.


We now introduce a crucial lemma, which quantifies the required squeezing level to have a bounded TVD between $\rho_{\rm id}$ and $\rho_{\rm out}$.

\begin{lemma}\label{lemma: tvd by finite squeezing}
    Let $\rho_{\rm id}$ be a Gaussian state with covariance matrix $V_{\rm id} = {\bf G}_{U}V_{\rm in} {\bf G}_{U}^{T} $ for ${\bf G}$ being the symplectic matrix corresponding to $U$ and $V_{\rm in}$ being the covariance matrix of the product squeezed state $\rho_{\rm in}$.
    Let $\rho_{\rm out}$ be the corresponding state obtained by the MBLO procedure described in Sec.~\ref{section: our settings supple}.
    Then, for any $0 < \beta < 1$, when
    \begin{align}\label{eq: lemma: tvd by finite squeezing}
        r \ge \frac{1}{2} \log( \frac{ \left\Vert  {\bf N}_{U}{\bf N}_{U}^T \right\Vert_{F}\| V_{\rm in}^{-1} \|_{F} }{ 8 \beta^2 })   ,
    \end{align}
    the TVD between $\rho_{\rm id}$ and $\rho_{\rm out}$ is upper bounded by $\beta$.
\end{lemma}

We defer a proof of Lemma~\ref{lemma: tvd by finite squeezing} to the end of this section. 
Lemma~\ref{lemma: tvd by finite squeezing} implies that when the squeezing level satisfies Eq.~\eqref{eq: lemma: tvd by finite squeezing}, the distribution of MBLO-based sampling in Definition~\ref{def: MBLO-based sampling supple} deviates from the distribution of GBS by at most $\beta$ in TVD.

Note that the right-hand side of Eq.~\eqref{eq: lemma: tvd by finite squeezing} involves the norm of ${\bf N}_{U}{\bf N}_{U}^T$. 
To obtain a lower bound on $r$, we bound this term from above, as stated follows.

\begin{lemma}\label{lemma: norm of noise matrix}
    For an arbitrary $U$, the Frobenius norm of the noise matrix product ${\bf N}_{U}{\bf N}_{U}^T$ satisfies 
    \begin{align}
    \left\Vert  {\bf N}_{U}{\bf N}_{U}^T \right\Vert_{F}  =   O(Mk)  ,
    \end{align}
    where $k \geq M/2 + 1$ as given in Definition~\ref{def: brickwork graph and angles}.
\end{lemma}
\begin{proof}
    See Sec.~\ref{appendix: section: norm of noise matrix}.
\end{proof}

Now suppose Conjecture~\ref{conj: hardness of GBS} holds; that is, there exists $\kappa = \text{poly}(M)^{-1}$ such that approximate GBS within TVD error $\kappa$ is hard to classically simulate.
Here, by our definition of $\rho_{\rm in}$, $\| V_{\rm in}^{-1} \|_{F}$ is bounded to at most $O(\sqrt{M})$.
Combining these facts with Lemma~\ref{lemma: norm of noise matrix}, it follows that there exists a threshold $r_{\rm th} = \Omega\left(\log M\right)$ such that for all $r \geq r_{\rm th}$, the MBLO-based sampling is classically intractable, thus concluding the proof of Theorem~\ref{theorem: weak hardness}.

\begin{proof}[Proof of Lemma~\ref{lemma: tvd by finite squeezing}]

By~\cite{marian2012uhlmann, spedalieri2012limit}, the quantum fidelity between $\rho_{\rm out}$ and $\rho_{\rm id}$, both having zero means with $\rho_{\rm id}$ being pure, can be written as
\begin{align}\label{eq: fidelity between fin and id}
    F_{\rho_{\rm out}, \rho_{\rm id}} = \frac{1}{\sqrt{|V + V_{\rm id}|}}.
\end{align}
Note that $\rho_{\rm id}$ has zero mean and covariance matrix $V_{\rm id} = \textbf{G}_{U}V_{\rm in}\textbf{G}_{U}^T$, where the covariance matrix of the input squeezed vacuum states $V_{\rm in}$ can be further decomposed as $V_{\rm in} = S(\mathbb{I}_{2M}/2)S^{T}$.
Here, $\mathbb{I}_{2M}/2$ is the covariance matrix of the vacuum state and $S$ is a (diagonal) symplectic matrix representing the squeezing operation used to prepare the input state.
Hence, the determinant in the denominator above can be bounded by 
\begin{align}
    |V + V_{\rm id}| 
    &= \left|2V_{\rm id} + \frac{e^{-2r}}{2} {\bf N}_{U}{\bf N}_{U}^T  \right|
    = \left| \mathbb{I}_{2M} + \frac{e^{-2r}}{2}S^{-1} \textbf{G}_{U}^{T} {\bf N}_{U}{\bf N}_{U}^T \textbf{G}_{U} S^{-T} \right| \\
    &\le \left(1 + \frac{1}{2M}\Tr[\frac{e^{-2r}}{2}S^{-1} \textbf{G}_{U}^{T} {\bf N}_{U}{\bf N}_{U}^T \textbf{G}_{U} S^{-T}] \right)^{2M} \label{eq: AM-GM}\\
    &\le \left(1 + \frac{e^{-2r}}{4M}\sqrt{\Tr[\left( {\bf N}_{U}{\bf N}_{U}^T \right)^2]}\sqrt{\Tr[\left( \textbf{G}_{U} S^{-T}S^{-1} \textbf{G}_{U}^{T} \right)^2]} \right)^{2M} \label{eq: cauchy schwarz}\\
    &= \left(1 + \frac{e^{-2r}}{8M}\|  {\bf N}_{U}{\bf N}_{U}^T \|_{F} \| V_{\rm in}^{-1} \|_{F} \right)^{2M}    ,
\end{align}
where we used the AM–GM inequality and the positive definiteness of the matrix $S^{-1} \textbf{G}_{U}^{T} \textbf{N}\textbf{N}^{T} \textbf{G}_{U} S^{-T}$ in Eq.~\eqref{eq: AM-GM}, and used the Cauchy-Schwarz inequality in Eq.~\eqref{eq: cauchy schwarz}.  
To simplify the form, we here assume that $r$ is large enough to satisfy
\begin{align}
    \frac{e^{-2r}}{8} \|{\bf N}_{U}{\bf N}_{U}^T \|_{F}  \| V_{\rm in}^{-1} \|_{F} < 1   .
\end{align}
Then, as $(1+x/M)^{M} \le (1-x)^{-1}$ for $0\le x < 1$, we have 
\begin{align}
    \left| V + V_{\rm id} \right| \le \left(1 - \frac{e^{-2r}}{8}\|  {\bf N}_{U}{\bf N}_{U}^T \|_{F} \| V_{\rm in}^{-1} \|_{F} \right)^{-2}   ,
\end{align}
and by substituting the above inequality in Eq.~\eqref{eq: fidelity between fin and id}, we obtain the bound
\begin{align}
    F_{\rho_{\rm out}, \rho_{\rm id}} \ge 1 - \frac{e^{-2r}}{8}\|  {\bf N}_{U}{\bf N}_{U}^T \|_{F} \| V_{\rm in}^{-1} \|_{F} \quad\Rightarrow\quad 1 - F_{\rho_{\rm out}, \rho_{\rm id}} \leq \frac{e^{-2r}}{8}\| {\bf N}_{U}{\bf N}_{U}^T \|_{F} \| V_{\rm in}^{-1} \|_{F} .
\end{align}
Let ${\bf D}_{\rho_{\rm out}, \rho_{\rm id}} $ denote the TVD between the boson-number distributions of $\rho_{\rm out}$ and $\rho_{\rm id}$. 
Then, ${\bf D}_{\rho_{\rm out}, \rho_{\rm id}} $ can be upper bounded in terms of the quantum infidelity between $\rho_{\rm out}$ and $\rho_{\rm id}$~\cite{fuchs1999cryptographic} by
\begin{align}
    {\bf D}_{\rho_{\rm out}, \rho_{\rm id}} \le \sqrt{1 - F_{\rho_{\rm out}, \rho_{\rm id}}} \le \frac{e^{-r}}{2\sqrt{2}}\sqrt{\|{\bf N}_{U}{\bf N}_{U}^T \|_{F} \| V_{\rm in}^{-1} \|_{F} }   .
\end{align}
Therefore, for the error parameter $\beta$ with $0 < \beta < 1$, by setting the ancillary squeezing parameter $r$ satisfying
\begin{align}
    r \ge \frac{1}{2} \log(\frac{\|{\bf N}_{U}{\bf N}_{U}^T \|_{F} \| V_{\rm id}^{-1} \|_{F} }{ 8 \beta^2 })   ,
\end{align}
then the TVD is upper bounded by $\beta$.

\end{proof}

\section{Hardness regime II}

In this section, we restate our second hardness result (Theorem~3 in the main text) and provide a detailed proof. 
More concretely, our second hardness result can be stated as follows. 

\begin{theorem}[Hardness regime II]\label{theorem: strong hardness}
Let $\mathcal{M}$ be an oracle for the MBLO-based sampling defined in Definition~\ref{def: MBLO-based sampling supple}, with an implicitly fixed squeezing level $r$.
For any constant $\Lambda >0$, there exists a squeezing level threshold $r_{\rm th} = \Omega\left(M^{\Lambda}\right)$ such that $\rm{P}^{\#P} \subseteq \rm{BPP}^{NP^{\mathcal{M}}}$ holds for any $r \geq r_{\rm th}$.
\end{theorem}	


Consequently, in the high-squeezing regime specified by Theorem~\ref{theorem: strong hardness}, the existence of an efficient classical algorithm for MBLO-based sampling would imply a collapse of the polynomial hierarchy (PH) to $\rm{BPP}^{NP}$ via Toda's theorem $\rm{PH} \subseteq \rm{P}^{\rm{\#P}}$~\cite{toda1991pp}.
Under the standard assumption that the PH does not collapse, we conclude that MBLO-based sampling is classically hard whenever the squeezing exceeds this threshold.

\subsection{Proof sketch}\label{sec: hardness proof sketch}

We begin by outlining the proof of Theorem~\ref{theorem: strong hardness}.
Here, our high-level strategy is to establish Stockmeyer's reduction~\cite{stockmeyer1985approximation}. 
Specifically, given oracle access to the sampler $\mathcal{M}$ that outputs samples according to $p(\bm{n})$ in Definition~\ref{def: MBLO-based sampling supple}, Stockmeyer's algorithm~\cite{stockmeyer1985approximation} enables a multiplicative estimation of $p(\bm{n})$ to within a desired accuracy in complexity class ${\rm BPP^{NP}}$. 
We then show that such an estimate suffices to solve a \#P-hard problem when the squeezing level exceeds a certain threshold, which yields $\rm{P}^{\#P} \subseteq \rm{BPP}^{NP^{\mathcal{M}}}$ as stated in Theorem~\ref{theorem: strong hardness}.

More concretely, we encode a quantity (specified in the proof) that is \#P-hard to multiplicatively estimate into an ideal output probability, $q(\bm{n})$, of MBLO-based sampling in Definition~\ref{def: MBLO-based sampling supple} in the infinite-squeezing limit $r \rightarrow \infty$.
We then determine a squeezing level threshold above which $p(\bm{n})$ is multiplicatively close to $q(\bm{n})$ to the required accuracy.
Under this condition, a multiplicative estimate of $p(\bm{n})$ obtained via Stockmeyer's algorithm (given access to $\mathcal{M}$) yields a multiplicative estimate of the underlying \#P-hard quantity.

\subsection{A detailed proof of Theorem~\ref{theorem: strong hardness}}

First, consider the \textit{ideal} output probability of GBS~\cite{hamilton2017gaussian, kruse2019detailed, deshpande2022quantum, go2025sufficient}, corresponding to a Gaussian state that has zero mean and covariance matrix $V_{\rm id} = \textbf{G}_{U}V_{\rm in}\textbf{G}_{U}^T$, where $V_{\rm in}$ is the covariance matrix of input state $\rho_{\rm in}$ for $\mathcal{M}$ in Definition~\ref{def: MBLO-based sampling supple} and $G_{U}$ is the symplectic (orthogonal) matrix corresponding to the input unitary matrix $U$ as in Eq.~\eqref{eq: matrices for our setup};
note that this state corresponds to $\rho_{\rm out}$ in Definition~\ref{def: MBLO-based sampling supple} but in the infinite-squeezing limit $r \rightarrow \infty$.

Let $\bm{n} = (n_1,\dots,n_{M})$ be a total $N$ boson outcome over first $N \leq M$ modes, such that $n_{i} = 1$ for $i \in [N]$ and $n_{i} = 0$ for $i \in [M]\setminus[N]$.
The ideal output probability $q(\bm{n})$ for the outcome $\bm{n}$ can then be expressed as~\cite{hamilton2017gaussian, kruse2019detailed, deshpande2022quantum, go2025sufficient}
\begin{align}\label{eq: ideal output probability}
    q(\bm{n}) = \frac{1}{\sqrt{|\Sigma_Q|}}\Haf(A_{\bm{n}\oplus \bm{n}}),
\end{align}
where $\Sigma_Q = \Sigma + \frac{1}{2}\mathbb{I}_{2M}$, $\Sigma = SV_{\rm id}S^{\dag}$ for basis transformation matrix 
\begin{align}
    S = \frac{1}{\sqrt{2}}\begin{pmatrix}
        \mathbb{I}_{M} & i\mathbb{I}_{M} \\ \mathbb{I}_{M} & -i\mathbb{I}_{M}
    \end{pmatrix},
\end{align}
and $A$ matrix is given by
\begin{align}\label{eq: def of A}
    A = \mathbb{X}_{2M}(\mathbb{I}_{2M} - \Sigma_Q^{-1}) \quad\text{where}\quad \mathbb{X}_{2M} = \begin{pmatrix}
        0 & \mathbb{I}_{M} \\ 
        \mathbb{I}_{M} & 0
    \end{pmatrix},
\end{align}
and $A_{\bm{n}\oplus \bm{n}}$ is a $2N\times 2N$ matrix obtained by taking $n_i$ copies of $i$-th and $(i + M)$-th rows and columns of $A$ for all $i \in [M]$.

Similarly, using the above conventions, for the same input state $\rho_{\rm in}$ and the same input unitary matrix $U$, the output probability $p(\bm{n} )$ of $\mathcal{M}$ can be expressed as 
\begin{align}\label{eq: p(n) supple}
    p(\bm{n} ) =  \frac{1}{\sqrt{|\Sigma'_Q|}}\Haf(A'_{\bm{n}\oplus\bm{n}}) ,
\end{align}
where now $\Sigma'_Q = SV_{\rm id}S^{\dag} + \frac{e^{-2r}}{2}S\textbf{N}_{U}\textbf{N}_{U}^TS^{\dag} + \frac{1}{2}\mathbb{I}_{2M}$ and $A' = \mathbb{X}_{2M}(\mathbb{I}_{2M} - {\Sigma'_Q}^{-1})$. 
Here, we can rewrite $\Sigma'_{Q}$ in terms of $\Sigma_{Q}$ as 
\begin{align}\label{eq: definition of noise matrix}
    \Sigma'_{Q} = \Sigma_{Q} + \delta\Sigma \quad \text{where}\quad \delta\Sigma = \frac{e^{-2r}}{2}S\textbf{N}_{U}\textbf{N}_{U}^TS^{\dag}.
\end{align}
One can readily check that, in the infinite squeezing limit $r \rightarrow \infty$, we have $\Sigma_{Q}' \rightarrow \Sigma_{Q}$ and accordingly $p(\bm{n}) \rightarrow q(\bm{n})$.

As noted, our high-level strategy is twofold: (i) encode a quantity (specified in Lemma~\ref{theorem: sharp P hardness of computing permanent} below) that is \#P-hard to estimate multiplicatively into an ideal output probability $q(\bm{n})$, and (ii) determine a squeezing level threshold above which $p(\bm{n})$ is multiplicatively close to $q(\bm{n})$ to the required accuracy.
Taken together, these two ingredients yield the desired statement $\rm{P}^{\#P} \subseteq \rm{BPP}^{NP^{\mathcal{M}}}$ via Stockmeyer's reduction~\cite{stockmeyer1985approximation}.

\subsubsection{Encoding hard-to-estimate quantity to the ideal probability}

We first identify the hard-to-estimate quantity that will be encoded into the ideal output probability $q(\bm{n})$ in Eq.~\eqref{eq: ideal output probability}. 



\begin{lemma}[Corollary from~\cite{aaronson2011linear}]\label{theorem: sharp P hardness of computing permanent}
    For a matrix $W' \in \{-1,0,1\}^{N_0 \times N_0}$, it is $\rm{\#P}$-hard to compute an multiplicative estimate $\bar{W}$ of ${\rm{Per}}(W')^2$ satisfying 
    \begin{align}
    (1-g){\rm{Per}}(W')^2 \leq \bar{W} \leq (1+g){\rm{Per}}(W')^2, 
    \end{align}
    for a sufficiently small multiplicative factor $g \ll 1$.
\end{lemma}

We now specify the inputs $U$ and $\rho_{\rm in}$ to $\mathcal{M}$ such that $q(\bm{n})$ encodes $\Per(W')^2$ by $q(\bm{n}) \propto \Per(W')^2$.
Importantly, as recently proposed in~\cite{bouland2024average}, we embed a small $W' \in \{-1,0,1\}^{N_0 \times N_0}$ into a relatively large unitary matrix $U \in {\rm U}(M)$ (i.e., $M \gg N_{0}$).
Indeed, as it turns out, this approach leads to a reduced requirement on the squeezing level for our conclusion $\rm{P}^{\#P} \subseteq \rm{BPP}^{NP^{\mathcal{M}}}$.


To formalize this encoding, for a given $N_0\times N_0$ worst-case matrix $W' \in \{-1,0,1\}^{N_0 \times N_0}$ and a fixed constant $\Lambda >0$, we define $N$ such that it satisfies $N =  \left\lceil N_0^{1/\lambda} \right\rceil$ for any constant $\lambda < \Lambda$ that makes even $N$.
We then construct an $N\times N$ matrix $W$ given by
\begin{align}\label{eq: definition for W}
    W = \begin{pmatrix}
        0_{\frac{ N}{2}} & W'\oplus I_{\frac{N}{2} - N_0} \\ (W')^{T}\oplus I_{\frac{N}{2} - N_0} & 0_{\frac{ N}{2}}
    \end{pmatrix} .
\end{align}
Observe that, by the structure of $W$, we have 
\begin{align}\label{eq: haf-W = per-W'}
    \Haf(W) = \Per(W'\oplus I_{\frac{N}{2} - N_0}) = \Per(W').
\end{align}
Moreover, as $W$ is a real symmetric matrix, it allows a matrix decomposition of the form $W = YY^T$ for an $N\times N$ real matrix $Y$.
Here, as stated in the following lemma, this $Y$ matrix can be embedded as a submatrix into an $M\times M$ unitary matrix (under proper normalization), provided that $M \geq 2N$.

\begin{lemma}[\cite{aaronson2011computational}]\label{lemma: encoding worst-case matrix in unitary matrix}
    Let $Y \in \mathbb{C}^{N \times N}$. Then, for all $M \geq 2N$ and $\varepsilon \leq \frac{1}{\Vert Y \Vert_2}$ for the spectral norm $\Vert \cdot \Vert_2$ ($p = 2$ norm), there exists an $M\times M$ unitary matrix $U$ that contains $\varepsilon Y$ as a top-leftmost $N\times N$ submatrix, where the corresponding $U$ can be computed in polynomial time given $Y$.
\end{lemma}

Hence, by Lemma~\ref{lemma: encoding worst-case matrix in unitary matrix}, we can construct an $M\times M$ unitary matrix $U$ that contains $\frac{1}{\Vert Y \Vert_2} Y$ as a top-leftmost $N\times N$ submatrix. 
Now, consider an input state $\rho_{\rm in} = \ket{r_0}\bra{r_0}^{N}\otimes\ket{0}\bra{0}^{M-N}$ for a single-mode squeezed vacuum state $\ket{r_0}$ with its squeezing level $r_0$, where $M \geq 2N$.
Then, for the input state $\rho_{\rm in}$ and the circuit unitary matrix $U$, we have~\cite{hamilton2017gaussian, kruse2019detailed, deshpande2022quantum}
\begin{align}
    &\Sigma_{Q} = \begin{pmatrix}
        U(\cosh^2 r_0\mathbb{I}_{N}\oplus \mathbb{I}_{M-N})U^{\dag} & U(\cosh r_0\sinh r_0\mathbb{I}_{N}\oplus 0_{M-N})U^{T} \\
        U^*(\cosh r_0\sinh r_0\mathbb{I}_{N}\oplus 0_{M-N})U^{\dag} &   U^{*}(\cosh^2 r_0\mathbb{I}_{N}\oplus \mathbb{I}_{M-N})U^{T}
    \end{pmatrix} , \label{eq: Sigma Q for hardness 2}   
\end{align}
where $0_{M-N}$ is an $M-N$-dimensional all-zero matrix.
This results in
\begin{align}
    &\Sigma_{Q}^{-1} = \begin{pmatrix}
        \mathbb{I}_{M} & -\tanh r_0 \cdot U(\mathbb{I}_{N}\oplus 0_{M-N})U^{T} \\
        -\tanh r_0 \cdot U^*(\mathbb{I}_{N}\oplus 0_{M-N})U^{\dag} &  \mathbb{I}_{M}
    \end{pmatrix} .  \label{eq: H42}
\end{align}
Then, by Eq.~\eqref{eq: def of A}, we obtain
\begin{align}
      A = \tanh r_0 \begin{pmatrix}
         U^*(\mathbb{I}_{N}\oplus 0_{M-N})U^{\dag}  & 0 \\
        0 & U(\mathbb{I}_{N}\oplus 0_{M-N})U^{T}  
    \end{pmatrix}   .
\end{align}
For the constructed $U$ that contains $\frac{1}{\Vert Y \Vert_2} Y$ as a top-leftmost $N\times N$ submatrix (i.e., by Lemma~\ref{lemma: encoding worst-case matrix in unitary matrix}), the top-leftmost $N\times N$ submatrix of $U(\mathbb{I}_{N}\oplus 0_{M-N})U^{T}$ is given by $\frac{1}{\Vert Y \Vert_2^2}YY^T = \frac{1}{\Vert W \Vert_2} W$ for $W$ in Eq.~\eqref{eq: definition for W}, which is the same for $U^*(\mathbb{I}_{N}\oplus 0_{M-N})U^{\dag}$ since $W$ is a real matrix.
Accordingly, for $\bm{n}$ corresponding to the collision-free outcome with one boson in each of the first $N$ modes among the $M$ output modes, we have
\begin{align}
     A_{\bm{n}\oplus\bm{n}} = \frac{\tanh r_0}{\Vert W \Vert_2}\begin{pmatrix}
        W  & 0 \\
        0 & W 
    \end{pmatrix} , 
\end{align}
which, together with Eq.~\eqref{eq: ideal output probability} and Eq.~\eqref{eq: haf-W = per-W'}, yields the desired relation $q(\bm{n}) \propto \ \Per(W')^2$.

\subsubsection{Bounding the probability ratio}

We next characterize the probability ratio in our settings:
\begin{align}\label{eq: ratio of p(n) and q(n)}
    \frac{p(\bm{n} )}{q(\bm{n})} = \frac{\sqrt{|\Sigma_Q|}}{\sqrt{|\Sigma'_Q|}}\frac{\Haf(A'_{\bm{n}\oplus\bm{n}})}{\Haf(A_{\bm{n}\oplus \bm{n}})}   .
\end{align}
Namely, we aim to identify a squeezing level $r$ that makes 
\begin{align}\label{eq: our goal}
    \left| 1 - \frac{p(\bm{n} )}{q(\bm{n})} \right| = \left\vert 1- \frac{\sqrt{|\Sigma_Q|}}{\sqrt{|\Sigma'_Q|}}\frac{\Haf(A'_{\bm{n}\oplus\bm{n}})}{\Haf(A_{\bm{n}\oplus \bm{n}})}\right\vert \leq \epsilon ,
\end{align}
for a desired relative error $\epsilon$.
This will ensure that, given access to the sampler $\mathcal{M}$, one can obtain a multiplicative estimate of $q(\bm{n})$ via Stockmeyer's reduction.

Our first observation is that the factor $\sqrt{|\Sigma'_Q|}$ is bounded multiplicatively to $\sqrt{|\Sigma_Q|}$ in terms of the Frobenius norm of $\delta\Sigma$.

\begin{lemma}\label{lemma: multiplicative coefficient is bounded}
Suppose that $\left\vert\Tr(\delta\Sigma\cdot\Sigma_{Q}^{-1})\right\vert \leq \frac{1}{3}$. 
Then, the ratio $\frac{\sqrt{|\Sigma_Q|}}{\sqrt{|\Sigma'_Q|}}$ satisfies
\begin{align}\label{eq: lemma: multiplicative coefficient bound}
     \left\vert 1 - \frac{\sqrt{|\Sigma_Q|}}{\sqrt{|\Sigma'_Q|}} \right\vert \leq \| \delta\Sigma\|_{F} \| \Sigma_{Q}^{-1}\|_{F} ,
\end{align}
where $\|\cdot\|_{F}$ denotes the Frobenius norm.
\end{lemma}
\begin{proof}
    See Sec.~\ref{appendix: section: lemma for bounding multiplicative coefficient}.
\end{proof}




Also, rewrite $A'$ in terms of $A$ by
\begin{align}
    A' &= \mathbb{X}_{2M}(\mathbb{I}_{2M} - (\Sigma_Q + \delta\Sigma)^{-1}) \\
    &= \mathbb{X}_{2M}\left(\mathbb{I}_{2M} - \Sigma_{Q}^{-1}\left(\mathbb{I}_{2M} + \delta\Sigma\cdot\Sigma_{Q}^{-1}\right)^{-1}\right) \\
    &= \mathbb{X}_{2M}\left(\mathbb{I}_{2M} - \Sigma_{Q}^{-1}\left( \mathbb{I}_{2M} - \delta\Sigma\cdot\Sigma_{Q}^{-1}\left(\mathbb{I}_{2M} + \delta\Sigma\cdot\Sigma_{Q}^{-1}\right)^{-1}\right) \right) \\
    &= A + \mathbb{X}_{2M}\cdot\Sigma_{Q}^{-1}\cdot\delta\Sigma\cdot\Sigma_{Q}^{-1}\left(\mathbb{I}_{2M} + \delta\Sigma\cdot\Sigma_{Q}^{-1}\right)^{-1}, \label{eq: H12}
\end{align}
provided that $\mathbb{I}_{2M} + \delta\Sigma\cdot\Sigma_{Q}^{-1}$ is invertible (by Observation~\ref{observation: invertible} in Sec.~\ref{appendix: section: lemma for bounding multiplicative coefficient}). 

In what follows, we show that the hafnian ${\Haf(A'_{\bm{n}\oplus\bm{n}})}$ can also be multiplicatively bounded to ${\Haf(A_{\bm{n}\oplus \bm{n}})}$ with respect to the matrix norm $\| \delta\Sigma\|_{F}$, such that the right-hand side of Eq.~\eqref{eq: our goal} can also be bounded in terms of $\| \delta\Sigma\|_{F}$.
Since $ \left\Vert \delta\Sigma\right\Vert_{F} = \frac{e^{-2r}}{2} \Vert \textbf{N}_{U}\textbf{N}_{U}^T \Vert_{F}$ by definition of $\delta\Sigma$ in Eq.~\eqref{eq: definition of noise matrix}, by Lemma~\ref{lemma: norm of noise matrix} that upper bounds $\Vert \textbf{N}_{U}\textbf{N}_{U}^T \Vert_{F}$, we will obtain a required squeezing level $r$ to obtain a desired multiplicative imprecision in Eq.~\eqref{eq: our goal}.




To proceed, recall that by our construction of $U$ and $\rho_{\rm in}$, we have  
\begin{align}
     A_{\bm{n}\oplus\bm{n}} = \frac{\tanh r_0}{\Vert W \Vert_2}\begin{pmatrix}
        W  & 0 \\
        0 & W 
    \end{pmatrix} .
\end{align}
To simplify the form, we newly define the normalized version of $A_{\bm{n}\oplus\bm{n}}$ as $B$ such that 
\begin{align}
    B = \begin{pmatrix}
        W & 0 \\
        0 & W 
    \end{pmatrix} ,
\end{align}
and correspondingly, according to Eq.~\eqref{eq: H12}, the normalized version of $A'_{\bm{n}\oplus\bm{n}}$ can be represented as
\begin{align}
    B' = \begin{pmatrix}
        W & 0 \\
        0 & W
    \end{pmatrix} + \frac{\Vert W \Vert_2}{\tanh r_0}\left(\mathbb{X}_{2M}\cdot\Sigma_{Q}^{-1}\cdot\delta\Sigma\cdot\Sigma_{Q}^{-1}\left(\mathbb{I}_{2M} + \delta\Sigma\cdot\Sigma_{Q}^{-1}\right)^{-1} \right)_{\bm{n}\oplus \bm{n}} .
\end{align}
Now, we newly denote the $2N\times 2N$ noise matrix in the right-hand side as
\begin{align}\label{eq: definition for C 2}
    C \coloneqq \frac{\Vert W \Vert_2}{\tanh r_0}\left(\mathbb{X}_{2M}\cdot\Sigma_{Q}^{-1}\cdot\delta\Sigma\cdot\Sigma_{Q}^{-1}\left(\mathbb{I}_{2M} + \delta\Sigma\cdot\Sigma_{Q}^{-1}\right)^{-1} \right)_{\bm{n}\oplus \bm{n}},
\end{align} 
such that $B' = B + C$.
Then we have
\begin{align}\label{eq: ration of haf A' and haf A 2}
    \frac{\Haf(A'_{\bm{n}\oplus \bm{n}})}{\Haf(A_{\bm{n}\oplus \bm{n}})} = \frac{\Haf(B + C)}{\Haf(B)} .
\end{align}
Note that this formalism is valid only when $\Haf(B) = \Per(W')^2 \neq 0$; thus, we need to separate the case for $\Per(W') \neq 0$ and $\Per(W') = 0$.

We now introduce a key lemma for our hardness proof, which specifies the required bound on $\left\Vert \delta\Sigma\right\Vert_{F}$ needed to ensure that the left-hand side of Eq.~\eqref{eq: our goal} is bounded by a desired relative error.

\begin{lemma}\label{lemma: multiplicative error is bounded in worst case}
Given an $N_0\times N_0$ matrix $W' \in \{-1,0,1\}^{N_0 \times N_0}$, let $B$ be the $2N\times 2N$ matrix with $N > 2N_0$ defined by
\begin{align}
    B = \begin{pmatrix}
        W & 0 \\
        0 & W 
    \end{pmatrix} \quad \text{for} \quad W  = \begin{pmatrix}
        0_{\frac{ N}{2}} & W'\oplus I_{\frac{N}{2} - N_0} \\ (W')^{T}\oplus I_{\frac{N}{2} - N_0} & 0_{\frac{ N}{2}}
    \end{pmatrix}   ,
\end{align}
and let $C$ be the matrix defined in Eq.~\eqref{eq: definition for C 2}.
If $\delta\Sigma =  e^{-2r}S\textbf{N}_{U}\textbf{N}_{U}^TS^{\dag}/2$ satisfies
\begin{align}
    \max_{U}  \left\Vert \delta\Sigma\right\Vert_{F} 
    \leq \frac{\tanh r_0 \varepsilon' }{36(N_0!)^2N^2M \Vert W \Vert_2 },  
\end{align}
then the following bounds hold:
\begin{align}
   \left\vert 1- \frac{\sqrt{|\Sigma_Q|}}{\sqrt{|\Sigma'_Q|}}\frac{{\rm{Haf}}(B+C)}{{\rm{Haf}}(B)}\right\vert \leq \varepsilon' \quad \text{when} \quad {\rm{Haf}}(B) \neq 0, \\
   \left\vert \frac{\sqrt{|\Sigma_Q|}}{\sqrt{|\Sigma'_Q|}}{\rm{Haf}}(B+C) \right\vert \leq \varepsilon' \quad \text{when} \quad {\rm{Haf}}(B) = 0,
\end{align}
where $\Sigma_{Q}$ is given in Eq.~\eqref{eq: Sigma Q for hardness 2} and $\Sigma_{Q}' = \Sigma_{Q} + \delta\Sigma$.
\end{lemma}
\begin{proof}
    See Sec.~\ref{appendix: section: bounding multiplicative error in worst case}
\end{proof}

\subsubsection{Proof of Theorem~\ref{theorem: strong hardness}}

By collecting all the preceding arguments, we are now ready to prove Theorem~\ref{theorem: strong hardness}.

\begin{proof}[Proof of Theorem~\ref{theorem: strong hardness}]

In this proof, we determined the critical squeezing level at which the \#P-hard problem in Lemma~\ref{theorem: sharp P hardness of computing permanent} becomes solvable within $\rm{BPP}^{NP}$ given oracle access to the randomized classical sampler $\mathcal{M}$. 
More specifically, given an $N_0\times N_0$ matrix $W' = \{-1,0,1\}^{N_0 \times N_0}$, let $g \ll 1$ be a desired multiplicative imprecision level in Lemma~\ref{theorem: sharp P hardness of computing permanent}.
The goal is to compute a multiplicative estimate $\bar{W}$ of $\Per(W')^2$ that satisfies
\begin{align}
    (1-g)\Per(W')^2 \leq \bar{W} \leq (1+g)\Per(W')^2, 
\end{align}
within $\rm{BPP}^{\rm{NP}^{\mathcal{M}}}$.

First, for any fixed $\Lambda > 0$, we construct an $N\times N$ matrix $W$ defined in Eq.~\eqref{eq: definition for W}, where $N = \left\lceil N_0^{1/\lambda} \right\rceil$ for any constant $0<\lambda < \Lambda$.
Next, we decompose $W = YY^{T}$ for an $N\times N$ matrix $Y$, and we construct $M\times M$ unitary matrix $U$ for $M \geq 2N$ such that a top-leftmost $N\times N$ submatrix of $U$ is $\frac{1}{\Vert Y \Vert_2}Y$, as given in Lemma~\ref{lemma: encoding worst-case matrix in unitary matrix}. 
Then, the top-leftmost $N\times N$ submatrix of $U(\mathbb{I}_{N}\oplus 0_{M-N})U^{T}$ is $\frac{1}{\Vert Y \Vert_2^2}YY^T = \frac{1}{\Vert W \Vert_2} W$, and the same for $U^*(\mathbb{I}_{N}\oplus 0_{M-N})U^{\dag}$ since $W$ is a real matrix.

Given the unitary matrix $U$, we set the input state $\rho_{\rm in} = \ket{r_0}\bra{r_0}^{N}\otimes\ket{0}\bra{0}^{M-N}$ for a squeezed vacuum state $\ket{r_0}$ with squeezing level $r_0 = O(1)$.
The ideal output probability $q(\bm{n})$ for this input state $\rho_{\rm in}$ and unitary matrix $U$ is then given by 
\begin{align}\label{eq: definition for q(n) in proof 2}
     q(\bm{n}) = \frac{1}{\sqrt{|\Sigma_Q|}}\Haf(A_{\bm{n}\oplus \bm{n}}) = \frac{\tanh^{N} r_0}{\Vert W \Vert_2^{N} \cosh^{N} r_0}\Haf(W)^2  = \frac{\tanh^{N} r_0}{\Vert W \Vert_2^{N} \cosh^{N} r_0}\Per(W')^2,
\end{align}
where we used $|\Sigma_Q| = |\Sigma_{Q}^{-1}|^{-1} = \cosh^{2N}r_0$ from Eq.~\eqref{eq: H42}.

We first consider the case that $\Per(W') \neq 0$, such that $q(\bm{n})$ is non-zero. 
Note that for the constructed $U$, using Eq.~\eqref{eq: ratio of p(n) and q(n)} and Eq.~\eqref{eq: ration of haf A' and haf A 2}, we have the relation
\begin{align}
    \frac{p(\bm{n})}{q(\bm{n})} = \frac{\sqrt{|\Sigma_Q|}}{\sqrt{|\Sigma'_Q|}}\frac{\Haf(A'_{\bm{n}\oplus\bm{n}})}{\Haf(A_{\bm{n}\oplus \bm{n}})} = \frac{\sqrt{|\Sigma_Q|}}{\sqrt{|\Sigma'_Q|}}\frac{\Haf(B+C)}{\Haf(B)},
\end{align}
for $B = W \oplus W$ and $C$ given in Eq.~\eqref{eq: definition for C 2}, and thus, by Lemma~\ref{lemma: multiplicative error is bounded in worst case}, we have 
\begin{align}\label{eq: error by squeezing 2}
    \left\vert 1- \frac{p(\bm{n})}{q(\bm{n})}  \right\vert \leq \frac{g}{2}   ,
\end{align}
as long as $\left\Vert \delta\Sigma\right\Vert_{F}$ satisfies
\begin{align}\label{eq: condition for reduction 2}
    \max_{U} \left( \left\Vert \delta\Sigma\right\Vert_{F} \right) \leq \frac{\tanh r_0 g }{72(N_0!)^2N^2M \Vert W \Vert_2 } .  
\end{align}

Provided oracle access to $\mathcal{M}$, Stockmeyer's algorithm~\cite{stockmeyer1985approximation}allows us to obtain the multiplicative estimate $\bar{P}$ of $p(\bm{n})$ in complexity class $\rm{BPP}^{\rm{NP}^{\mathcal{M}}}$, which satisfies 
\begin{equation}\label{eq: error by Stockmeyer 2}
   \left\vert 1 - \frac{\bar{P}}{p(\bm{n})} \right\vert \leq  \frac{g}{4}  ,
\end{equation}
for any $g \geq 1/\text{poly}(M)$ in $\text{poly}(M)$ time.
Combining these arguments in Eq.~\eqref{eq: error by squeezing 2} and Eq.~\eqref{eq: error by Stockmeyer 2}, given that Eq.~\eqref{eq: condition for reduction 2} holds, 
we have 
\begin{align}
    \left\vert 1 - \frac{\bar{P}}{q(\bm{n})} \right\vert \leq \left( 1 - \frac{g}{4}\right)\left( 1 - \frac{g}{2} \right) - 1 \leq g .
\end{align}
Therefore, for $q(\bm{n})$ given in Eq.~\eqref{eq: definition for q(n) in proof 2}, we can obtain an multiplicative estimate of $\Per(W')^2$ by $\bar{W} = \frac{\Vert W \Vert_2^{N} \cosh^{N} r_0}{\tanh^{N} r_0}\bar{P}$ from the obtained value $\bar{P}$ via Stockmeyer's algorithm, which satisfies 
\begin{align}\label{eq: estimate W for non zero permanent}
   \left\vert 1 -\frac{\bar{W}}{\Per(W')^2}  \right\vert  \leq g \ll 1,
\end{align}
provided that $\Per(W') \neq 0$ and $\left\Vert \delta\Sigma\right\Vert_{F}$ satisfies Eq.~\eqref{eq: condition for reduction 2}.
This gives a desired multiplicative estimate $\bar{W}$ satisfying $(1-g)\Per(W')^2 \leq \bar{W} \leq (1+g)\Per(W')^2$, in the case that $\Per(W') \neq 0$.

When $\Per(W') = 0$ such that $q(\bm{n}) = 0$, by Lemma~\ref{lemma: multiplicative error is bounded in worst case}, we have 
\begin{align}
   \left\vert \frac{\sqrt{|\Sigma_Q|}}{\sqrt{|\Sigma'_Q|}}{\rm{Haf}}(B+C) \right\vert \leq \frac{g}{2} \quad \Rightarrow \quad p(\bm{n}) =  \frac{1}{\sqrt{|\Sigma'_Q|}}\Haf(A'_{\bm{n}\oplus\bm{n}}) \leq \frac{\tanh^{N} r_0}{\Vert W \Vert_2^{N} \cosh^{N} r_0} \frac{g}{2} ,
\end{align}
when $\left\Vert \delta\Sigma\right\Vert_{F}$ satisfies Eq.~\eqref{eq: condition for reduction 2}.
Hence, by Stockmeyer's algorithm, we can similarly obtain an estimate $\bar{P}$ of $p(\bm{n})$ satisfying Eq.~\eqref{eq: error by Stockmeyer 2}, which is bounded by
\begin{align}
    \bar{P} \leq \left( 1 + \frac{g}{4}\right) p(\bm{n}) 
    \leq \frac{\tanh^{N} r_0}{\Vert W \Vert_2^{N} \cosh^{N} r_0} \frac{g}{2}\left( 1 + \frac{g}{4}\right).
\end{align}
This gives an estimate $\bar{W}$ for $\Per(W')^2$ that satisfies 
\begin{align}
    \bar{W} = \frac{\Vert W \Vert_2^{N} \cosh^{N} r_0}{\tanh^{N} r_0}\bar{P} \leq  \frac{g}{2}\left( 1 + \frac{g}{4}\right) \leq g \ll 1.
\end{align}
Meanwhile, because $\Per(W')^2 \geq 1$ whenever $\Per(W') \neq 0$, the obtained estimate $\bar{W}$ always satisfies $\bar{W} \geq 1-g$ when $\Per(W') \neq 0$ by Eq.~\eqref{eq: estimate W for non zero permanent}. 
Consequently, provided $g \ll 1$, we can distinguish whether $\Per(W') = 0$ or $\Per(W') \neq 0$ by checking if the obtained estimate $\bar{W}$ satisfies $\bar{W} \leq g$ or $\bar{W} \geq 1 - g$. 
Accordingly, if the obtained $\bar{W}$ is observed to be a small constant (say, smaller than $1/2$), then we update the obtained $\bar{W}$ to $\bar{W} = 0$, which gives an exact estimate of $\Per(W')^2 = 0$ provided that $g \ll 1$.
Therefore, for any cases $\Per(W') \neq 0$ or $\Per(W') = 0$, we can obtain a desired multiplicative estimate $\bar{W}$ satisfying 
\begin{align}
    (1-g)\Per(W')^2 \leq \bar{W} \leq (1+g)\Per(W')^2,
\end{align}
in $\rm{BPP}^{NP^{\mathcal{M}}}$.

We conclude the proof by identifying the required squeezing level, i.e., the squeezing level that makes $\left\Vert \delta\Sigma\right\Vert_{F}$ satisfy Eq.~\eqref{eq: condition for reduction 2}. 
By using Lemma~\ref{lemma: norm of noise matrix} and using the fact that $\Vert W \Vert_2 = \Vert W' \oplus I_{\frac{N}{2} - N_0} \Vert_2 \leq \max_{W'} \Vert W' \Vert_{F} = N_0$ and $\tanh r_0 = O(1)$, we can revise the condition from Eq.~\eqref{eq: condition for reduction 2} by
\begin{align}
     e^{-2r}\max_{U} \left( \left\Vert \textbf{N}_{U}\textbf{N}_{U}^T\right\Vert_{F}\right) \leq \frac{\tanh r_0 g }{72 (N_0!)^2 N_0 N^2 M } 
     \quad \Rightarrow \quad
     e^{-2r} \leq  O\left( g (N_0!)^{-2} N_0^{-1} N^{-2} M^{-3} \right)  . 
\end{align}
Here, using $N_0 \leq N^{\lambda} \leq M^{\lambda}$, $N_0 ! = e^{\Theta(N_0\ln N_0) } \leq e^{\Theta(M^{\lambda}\ln M)}$, and $g \geq 1/\text{poly}(M)$, we can obtain the simplified condition given by
\begin{align}\label{eq: condition for r (H63)}
     e^{-2r} \leq e^{-\Omega(M^{\lambda}\ln{M})} .
\end{align}
Lastly, since $\Lambda > \lambda$, we finally have the squeezing level bound $r_{\rm th}$ for the reduction such that 
\begin{align}
       r \geq r_{\rm th} = \Omega\left( M^{\Lambda} \right) , 
\end{align}
which satisfies the condition in Eq.~\eqref{eq: condition for r (H63)}.
This concludes the proof.

\end{proof}

\section{Proof of technical lemmas}

In this section, we provide the proofs of the technical lemmas used in the proofs of our main results.

\subsection{Proof of Lemma~\ref{lemma: minimum eigenvalue for universal brickwork graph state}}\label{appendix: section: lemma for minimum eigenvalue}

In the following, we analyze the minimum eigenvalue of the symmetric product ${\bf N}(G_{U}, \bm{\phi}_U) {\bf N}(G_{U}, \bm{\phi}_U)^{T}$ of the noise matrix ${\bf N}(G_{U}, \bm{\phi}_U)$, corresponding to the universal brickwork graph $G_{U}$ and universal phase-shift angles $\bm{\phi}_U$ defined in Definition~\ref{def: brickwork graph and angles}.

For the convenience of the analysis, we recap the matrix transformation rule introduced in Sec.~\ref{appendix: section: conventions for input output relation}.
Let $G_1 = (V_1,E_1)$ and $G_2 = (V_2, E_2)$ be graphs whose begin and end nodes are assigned with designated sequences.
Also, let $\bm{\phi}_1$ be phase-shift angles for $G_1$ assigned for each vertex of $G_1$ except $\text{end}(G_1)$ with designated sequence (such that, $|\bm{\phi}_2| = |V_1| - |\text{end}(G_1)|$), and similar for $\bm{\phi}_{2}$ and $G_2$.
Then, for concatenation $G_1 \circ G_2$ and sum $G_1 \oplus G_2$ of graph $G_1$ and $G_2$ (where $|\text{begin}(G_2)| = |\text{end}(G_1)|$ for concatenation), together with phase-shift angles $\bm{\phi}_1$ and $\bm{\phi}_2$ for each of $G_1$ and $G_2$, we have the following properties
\begin{align}
    {\bf G}(G_{1} \circ G_{2},\bm{\phi}_1 \cdot \bm{\phi}_2) &= {\bf G}(G_2,\bm{\phi}_2){\bf G}(G_1,\bm{\phi}_1), \label{eq: symplectic for graph concatenation} \\
    {\bf G}(G_{1} \oplus G_{2},\bm{\phi}_1 \cdot \bm{\phi}_2) &= P( {\bf G}(G_1,\bm{\phi}_1) \oplus {\bf G}(G_2,\bm{\phi}_2))P^{T}   , \label{eq: symplectic for graph sum}
    \\
    {\bf N}(G_{1} \circ G_{2},\bm{\phi}_1 \cdot \bm{\phi}_2) &= \left[ {\bf G}(G_2,\bm{\phi}_2) {\bf N}(G_1,\bm{\phi}_1) \;|\; {\bf N}(G_2,\bm{\phi}_2)  \right], \label{eq: noise for graph concatenation} \\
    {\bf N}(G_{1} \oplus G_{2},\bm{\phi}_1 \cdot \bm{\phi}_2) 
    &=  P( {\bf N}(G_1,\bm{\phi}_1) \oplus {\bf N}(G_2,\bm{\phi}_2) ),\label{eq: noise for graph sum}
\end{align}
where $\cdot$ denotes the concatenation of sequences, and $[ A \;|\; B]$ for matrices $A$ and $B$ denotes the column-wise concatenation of the matrices. 
Also, $P$ denotes an $2m\times 2m$ permutation matrix for $m = m_1 + m_2$ with $m_1 = |{\rm{begin}}(G_1)| = |{\rm{end}}(G_1)|$ and $m_2 = |{\rm{begin}}(G_2)| = |{\rm{end}}(G_2)|$ such that
\begin{align}\label{eq: permutation for graph sum}
    P = \begin{pmatrix}
        \mathbb{I}_{m_1} & 0_{m_1} & 0_{m_2} & 0_{m_2} \\ 0_{m_1} & 0_{m_1} & \mathbb{I}_{m_2} & 0_{m_2} \\ 0_{m_1} & \mathbb{I}_{m_1} & 0_{m_2} & 0_{m_2} \\ 0_{m_1} & 0_{m_1} & 0_{m_2} & \mathbb{I}_{m_2}
    \end{pmatrix} ,
\end{align}
which rearranges the quadrature operators from $(\hat{\bm{q}}_{1} \; \hat{\bm{p}}_{1} \; \hat{\bm{q}}_{2} \; \hat{\bm{p}}_{2})^{T}$ to $(\hat{\bm{q}}_{1} \;  \hat{\bm{q}}_{2} \; \hat{\bm{p}}_{1} \; \hat{\bm{p}}_{2})^{T}$, for $(\hat{\bm{q}}_{1} \; \hat{\bm{p}}_{1} )^{T}$ and $(\hat{\bm{q}}_{2} \; \hat{\bm{p}}_{2})^{T}$ being the quadrature operators for $m_1$-mode and $m_2$-mode systems, respectively.

We start with examining the symmetric product ${\bf N}(G_{H}^{(l)}, \bm{\phi}){\bf N}(G_{H}^{(l)}, \bm{\phi})^{T}$ of the noise matrix ${\bf N}(G_{H}^{(l)}, \bm{\phi})$, for a horizontal one-dimensional chain graph $G_{H}^{(l)}$ with $l \geq 2$ number of vertices, and arbitrarily chosen angles $\bm{\phi} = \{\phi_1,\dots,\phi_{l-1}\}$ satisfying $\forall\phi_{i} \in \left(-\frac{\pi}{2}, \frac{\pi}{2} \right)$.

\begin{lemma}\label{lemma: noise matrix for horizontal chain}
    Let ${\bf N}(G_{H}^{(l)}, \bm{\phi})$ be a noise matrix for a horizontal one-dimensional chain graph $G_{H}^{(l)}$ with $l \geq 3$ number of vertices, and arbitrarily chosen angles $\bm{\phi} = \{\phi_1,\dots,\phi_{l-1}\}$ for $\forall\phi_{i} \in \left(-\frac{\pi}{2}, \frac{\pi}{2} \right)$. 
    Then, ${\bf N}(G_{H}^{(l)}, \bm{\phi}){\bf N}(G_{H}^{(l)}, \bm{\phi})^{T}$ can be represented as 
    \begin{align}
        {\bf N}(G_{H}^{(l)}, \bm{\phi}){\bf N}(G_{H}^{(l)}, \bm{\phi})^{T} = \mathbb{I}_{2} + D,
    \end{align}
    for a $2\times 2$ positive semi-definite matrix $D$ determined by $l$ and $\bm{\phi}$.
\end{lemma}

\begin{proof}
We start with the fact that $G_{H}^{(l)}$ is a concatenation of $l-1$ number of two-mode graph $G_{H}^{(2)}$, such that $G_{H}^{(l)} = G_{H}^{(2)} \circ \cdots \circ G_{H}^{(2)}$.
Let us define $\textbf{N}_{i} = {\bf N}(G_{H}^{(2)}, \phi_{i})$ for $i \in [l-1]$, and define $\textbf{G}_{i} = {\bf G}(G_{H}^{(2)}, \phi_{i})$.
Also, we define $\textbf{G} = \textbf{G}_{l-1}\textbf{G}_{l-2}\cdots \textbf{G}_{1}$ as a total product of $\textbf{G}_{i}\;\forall i\in[l-1]$, and define $\bar{\textbf{G}}_{k} = \textbf{G}\textbf{G}_{1}^{-1}\textbf{G}_{2}^{-1}\cdots \textbf{G}_{k}^{-1}$ (for example, $ \bar{\textbf{G}}_{l-1} = \mathbb{I}_{2}$, and $ \bar{\textbf{G}}_{l-2} =  \textbf{G}_{l-1}$).
Then, by Eq.~\eqref{eq: symplectic for graph concatenation} and Eq.~\eqref{eq: noise for graph concatenation}, ${\bf{N}} = {\bf N}(G_{H}^{(l)}, \bm{\phi})$ can be expressed as a concatenation of noise matrices such that
\begin{align}
    {\bf N}(G_{H}^{(l)}, \bm{\phi}) = \left[ \bar{\textbf{G}}_{1}\textbf{N}_{1} \;|\; \bar{\textbf{G}}_{2}\textbf{N}_{2} \;|\; \cdots \;|\;  \bar{\textbf{G}}_{l-2}\textbf{N}_{l-2}  \;|\;  \bar{\textbf{G}}_{l-1}\textbf{N}_{l-1}  \right], 
\end{align}
where $[ A \;|\; B]$ denotes the column-wise concatenation of the matrices $A$ and $B$ as noted.
Hence, we obtain
\begin{align}\label{eq: expansion for NN^T}
    {\bf N}(G_{H}^{(l)}, \bm{\phi}){\bf N}(G_{H}^{(l)}, \bm{\phi})^{T} = \bar{\textbf{G}}_{1}\textbf{N}_{1}\textbf{N}_{1}^{T}\bar{\textbf{G}}_{l}^{T} +  \bar{\textbf{G}}_{2}\textbf{N}_{2}\textbf{N}_{2}^{T}\bar{\textbf{G}}_{2}^{T} + \cdots + \bar{\textbf{G}}_{l-2}\textbf{N}_{l-2}\textbf{N}_{l-2}^{T}\bar{\textbf{G}}_{l-2}^{T} +\bar{\textbf{G}}_{l-1}\textbf{N}_{l-1}\textbf{N}_{l-1}^{T}\bar{\textbf{G}}_{l-1}^{T}  .
\end{align}
Here, let us concentrate on the last two terms in the right-hand side of Eq.~\eqref{eq: expansion for NN^T}. 
First, as we have found previously in Eq.~\eqref{eq: input output relation for 2 mode cluster}, $\textbf{N}_{i} = \begin{pmatrix}
    0 \\ 1
\end{pmatrix}$ for all $i \in [l-1]$, and accordingly, $\textbf{N}_{i}\textbf{N}_{i}^{T} = \begin{pmatrix}
    0 & 0 \\ 0 & 1
\end{pmatrix}$ for all $i \in [l-1]$ (as also described in in~\cite{alexander2014noise, menicucci2014fault}).
Hence, given that $\bar{\textbf{G}}_{l-1} = \mathbb{I}_{2}$, we have 
\begin{align}
    \bar{\textbf{G}}_{l-1}\textbf{N}_{l-1}\textbf{N}_{l-1}^{T}\bar{\textbf{G}}_{l-1}^{T} = \begin{pmatrix}
    0 & 0 \\ 0 & 1
\end{pmatrix}.
\end{align}
Also, as given in Eq.~\eqref{eq: input output relation for 2 mode cluster}, $\textbf{G}_{i} = \begin{pmatrix}
    -\tan\phi_{i} & -1 \\ 1 & 0
\end{pmatrix}$, and thus, $\bar{\textbf{G}}_{l-2} =  \textbf{G}_{l-1} = \begin{pmatrix}
    -\tan\phi_{l-1} & -1 \\ 1 & 0
\end{pmatrix}$.
Accordingly, we also have 
\begin{align}
  \bar{\textbf{G}}_{l-2}\textbf{N}_{l-2}\textbf{N}_{l-2}^{T}\bar{\textbf{G}}_{l-2}^{T} = \begin{pmatrix}
    1 & 0 \\ 0 & 0
\end{pmatrix}  .
\end{align}
Crucially, the summation of the last two terms on the right-hand side of Eq.~\eqref{eq: expansion for NN^T} yields the identity matrix $\mathbb{I}_{2}$.
Hence, we finally have
\begin{align}\label{eq: final expansion for NN^T in 1-d chain}
     {\bf N}(G_{H}^{(l)}, \bm{\phi}){\bf N}(G_{H}^{(l)}, \bm{\phi})^{T} 
     =
     \begin{cases}
        \mathbb{I}_2  \quad \text{for} \quad l = 3  , \\
        \mathbb{I}_2 + \sum_{i = 1}^{l-3} \bar{\textbf{G}}_{i}\textbf{N}_{i}\textbf{N}_{i}^{T}\bar{\textbf{G}}_{i}^{T} \quad \text{for} \quad l \geq 4  . \\
    \end{cases}
\end{align}
Note that for arbitrarily given $\bar{\textbf{G}}_{i}$, the last term in the right-hand side of Eq.~\eqref{eq: final expansion for NN^T in 1-d chain} is a summation of positive semi-definite matrices, which is also a positive semi-definite matrix.
This concludes the proof.


\end{proof}

Next, we examine the noise matrix ${\bf N}(G_{T}, \bm{\phi}_{T})$.
Recall that, $G_{T}$ denotes a brick graph, and $\bm{\phi}_{T}$ represents the corresponding brick phase-shift angles defined in Definition~\ref{def: brickwork cluster state and angle}, such that by cluster state $\ket{G_{T}}$ one can implement ${\bf G}(G_{T},\bm{\phi}_{T} ) = \begin{pmatrix}
        T_{R} & -T_{I} \\ T_{I} & T_{R}
    \end{pmatrix}$ for real part $T_{R}$ and imaginary part $T_{I}$ of a brick beam splitter $T = \begin{pmatrix}
        e^{i\beta}\cos\theta & -\sin\theta \\ e^{i\beta}\sin\theta & \cos\theta
    \end{pmatrix}$.
We find that for an arbitrary $T$, the minimum eigenvalue of ${\bf N}(G_{T}, \bm{\phi}_{T}){\bf N}(G_{T}, \bm{\phi}_{T})^{T}$ is lower bounded by a positive constant $3-2\sqrt{2}$.

\begin{lemma}\label{lemma: noise matrix for brickwork cluster state}
    Let ${\bf N}(G_{T}, \bm{\phi}_{T})$ be a noise matrix for the brick graph $G_{T}$ and the brick phase-shift angles $\bm{\phi}_{T}$ defined in Definition~\ref{def: brickwork cluster state and angle}. 
    Then, the minimum eigenvalue $\lambda_{min}$ of ${\bf N}(G_{T}, \bm{\phi}_{T}){\bf N}(G_{T}, \bm{\phi}_{T})^T$ satisfies 
    \begin{align}
        \lambda_{min} \geq 3-2\sqrt{2} .
    \end{align}
\end{lemma}

\begin{proof}

By definition, $G_{T} = G_{\sqsupset} \circ G_{\sqsupset} \circ G_{\sqsupset}$ and $\bm{\phi}_{T} = \bm{\phi}_{b} \cdot \bm{\phi}_{a} \cdot \bm{\phi}_{a}$.
Using Eq.~\eqref{eq: symplectic for graph concatenation} and Eq.~\eqref{eq: noise for graph concatenation}, we have
\begin{align}
    {\bf N}(G_{T}, \bm{\phi}_{T}) = \left[ {\bf G}(G_{\sqsupset} \circ G_{\sqsupset}, \bm{\phi}_{a} \cdot \bm{\phi}_{a} ) {\bf N}(G_{\sqsupset}, \bm{\phi}_{b})   \;|\;   {\bf G}(G_{\sqsupset} , \bm{\phi}_{a} ){\bf N}(G_{\sqsupset}, \bm{\phi}_{a})  \;|\; {\bf N}(G_{\sqsupset}, \bm{\phi}_{a}) \right],
\end{align}
and thus, we have
\begin{align}\label{eq: I13}
    {\bf N}(G_{T}, \bm{\phi}_{T}){\bf N}(G_{T}, \bm{\phi}_{T})^{T} &= {\bf G}(G_{\sqsupset} \circ G_{\sqsupset}, \bm{\phi}_{a} \cdot \bm{\phi}_{a} ) {\bf N}(G_{\sqsupset}, \bm{\phi}_{b}){\bf N}(G_{\sqsupset}, \bm{\phi}_{b})^{T} {\bf G}(G_{\sqsupset} \circ G_{\sqsupset}, \bm{\phi}_{a} \cdot \bm{\phi}_{a} )^{T} \nonumber \\
    &+ {\bf G}(G_{\sqsupset} , \bm{\phi}_{a} ){\bf N}(G_{\sqsupset}, \bm{\phi}_{a}) {\bf N}(G_{\sqsupset}, \bm{\phi}_{a})^{T} {\bf G}(G_{\sqsupset} , \bm{\phi}_{a} )^{T} 
    + {\bf N}(G_{\sqsupset}, \bm{\phi}_{a}){\bf N}(G_{\sqsupset}, \bm{\phi}_{a})^{T} .
\end{align}
Hence, the minimum eigenvalue of ${\bf N}(G_{T}, \bm{\phi}_{T}){\bf N}(G_{T}, \bm{\phi}_{T})^{T}$ is at least as large as the minimum eigenvalue of ${\bf N}(G_{\sqsupset}, \bm{\phi}_{a}){\bf N}(G_{\sqsupset}, \bm{\phi}_{a})^{T}$, because the first two terms in the right-hand side of Eq.~\eqref{eq: I13} are positive semi-definite.
Also, using the fact that $G_{\sqsupset}= (G_{H}^{(5)} \oplus G_{H}^{(5)})\circ G_{V}^{(3)}$ and using Eq.~\eqref{eq: noise for graph concatenation} and Eq.~\eqref{eq: noise for graph sum}, observe that
\begin{align}
    &{\bf N}(G_{\sqsupset}, \bm{\phi}_{a}){\bf N}(G_{\sqsupset}, \bm{\phi}_{a})^{T}  \nonumber \\
    &=  {\bf G}(G_{V}^{(3)}, \bm{\phi}_{\rm{int}} ) P \left( {\bf N}(G_{H}^{(5)}, \bm{\phi}^{(1)}){\bf N}(G_{H}^{(5)}, \bm{\phi}^{(1)})^{T}  \oplus {\bf N}(G_{H}^{(5)}, \bm{\phi}^{(2)}) {\bf N}(G_{H}^{(5)}, \bm{\phi}^{(2)})^{T} \right) P^{T} {\bf G}(G_{V}^{(3)}, \bm{\phi}_{\rm{int}} )^{T}  \nonumber \\
    &+ {\bf N}(G_{V}^{(3)}, \bm{\phi}_{\rm{int}} ){\bf N}(G_{V}^{(3)}, \bm{\phi}_{\rm{int}} )^{T} \label{eq: I14} \\
    &= {\bf G}(G_{V}^{(3)}, \bm{\phi}_{\rm{int}} ){\bf G}(G_{V}^{(3)}, \bm{\phi}_{\rm{int}} )^{T} + {\bf G}(G_{V}^{(3)}, \bm{\phi}_{\rm{int}} )PDP^{T}{\bf G}(G_{V}^{(3)}, \bm{\phi}_{\rm{int}} )^{T} + {\bf N}(G_{V}^{(3)}, \bm{\phi}_{\rm{int}} ){\bf N}(G_{V}^{(3)}, \bm{\phi}_{\rm{int}} )^{T} , 
\end{align}
where $P$ is a $4\times 4$ permutation matrix (given in Eq.~\eqref{eq: permutation for graph sum}), and we have employed the convention $\bm{\phi}_{a} = \bm{\phi}^{(1)} \cdot \bm{\phi}^{(2)} \cdot \bm{\phi}_{\rm{int}}$ as in Fig.~\ref{fig: graph G_subseteq}, and also used the fact that the sandwiched term in the right-hand side of Eq.~\eqref{eq: I14} can be expressed as
\begin{align}
    {\bf N}(G_{H}^{(5)}, \bm{\phi}^{(1)}){\bf N}(G_{H}^{(5)}, \bm{\phi}^{(1)})^{T}  \oplus {\bf N}(G_{H}^{(5)}, \bm{\phi}^{(2)}) {\bf N}(G_{H}^{(5)}, \bm{\phi}^{(2)})^{T} = \mathbb{I}_{4} + D ,
\end{align}
for a $4\times 4$ positive semi-definite matrix $D$ by Lemma~\ref{lemma: noise matrix for horizontal chain}. 
Combining these arguments, the minimum eigenvalue of ${\bf N}(G_{T}, \bm{\phi}_{T}){\bf N}(G_{T}, \bm{\phi}_{T})^{T}$ is lower bounded by the minimum eigenvalue of ${\bf G}(G_{V}^{(3)}, \phi_{\rm{int}} ) {\bf G}(G_{V}^{(3)}, \phi_{\rm{int}} ) ^{T}$, which is given by $1 + 2\tan\phi_{\rm{int}}\left(\tan\phi_{\rm{int}} - \sqrt{1 + \tan^2\phi_{\rm{int}}}\right)$ from Eq.~\eqref{eq: input output relation for vertical cluster state}.
Because $\tan\phi_{\rm{int}} \in [-1,1]$ from Eq.~\eqref{eq: phase-shift angles a for brickwork}, this minimum eigenvalue is further minimized when $\tan\phi_{\rm{int}} = 1$, which gives $3-2\sqrt{2}$.
Therefore, the minimum eigenvalue of ${\bf N}(G_{T}, \bm{\phi}_{T}){\bf N}(G_{T}, \bm{\phi}_{T})^{T}$ is bounded from below by $3-2\sqrt{2}$.

\end{proof}

Based on the above arguments, we are finally ready to prove Lemma~\ref{lemma: minimum eigenvalue for universal brickwork graph state}.
In the following, we show that the minimum eigenvalue of $ {\bf N}(G_{U}, \bm{\phi}_{U}) {\bf N}(G_{U}, \bm{\phi}_{U})^{T}$ for the universal brickwork graph $G_{U}$ and universal phase-shift angles $\bm{\phi}_U$ (defined in Definition~\ref{def: brickwork graph and angles}) is bounded from below by $(3-2\sqrt{2})M + 1$.

\begin{proof}[Proof of Lemma~\ref{lemma: minimum eigenvalue for universal brickwork graph state}]

Recall that, as defined in Definition~\ref{def: brickwork graph and angles}, the universal brickwork graph $G_{U}$ can be expressed as
\begin{align}\label{eq: redef for universal graph}
    G_{U} = \underbrace{G_{D}^{(M)} \circ G_{D}^{(M)}\circ \cdots \circ G_{D}^{(M)}}_{k\geq\frac{M}{2}+1\;\text{times}} \quad \text{where} \quad G_{D}^{(M)} = ( \underbrace{G_{T} \oplus G_{T} \oplus \cdots \oplus G_{T}}_{\frac{M}{2} \; \text{times}} )\circ (G_{H}^{(13)} \oplus \underbrace{G_{T}\oplus G_{T}\oplus \cdots \oplus G_{T}}_{\frac{M}{2} - 1 \; \text{times}} \oplus \, G_{H}^{(13)}) .
\end{align}
Also, for $j \in [t]$ with $t = k - 1 \geq M/2$, the set of universal phase-shift angles $\bm{\phi}_{U}$ is given by 
\begin{align}\label{eq: redef for universal phase}
    \bm{\phi}_{U} =  \bm{\phi}_{T_{1}}\cdot \cdots\cdot\bm{\phi}_{T_{t}}\bm{\phi}_{\rm ps} \quad \text{where} \quad \bm{\phi}_{T_{j}} = \bm{\phi}_{T_{1,j}}\cdot\bm{\phi}_{T_{2,j}}\cdot \cdots \cdot \bm{\phi}_{T_{\frac{M}{2},j}} \cdot \bm{\phi}_0 \cdot \bm{\phi}_{T_{\frac{M}{2}+1, j}} \cdot \cdots \cdot \bm{\phi}_{T_{M-1, j}}\cdot \bm{\phi}_0 .
\end{align}
Here, each $\bm{\phi}_{T_{i,j}}$ denotes the set of brick phase-shift angles that gives ${\bf G}(G_{T}, \bm{\phi}_{T_{i,j}}) = \begin{pmatrix}
     (T_{i,j})_{R} & -(T_{i,j})_{I} \\ (T_{i,j})_{I} & (T_{i,j})_{R}
\end{pmatrix}$ for real $(T_{i,j})_{R}$ and imaginary $(T_{i,j})_{R}$ part of the brick beam splitter $T_{i,j} = \begin{pmatrix}
        e^{i\beta_{i,j}}\cos\theta_{i,j} & -\sin\theta_{i,j} \\ e^{i\beta_{i,j}}\sin\theta_{i,j} & \cos\theta_{i,j}
\end{pmatrix}$ (for $i \in [M-1] $ and $j \in [k]$),
and $\bm{\phi}_0$ is an all-zero phase-shift angles $\bm{\phi}_0 = \{0,\dots,0\}$ that gives an identity matrix ${\bf G}(G_{H}^{(13)}, \bm{\phi}_0) = \mathbb{I}_2$.
Also, $\bm{\phi}_{\rm ps}$ denotes the set of phase-shift angles that, when combined with the graph $G_{D}^{(M)}$, implements the final phase shift ${\bf G}(G_{D}^{(M)}, \bm{\phi}_{\rm ps}) = \begin{pmatrix}
        (U_{\rm ps})_{R} & -(U_{\rm ps})_{I} \\ (U_{\rm ps})_{I} & (U_{\rm ps})_{R} 
\end{pmatrix} $ as appeared in Eq.~\eqref{eq: decomposing unitary circuit} and discussed in Sec.~\ref{section: cluster state for universal LO circuit}.

Now, for the graph $G_{D}^{(M)}$ and the set of phase-shift angles $\bm{\phi}_{T_{j}}$ defined in Eq.~\eqref{eq: redef for universal phase}, by using matrix transformation rules from Eq.~\eqref{eq: symplectic for graph concatenation} to Eq.~\eqref{eq: noise for graph sum}, observe that
\begin{align}
    &{\bf N}(G_{D}^{(M)}, \bm{\phi}_{T_{j}}){\bf N}(G_{D}^{(M)}, \bm{\phi}_{T_{j}})^{T} \nonumber \\
    &= P_1 \left( {\bf N}(G_{H}^{(13)}, \bm{\phi}_{0}) {\bf N}(G_{H}^{(13)}, \bm{\phi}_{0})^{T} \oplus \left( \bigoplus_{i = \frac{M}{2}+1}^{M-1}  {\bf N}(G_{T}, \bm{\phi}_{T_{i,j}}){\bf N}(G_{T}, \bm{\phi}_{T_{i,j}} )^{T} \right) \oplus {\bf N}(G_{H}^{(13)}, \bm{\phi}_{0}) {\bf N}(G_{H}^{(13)}, \bm{\phi}_{0})^{T} \right) P_1^{T} \nonumber \\
    &+ P_2\left( \mathbb{I}_{2} \oplus  \left( \bigoplus_{i = \frac{M}{2}+1}^{M-1}  {\bf G}(G_{T}, \bm{\phi}_{T_{i,j}}) \right)\oplus \mathbb{I}_{2} \right) \left( \bigoplus_{i = 1}^{\frac{M}{2}}  {\bf N}(G_{T}, \bm{\phi}_{T_{i,j}}){\bf N}(G_{T}, \bm{\phi}_{T_{i,j}} )^{T} \right) \left( \mathbb{I}_{2} \oplus  \left( \bigoplus_{i = \frac{M}{2}+1}^{M-1}  {\bf G}(G_{T}, \bm{\phi}_{T_{i,j}}) \right)\oplus \mathbb{I}_{2} \right)^{T} P_2^{T}, \label{eq: I20}
\end{align}
for $2M\times 2M$ permutation matrices $P_1$ and $P_2$. 
Note that by definition, ${\bf G}(G_{T}, \bm{\phi}_{T_{i,j}})$ is given by an orthogonal matrix, such that the second term in the right-hand side of Eq.~\eqref{eq: I20} is the orthogonal transformation of the snadwiched term, which preserves its eigenvalues.
Also, by Lemma~\ref{lemma: noise matrix for brickwork cluster state}, the minimum eigenvalue of ${\bf N}(G_{T}, \bm{\phi}_{T_{i,j}}){\bf N}(G_{T}, \bm{\phi}_{T_{i,j}})^{T}$ is lower bounded by $3 - 2\sqrt{2}$. 
Lastly, as given in Lemma~\ref{lemma: noise matrix for horizontal chain}, the minimum eigenvalue of $ {\bf N}(G_{H}^{(13)}, \bm{\phi}_{0}){\bf N}(G_{H}^{(13)}, \bm{\phi}_{0})^{T}$ is lower bounded by $1$.
Combining all these arguments, by Weyl's inequality~\cite{weyl1912asymptotische}, the minimum eigenvalue of ${\bf N}(G_{D}^{(M)}, \bm{\phi}_{T_{j}}){\bf N}(G_{D}^{(M)}, \bm{\phi}_{T_{j}})^{T}$ is at least $6 - 4\sqrt{2}$.

Therefore, from Eq.~\eqref{eq: redef for universal graph} and Eq.~\eqref{eq: redef for universal phase}, using the matrix transformation rules, for $t = k - 1 \geq M/2$, we have
\begin{align}\label{eq: noise matrix for universal brickwork}
     & {\bf N}(G_{D}^{(M,t)}, \bm{\phi}_{T_{1}}\cdot \cdots\cdot\bm{\phi}_{T_{t}} ) {\bf N}(G_{D}^{(M,t)}, \bm{\phi}_{T_{1}}\cdot \cdots\cdot\bm{\phi}_{T_{t}} )^{T} \nonumber \\
     &= {\bf G}(G_{D}^{(M,t-1)},\bm{\phi}_{T_{2}}\cdot \cdots\cdot\bm{\phi}_{T_{t}}  ) {\bf N}(G_{D}^{(M)}, \bm{\phi}_{T_{1}}){\bf N}(G_{D}^{(M)}, \bm{\phi}_{T_{1}})^{T}{\bf G}(G_{D}^{(M,t-1)},\bm{\phi}_{T_{2}}\cdot \cdots\cdot\bm{\phi}_{T_{t}}  )^{T} \nonumber \\
     &+  {\bf G}(G_{D}^{(M,t-2)},\bm{\phi}_{T_{3}}\cdot \cdots\cdot\bm{\phi}_{T_{t}}  ) {\bf N}(G_{D}^{(M)}, \bm{\phi}_{T_{2}}){\bf N}(G_{D}^{(M)}, \bm{\phi}_{T_{2}})^{T}{\bf G}(G_{D}^{(M,t-2)},\bm{\phi}_{T_{3}}\cdot \cdots\cdot\bm{\phi}_{T_{t}}  )^{T} \nonumber \\
     &+ \cdots \nonumber \\
     &+ {\bf G}(G_{D}^{(M)},\bm{\phi}_{T_{t}}  ) {\bf N}(G_{D}^{(M)}, \bm{\phi}_{T_{t-1}}){\bf N}(G_{D}^{(M)}, \bm{\phi}_{T_{t-1}})^{T} {\bf G}(G_{D}^{(M)},\bm{\phi}_{T_{t}}  )^{T} \nonumber \\
     &+ {\bf N}(G_{D}^{(M)}, \bm{\phi}_{T_{t}}){\bf N}(G_{D}^{(M)}, \bm{\phi}_{T_{t}})^{T}. 
\end{align}
Because orthogonal transformation of a matrix preserves its eigenvalues, by Weyl's inequality~\cite{weyl1912asymptotische}, the minimum eigenvalue of ${\bf N}(G_{D}^{(M,t)}, \bm{\phi}_{T_{1}}\cdot \cdots\cdot\bm{\phi}_{T_{t}} ) {\bf N}(G_{D}^{(M,t)}, \bm{\phi}_{T_{1}}\cdot \cdots\cdot\bm{\phi}_{T_{t}} )^{T}$ is lower bounded by the sum of minimum eigenvalues of ${\bf N}(G_{D}^{(M)}, \bm{\phi}_{T_{j}}){\bf N}(G_{D}^{(M)}, \bm{\phi}_{T_{j}})^{T}$ for all $j \in [t]$, which is $(6 - 4\sqrt{2})t$.

Lastly, considering the final phase shifts at the end, we have
\begin{align}
    {\bf N}(G_{U}, \bm{\phi}_{U}) {\bf N}(G_{U}, \bm{\phi}_{U})^{T}
    &= {\bf G}(G_{D}^{(M)}, \bm{\phi}_{\rm ps}) {\bf N}(G_{D}^{(M,t)}, \bm{\phi}_{T_{1}}\cdot \cdots\cdot\bm{\phi}_{T_{t}} ) {\bf N}(G_{D}^{(M,t)}, \bm{\phi}_{T_{1}}\cdot \cdots\cdot\bm{\phi}_{T_{t}} )^{T} {\bf G}(G_{D}^{(M)}, \bm{\phi}_{\rm ps})^{T} \nonumber \\
    &+ {\bf N}(G_{D}^{(M)}, \bm{\phi}_{\rm ps}){\bf N}(G_{D}^{(M)}, \bm{\phi}_{\rm ps})^{T} .
\end{align}
Here, when operating on $G_{D}^{(M)}$ to implement the single-mode phase-shift operation ${\bf G}(G_{D}^{(M)}, \bm{\phi}_{\rm ps})$, the implementation effectively reduces to that of a one-dimensional chain-structured cluster state.
This is because all phase-shift angles are set to zero for modes not used to implement the single-mode phase shift, leaving only the relevant modes (i.e., the horizontal chain) active during the computation.
Hence, by Lemma~\ref{lemma: noise matrix for horizontal chain}, the minimum eigenvalue of ${\bf N}(G_{D}^{(M)}, \bm{\phi}_{\rm ps}){\bf N}(G_{D}^{(M)}, \bm{\phi}_{\rm ps})^{T}$ is lower bounded by $1$. 
Then, as argued previously, the minimum eigenvalue of ${\bf N}(G_{U}, \bm{\phi}_{U}) {\bf N}(G_{U}, \bm{\phi}_{U})^{T}$ can be lower bounded by $(6 - 4\sqrt{2})t + 1$.
Since $t \geq \frac{M}{2}$, this yields the desired bound.

\end{proof}

\subsection{Proof of Lemma~\ref{lemma: norm of noise matrix}}\label{appendix: section: norm of noise matrix}

Building upon the arguments developed in Sec.~\ref{appendix: section: easiness} and Sec.~\ref{appendix: section: lemma for minimum eigenvalue}, we expand the Frobenius norm as follows: 
\begin{align}
    \left\Vert \textbf{N}_{U}\textbf{N}_{U}^T \right\Vert_{F} 
    &\leq
    \left\Vert \mathbb{I}_{2M} \right\Vert_{F}  +  \left\Vert {\bf N}(G_{U}, \bm{\phi}_U) {\bf N}(G_{U}, \bm{\phi}_U)^{T} \right\Vert_{F}   \label{a12} \\
    &\leq
    \left\Vert \mathbb{I}_{2M} \right\Vert_{F}  +  \sum_{j = 1}^{k - 1} \left\Vert {\bf N}(G_{D}^{(M)}, \bm{\phi}_{T_{j}}){\bf N}(G_{D}^{(M)}, \bm{\phi}_{T_{j}})^{T} \right\Vert_{F} +   \left\Vert {\bf N}(G_{D}^{(M)}, \bm{\phi}_{\rm ps}){\bf N}(G_{D}^{(M)}, \bm{\phi}_{\rm ps})^{T} \right\Vert_{F}  \label{a13} \\
    &\leq 
    \left\Vert \mathbb{I}_{2M} \right\Vert_{F}  +  \sum_{j = 1}^{k - 1} \left( \sum_{i = 1}^{M-1} \left\Vert {\bf N}(G_{T}, \bm{\phi}_{T_{i,j}}){\bf N}(G_{T}, \bm{\phi}_{T_{i,j}} )^{T}\right\Vert_{F}  +  2\left\Vert {\bf N}(G_{H}^{(13)}, \bm{\phi}_{0}) {\bf N}(G_{H}^{(13)}, \bm{\phi}_{0})^{T} \right\Vert_{F}  \right)  \nonumber \\
    &+  \left\Vert {\bf N}(G_{D}^{(M)}, \bm{\phi}_{\rm ps}){\bf N}(G_{D}^{(M)}, \bm{\phi}_{\rm ps})^{T} \right\Vert_{F}  , \label{a14}
\end{align}
where we sequentially applied the triangle inequality for the Frobenius norm, and specifically, used Eq.~\eqref{eq: matrices for our setup} in Eq.~\eqref{a12}, used Eq.~\eqref{eq: noise matrix for universal brickwork} in Eq.~\eqref{a13}, and used Eq.~\eqref{eq: I20} in Eq.~\eqref{a14}. 
First, from Lemma~\ref{lemma: noise matrix for horizontal chain}, we obtain
\begin{align}
    \left\Vert {\bf N}(G_{H}^{(13)}, \bm{\phi}_{0}) {\bf N}(G_{H}^{(13)}, \bm{\phi}_{0})^{T} \right\Vert_{F}  \leq  \sum_{i=1}^{12}  \left\Vert {\bf N}(G_{H}^{(2)}, 0 ) {\bf N}(G_{H}^{(2)}, 0 )^{T} \right\Vert_{F} = 12.
\end{align}
Moreover, by Definition~\ref{def: brickwork cluster state and angle} we have
\begin{align}
    \left\Vert {\bf N}(G_{T}, \bm{\phi}_{T_{i,j}}){\bf N}(G_{T}, \bm{\phi}_{T_{i,j}} )^{T} \right\Vert_{F}  
    \leq 
    \left\Vert {\bf N}(G_{\sqsupset}, \bm{\phi}_{b}) {\bf N}(G_{\sqsupset}, \bm{\phi}_{b})^{T} \right\Vert_{F}  + 2 \left\Vert {\bf N}(G_{\sqsupset}, \bm{\phi}_{a}) {\bf N}(G_{\sqsupset}, \bm{\phi}_{a})^{T} \right\Vert_{F}   .
\end{align}
By further decomposing the noise matrices as in Eq.~\eqref{eq: I14}, for the phase-shift angles $\bm{\phi}_{a}$ and $\bm{\phi}_{b}$ in Eq.~\eqref{eq: phase-shift angles a for brickwork} and Eq.~\eqref{eq: phase-shift angles b for brickwork} that implement an arbitrary brick beam splitter $T_{i,j}$, the right-hand side of the above expression is bounded by a constant.
This follows from the fact that the dimension and all elements of the noise matrices are bounded by constants.

Similarly, we have
\begin{align}
    \left\Vert {\bf N}(G_{D}^{(M)}, \bm{\phi}_{\rm ps}){\bf N}(G_{D}^{(M)}, \bm{\phi}_{\rm ps})^{T} \right\Vert_{F} 
    \leq 
    \sum_{i = 1}^{M-1} \left\Vert {\bf N}(G_{T}, \bm{\phi}_{i}'){\bf N}(G_{T}, \bm{\phi}_{i}'  )^{T}\right\Vert_{F}  +  2\left\Vert {\bf N}(G_{H}^{(13)}, \bm{\phi}_{0}) {\bf N}(G_{H}^{(13)}, \bm{\phi}_{0})^{T} \right\Vert_{F}  ,
\end{align}
where $\bm{\phi}_{i}'$ denotes a set of phase-shift angles that, when combined with the graph $G_{T}$, implement the target phase-shift operations. 
As in the previous case, each term on the right-hand side is bounded by a constant.

Taken together, from Eq.~\eqref{a14}, we finally have 
\begin{align}
    \left\Vert \textbf{N}_{U}\textbf{N}_{U}^T \right\Vert_{F} \leq O(Mk)   ,
\end{align}
where $k \geq M/2 + 1$ by Definition~\ref{def: brickwork graph and angles}, thus concluding the proof.

\subsection{Proof of Lemma~\ref{lemma: multiplicative coefficient is bounded}}\label{appendix: section: lemma for bounding multiplicative coefficient}

For convenience, we introduce three observations that will be frequently used throughout the proof. 

\begin{observation}
The noise matrix $\delta\Sigma = \frac{e^{-2r}}{2}S{\bf N}_{U}{\bf N}_{U}^TS^{\dag}$ in Eq.~\eqref{eq: definition of noise matrix} is positive semi-definite. 
\end{observation}

\begin{observation}
The covariance matrix $\Sigma_{Q}$ and its inverse $\Sigma_{Q}^{-1}$ are positive definite. 
\end{observation}

\begin{observation}\label{observation: invertible}
$\mathbb{I}_{2M} + \delta\Sigma\cdot\Sigma_Q^{-1}$ has only positive eigenvalues and thus invertible. 
\end{observation}

Based on these observations, we prove Lemma~\ref{lemma: multiplicative coefficient is bounded}.

\begin{proof}[Proof of Lemma~\ref{lemma: multiplicative coefficient is bounded}]

Observe that
\begin{align}
    \text{det}(\Sigma'_{Q}) = \det(\Sigma_Q + \delta\Sigma) = \det(\Sigma_{Q})\det(\mathbb{I}_{2M}+\delta\Sigma\cdot\Sigma_Q^{-1}) ,
\end{align}
and thus we have a simplified form
\begin{align}\label{eq: determinant ratio}
    \frac{\sqrt{|\Sigma_Q|}}{\sqrt{|\Sigma'_Q|}} = \frac{1}{\sqrt{\det(\mathbb{I}_{2M}+\delta\Sigma\cdot\Sigma_Q^{-1})}} .
\end{align}
Throughout the proof, we investigate the lower and upper bounds of Eq.~\eqref{eq: determinant ratio}, under the assumption that $\left| \Tr(\delta\Sigma\cdot\Sigma_Q^{-1})\right|  \leq 1/3$.

We first examine the lower bound.
Given that $\mathbb{I}_{2M} + \delta\Sigma\cdot\Sigma_Q^{-1}$ has only positive eigenvalues, let us denote $\{\lambda_i\}_{i=1}^{2M}$ as a set of eigenvalues of $\mathbb{I}_{2M} + \delta\Sigma\cdot\Sigma_Q^{-1}$ which are all positive. 
Then, we have 
\begin{align}
    \log\det(\mathbb{I}_{2M} + \delta\Sigma\cdot\Sigma_Q^{-1}) = \sum_{i=1}^{2M}\log(\lambda_{i}) \leq \sum_{i=1}^{2M}(\lambda_{i} - 1) 
     = \Tr(\delta\Sigma\cdot\Sigma_Q^{-1}).
\end{align}
Moreover, observe that 
\begin{align}
    e^{\frac{1}{2}\Tr(\delta\Sigma\cdot\Sigma_Q^{-1})} \leq \frac{1}{1-\frac{1}{2}\Tr(\delta\Sigma\cdot\Sigma_Q^{-1})}.
\end{align}
provided that $\Tr(\delta\Sigma\cdot\Sigma_Q^{-1}) < 2$.
Therefore, we have the lower bound
\begin{align}
    \frac{1}{\sqrt{\det(\mathbb{I}_{2M}+\delta\Sigma\cdot\Sigma_Q^{-1})}} &\geq e^{-\frac{1}{2}\Tr(\delta\Sigma\cdot\Sigma_Q^{-1})} 
    \geq 1-\frac{1}{2}\Tr(\delta\Sigma\cdot\Sigma_Q^{-1}) 
    \geq 1 - \frac{1}{2}\| \delta\Sigma\|_{F} \| \Sigma_{Q}^{-1}\|_{F}, \label{eq: lower bound on inv det: cauchy schwarz}
\end{align}
where the inequality in Eq.~\eqref{eq: lower bound on inv det: cauchy schwarz} comes from the the Cauchy–Schwarz inequality.

We next examine the upper bound of Eq.~\eqref{eq: determinant ratio}.
By similarly denoting $\{\lambda_i\}_{i=1}^{2M}$ as a set of eigenvalues of $\mathbb{I}_{2M} + \delta\Sigma\cdot\Sigma_Q^{-1}$, we have 
\begin{align}\label{eq: lower bound on log det}
     \log\det(\mathbb{I}_{2M} + \delta\Sigma\cdot\Sigma_Q^{-1})= \sum_{i=1}^{2M}\log(\lambda_{i}) \geq \sum_{i=1}^{2M} \left( 1 - \frac{1}{\lambda_{i}}\right) = \Tr(\mathbb{I}_{2M} - (\mathbb{I}_{2M} + \delta\Sigma\cdot\Sigma_Q^{-1})^{-1}) ,
\end{align}
provided that $\mathbb{I}_{2M} + \delta\Sigma\cdot\Sigma_{Q}^{-1}$ is invertible. 
To obtain the lower bound of the right-hand side in Eq.~\eqref{eq: lower bound on log det}, consider a series expansion
\begin{align}
    \mathbb{I}_{2M} - (\mathbb{I}_{2M} + \delta\Sigma\cdot\Sigma_Q^{-1})^{-1} = \delta\Sigma\cdot\Sigma_Q^{-1} - (\delta\Sigma\cdot\Sigma_Q^{-1})^2 + (\delta\Sigma\cdot\Sigma_Q^{-1})^3 - \cdots.
\end{align}
Then, given that $\left| \Tr(\delta\Sigma\cdot\Sigma_Q^{-1})\right| < 1$, we have
\begin{align}
    \Tr(\mathbb{I}_{2M} - (\mathbb{I}_{2M} + \delta\Sigma\cdot\Sigma_Q^{-1})^{-1})  
    &= \Tr(\delta\Sigma\cdot\Sigma_Q^{-1}) -\Tr((\delta\Sigma\cdot\Sigma_Q^{-1})^2) + \Tr((\delta\Sigma\cdot\Sigma_Q^{-1})^3) - \cdots \\
    &\geq -\left|\Tr(\delta\Sigma\cdot\Sigma_Q^{-1})\right| - \left| \Tr((\delta\Sigma\cdot\Sigma_Q^{-1})^2)\right| - \left|\Tr((\delta\Sigma\cdot\Sigma_Q^{-1})^3)\right| - \cdots \\
    &\geq -\left|\Tr(\delta\Sigma\cdot\Sigma_Q^{-1})\right| - \left| \Tr(\delta\Sigma\cdot\Sigma_Q^{-1})\right|^2 - \left|\Tr(\delta\Sigma\cdot\Sigma_Q^{-1})\right|^3 - \cdots \label{eq: trace of power vs power of trace}\\
    &= -\left|\Tr(\delta\Sigma\cdot\Sigma_Q^{-1})\right|\left( 1 + \left|\Tr(\delta\Sigma\cdot\Sigma_Q^{-1})\right| + \left| \Tr(\delta\Sigma\cdot\Sigma_Q^{-1})\right|^2 + \cdots \right) \\
    &= -\frac{\left| \Tr(\delta\Sigma\cdot\Sigma_Q^{-1})\right|}{1 - \left| \Tr(\delta\Sigma\cdot\Sigma_Q^{-1})\right|} ,
\end{align}
where the inequality in Eq.~\eqref{eq: trace of power vs power of trace} holds by Lemma~\ref{lemma: trace of power vs power of trace}, given that $\delta\Sigma$ and $\Sigma_{Q}$ are both positive (semi) definite.

\begin{lemma}\label{lemma: trace of power vs power of trace}

Let $A$ and $B$ be the positive semi-definite matrices. Then, for any positive integer $k \in \mathbb{Z}^{+}$, the following inequality holds. 
\begin{align}
    \Tr((AB)^k) \leq \Tr(AB)^k .
\end{align}
\end{lemma}
\begin{proof}
    See Sec.~\ref{appendix: section: proof for trace power}.
\end{proof}
To proceed, observe that 
\begin{align}
    e^{\frac{1}{2} \Tr(\mathbb{I}_{2M} - (\mathbb{I}_{2M} + \delta\Sigma\cdot\Sigma_Q^{-1})^{-1})} &\geq 1 + \frac{1}{2}\Tr(\mathbb{I}_{2M} - (\mathbb{I}_{2M} + \delta\Sigma\cdot\Sigma_Q^{-1})^{-1}) \\
    &\geq 1 -\frac{1}{2}\frac{\left| \Tr(\delta\Sigma\cdot\Sigma_Q^{-1})\right|}{1 - \left| \Tr(\delta\Sigma\cdot\Sigma_Q^{-1})\right|} \\
    &= \frac{1 - \frac{3}{2}\left| \Tr(\delta\Sigma\cdot\Sigma_Q^{-1})\right|}{1 - \left| \Tr(\delta\Sigma\cdot\Sigma_Q^{-1})\right|}.
\end{align}
Hence, we obtain the upper bound
\begin{align}
    \frac{1}{\sqrt{\det(\mathbb{I}_{2M}+\delta\Sigma\cdot\Sigma_Q^{-1})}} 
    &\leq e^{-\frac{1}{2} \Tr(\mathbb{I}_{2M} - (\mathbb{I}_{2M} + \delta\Sigma\cdot\Sigma_Q^{-1})^{-1})} \\
    &\leq \frac{1 - \left| \Tr(\delta\Sigma\cdot\Sigma_Q^{-1})\right|}{1 - \frac{3}{2} \left| \Tr(\delta\Sigma\cdot\Sigma_Q^{-1})\right|}  \\
    &\leq 1 +  \left| \Tr(\delta\Sigma\cdot\Sigma_Q^{-1})\right| \\
    &\leq 1 + \| \delta\Sigma\|_{F} \| \Sigma_{Q}^{-1}\|_{F}, \label{eq: J22}
\end{align}
where the third inequality holds provided that $\left| \Tr(\delta\Sigma\cdot\Sigma_Q^{-1})\right|  \leq 1/3$, and the last inequality comes from Cauchy-Schwarz inequality.
Therefore, combining Eq.~\eqref{eq: determinant ratio}, Eq.~\eqref{eq: lower bound on inv det: cauchy schwarz}, and Eq.~\eqref{eq: J22}, we finally obtain the bound
\begin{align}
     \left\vert 1 - \frac{\sqrt{|\Sigma_Q|}}{\sqrt{|\Sigma'_Q|}} \right\vert \leq \| \delta\Sigma\|_{F} \| \Sigma_{Q}^{-1}\|_{F},
\end{align}
concluding the proof.

\end{proof}

\subsection{Proof of Lemma~\ref{lemma: multiplicative error is bounded in worst case}}\label{appendix: section: bounding multiplicative error in worst case}

By utilizing the representation of the hafnian of the sum of matrices~\cite{efimov2021hafnian}, $\Haf(B+C)$ can be represented as 
\begin{align}
    \Haf(B+C) = \sum_{k=0}^{N}\sum_{\substack{J \subseteq [2N]\\|J|= 2N -2k}}\Haf(B_{J})\Haf(C_{\bar{J}}) = \Haf(B) + \sum_{k=1}^{N}\sum_{\substack{J \subseteq [2N]\\|J|= 2N -2k}}\Haf(B_{J})\Haf(C_{\bar{J}})   ,
\end{align}
where $B_{J}$ denotes the submatrix of $B$, with rows and columns taken according to the elements of $J \subseteq [2N]$ (similarly for $C_{\bar{J}}$, where $\bar{J}$ is a coset $[2N]\setminus J$).
Then we have
\begin{align}
    \frac{\Haf(B + C)}{\Haf(B)} =  1 + \sum_{k=1}^{N}\sum_{\substack{J \subseteq [2N]\\|J|= 2N -2k}}\frac{\Haf(B_{J})}{\Haf(B)}\Haf(C_{\bar{J}})    .  \label{eq: H19}
\end{align}
Recall that, $B$ and $C$ are given by 
\begin{align}
    B = \begin{pmatrix}
        W & 0 \\
        0 & W 
    \end{pmatrix} ,
    \quad C \coloneqq \frac{\Vert W \Vert_2}{\tanh r_0}\left(\mathbb{X}_{2M}\cdot\Sigma_{Q}^{-1}\cdot\delta\Sigma\cdot\Sigma_{Q}^{-1}\left(\mathbb{I}_{2M} + \delta\Sigma\cdot\Sigma_{Q}^{-1}\right)^{-1} \right)_{\bm{n}\oplus \bm{n}}   ,
\end{align}
where
\begin{align}
    W  = \begin{pmatrix}
        0_{\frac{ N}{2}} & W'\oplus I_{\frac{N}{2} - N_0} \\ (W')^{T}\oplus I_{\frac{N}{2} - N_0} & 0_{\frac{ N}{2}}
    \end{pmatrix}   ,
\end{align}
for a given $N_0\times N_0$ matrix $W' = \{-1,0,1\}^{N_0 \times N_0}$, and $N > 2N_0$.

To quantify the size of $\Haf(B+C)$, for convenience of analysis, we define the maximum absolute value along the elements of $C$ over all possible unitary matrices
\begin{align}
     C_{max} \coloneqq \max_{i,j,U}|C_{i,j}| .
\end{align}
Note that both of the matrices $\delta\Sigma$ and $\Sigma_{Q}$ composing the matrix $C$ implicitly depend on the unitary matrix $U$ we implement using the cluster state.

\begin{proof}[Proof of Lemma~\ref{lemma: multiplicative error is bounded in worst case}]

We first argue for the case that $\Per(W') \neq 0$, and later extend to the case $\Per(W') = 0$.
By definition, $\Haf(B) = \Per(W')^2 \geq 1$ always hold for $W' = \{-1,0,1\}^{N_0 \times N_0}$ unless $\Per(W') = 0$.
Also, from the fact that $B$ is a block matrix, we have 
\begin{align}\label{eq: H136}
   \sum_{k=1}^{N}\sum_{\substack{J \subseteq [2N]\\|J|= 2N -2k}}\Haf(B_{J})\Haf(C_{\bar{J}}) 
  = \sum_{k=1}^{N}\sum_{\substack{I, J \subseteq [N]\\ |I|,|J| \in 2\mathbb{Z}^{0+} \\ |I| + |J| = 2N -2k}}\Haf(W_{I})\Haf(W_{J})\Haf(C_{\bar{I}\oplus\bar{J}}) .
\end{align}
Here, as long as $|I| \geq N_0$ for $I \subseteq [N]$, the maximum value of $|\Haf(W_I)|$ is bounded by $N_0!$. 
Also, using the convention $C_{max} \coloneqq \max_{i,j,U}|C_{i,j}|$, we have $|\Haf(C_{\bar{I}\oplus\bar{J}})| \leq (2k-1)!!(C_{max})^{k}$ from the fact that $C_{\bar{I}\oplus\bar{J}}$ is $2k\times 2k$ matrix.
Lastly, for any $k \in [1, N]$, we have
\begin{align}\label{eq: number of possible config}
    \sum_{\substack{I, J \subseteq [N]\\ |I|,|J| \in 2\mathbb{Z}^{0+} \\ |I| + |J| = 2N -2k}} = \sum_{l = 0}^{\min\{k, N-k\}}\binom{N}{2l}\binom{N}{\min\{2k,2N-2k\} - 2l} = \frac{1}{2}\left[ \binom{2N}{2k} + (-1)^{k}\binom{N}{k}\right] \leq \binom{2N}{2k}   .
\end{align}
Combining all these arguments, we have 
\begin{align}
    \left\vert\sum_{k=1}^{N}\sum_{\substack{J \subseteq [2N]\\|J|= 2N -2k}}\frac{\Haf(B_{J})}{\Haf(B)}\Haf(C_{\bar{J}}) \right\vert 
    &\leq \sum_{k=1}^{N}\binom{2N}{2k}(N_0!)^2(2k-1)!!(C_{max})^k \\
    &\leq (N_0!)^2 \sum_{k=1}^{\infty}\binom{2N}{2k}(2k-1)!!(C_{max})^k \\
    &\leq (N_0!)^2 \sum_{k=1}^{\infty}\frac{(2N)^{2k}}{2^k k!}(C_{max})^k  \\
    &= (N_0!)^2 \left( e^{2N^2C_{max}}-1 \right) \\
    &\leq 3(N_0!)^2N^2C_{max} ,
\end{align}
where we assumed that $N^2C_{max} < 1/2$ in the last inequality. 
Here, the maximum value of $C_{max}$ can be bounded as 
\begin{align}
    \max_{i,j}|C_{i,j}| &\leq \frac{\Vert W \Vert_2}{\tanh r_0}\left\Vert \Sigma_{Q}^{-1}\cdot\delta\Sigma\cdot\Sigma_{Q}^{-1}\left(\mathbb{I}_{2M} + \delta\Sigma\cdot\Sigma_{Q}^{-1}\right)^{-1} \right\Vert_{F} \\
    &= \frac{\Vert W \Vert_2}{\tanh r_0}\left\Vert\Sigma_{Q}^{-1}\cdot\left( \delta\Sigma\cdot\Sigma_{Q}^{-1} -  \left(\delta\Sigma\cdot\Sigma_{Q}^{-1}\right)^2 + \left(\delta\Sigma\cdot\Sigma_{Q}^{-1}\right)^3 - \cdots\right)\right\Vert_{F} \\
     &\leq \frac{\Vert W \Vert_2}{\tanh r_0}\left\Vert \Sigma_{Q}^{-1}\right\Vert_{F} \left( \left\Vert  \delta\Sigma\cdot\Sigma_{Q}^{-1} \right\Vert_{F} + \left\Vert  \left(\delta\Sigma\cdot\Sigma_{Q}^{-1} \right)^2 \right\Vert_{F} +  \left\Vert  \left(\delta\Sigma\cdot\Sigma_{Q}^{-1} \right)^3 \right\Vert_{F} + \cdots \right) \\
     &\leq \frac{\Vert W \Vert_2}{\tanh r_0}\left\Vert \Sigma_{Q}^{-1}\right\Vert_{F} \left( \left\Vert  \delta\Sigma\right\Vert_{F} \left\Vert\Sigma_{Q}^{-1} \right\Vert_{F} + \left\Vert  \delta\Sigma\right\Vert_{F}^2 \left\Vert \Sigma_{Q}^{-1} \right\Vert_{F}^2 +  \left\Vert  \delta\Sigma\right\Vert_{F}^3 \left\Vert\Sigma_{Q}^{-1}  \right\Vert_{F}^3 + \cdots \right)  \\
     &\leq \frac{2\Vert W \Vert_2}{\tanh r_0}\left\Vert \Sigma_{Q}^{-1}\right\Vert_{F}^2 \left\Vert \delta\Sigma\right\Vert_{F} ,
\end{align}
where we invoked the sub-multiplicativity and sub-additivity of the Frobenius norm, further assuming $\left\Vert  \delta\Sigma\right\Vert_{F}\left\Vert\Sigma_{Q}^{-1} \right\Vert_{F} \leq 1/2$ for the last inequality.
Therefore, we can bound $C_{max}$ as
\begin{align}
     C_{max} \leq \frac{2\Vert W \Vert_2}{\tanh r_0}\max_{U} \left( \left\Vert \Sigma_{Q}^{-1}\right\Vert_{F}^2 \left\Vert \delta\Sigma\right\Vert_{F} \right) 
     \leq 
     \frac{2\Vert W \Vert_2}{\tanh r_0}  \max_{U} \left(\left\Vert \Sigma_{Q}^{-1}\right\Vert_{F}^2 \right) \max_{U} \left( \left\Vert \delta\Sigma\right\Vert_{F} \right)
     =
     \frac{2\Vert W \Vert_2}{\tanh r_0}  \left\Vert \Sigma_{Q}^{-1}\right\Vert_{F}^2 \max_{U} \left( \left\Vert \delta\Sigma\right\Vert_{F} \right), 
\end{align}
where the last equality holds by the fact that from Eq.~\eqref{eq: H42}, $ \left\Vert \Sigma_{Q}^{-1}\right\Vert_{F}$ is always given as 
\begin{align}
    \left\Vert \Sigma_{Q}^{-1}\right\Vert_{F} = \sqrt{\sum_{i,j = 1}^{2M} |( \Sigma_{Q}^{-1})_{i,j}|^2} = \sqrt{2M + 2\tanh^2 r_0\sum_{i,j = 1}^{M} |( U(\mathbb{I}_{N}\oplus 0_{M-N})U^{T})_{i,j}|^2} = \sqrt{2M+2N\tanh^2 r_0} ,
\end{align}
which is independent on $U$.

Now, we temporarily denote $\max_{U} \left( \left\Vert \delta\Sigma\right\Vert_{F} \right)$ as $\Vert \delta\Sigma \Vert_{M}$. 
Then, by Lemma~\ref{lemma: multiplicative coefficient is bounded}, given that $\left| \Tr(\delta\Sigma\cdot\Sigma_Q^{-1})\right| \leq \left\Vert  \delta\Sigma\right\Vert_{F}\left\Vert\Sigma_{Q}^{-1} \right\Vert_{F} \leq\Vert \delta\Sigma \Vert_{M} \Vert \Sigma_{Q}^{-1}\Vert_{F}  \leq 1/3$, we have
\begin{align}\label{eq: H60}
     \left\vert 1 - \frac{\sqrt{|\Sigma_Q|}}{\sqrt{|\Sigma'_Q|}} \right\vert \leq \| \Sigma_{Q}^{-1}\|_{F} \| \delta\Sigma\|_{F}  \leq  \Vert \Sigma_{Q}^{-1}\Vert_{F} \Vert \delta\Sigma \Vert_{M}  .
\end{align}
Also, we have
\begin{align}\label{eq: H61}
    \left\vert 1 - \frac{\Haf(B+C)}{\Haf(B)} \right\vert 
    = \left\vert\sum_{k=1}^{N}\sum_{\substack{J \subseteq [2N]\\|J|= 2N -2k}}\frac{\Haf(B_{J})}{\Haf(B)}\Haf(C_{\bar{J}}) \right\vert
    \leq 3(N_0!)^2N^2C_{max}
    \leq \frac{6(N_0!)^2N^2\Vert W \Vert_2}{\tanh r_0}  \left\Vert \Sigma_{Q}^{-1}\right\Vert_{F}^2  \Vert \delta\Sigma \Vert_{M} .
\end{align}
Combining Eq.~\eqref{eq: H60} and Eq.~\eqref{eq: H61}, we have 
\begin{align}
    \left\vert 1 - \frac{\sqrt{|\Sigma_Q|}}{\sqrt{|\Sigma'_Q|}} \frac{\Haf(B+C)}{\Haf(B)} \right\vert 
    &\leq 
    \Vert \Sigma_{Q}^{-1}\Vert_{F} \Vert \delta\Sigma \Vert_{M}   + \frac{6(N_0!)^2N^2\Vert W \Vert_2}{\tanh r_0}  \left\Vert \Sigma_{Q}^{-1}\right\Vert_{F}^2  \Vert \delta\Sigma \Vert_{M}  + \frac{6(N_0!)^2N^2\Vert W \Vert_2}{\tanh r_0}  \left\Vert \Sigma_{Q}^{-1}\right\Vert_{F}^3 \Vert \delta\Sigma \Vert_{M}^2 \\
    &\leq \frac{9(N_0!)^2N^2\Vert W \Vert_2}{\tanh r_0}  \left\Vert \Sigma_{Q}^{-1}\right\Vert_{F}^2 \Vert \delta\Sigma \Vert_{M},    
\end{align}
given that $\Vert \delta\Sigma \Vert_{M} \Vert \Sigma_{Q}^{-1}\Vert_{F}  \leq 1/3$.
Hence, to obtain the desired imprecision $\varepsilon' < 1$, we set the condition for $ \Vert \delta\Sigma \Vert_{M}$ that
\begin{align}\label{eq: condition 2}
    \Vert \delta\Sigma \Vert_{M} \leq
    \frac{\tanh r_0 \varepsilon' }{36(N_0!)^2N^2M \Vert W \Vert_2 }  \leq 
    \frac{\tanh r_0 \varepsilon' }{9(N_0!)^2N^2\Vert W \Vert_2 \left\Vert \Sigma_{Q}^{-1}\right\Vert_{F}^2  }  ,
\end{align}
where the last inequality holds by $ \left\Vert \Sigma_{Q}^{-1}\right\Vert_{F}^2 \leq 2M + 2N\tanh^2 r_0 \leq 4M$.
By this condition combined with $\varepsilon' < 1$, our assumptions $N^2C_{max} < 1/2 $ and $\Vert \delta\Sigma \Vert_{M} \Vert \Sigma_{Q}^{-1}\Vert_{F}  \leq 1/3$ are automatically satisfied, thereby obtaining the first bound for the case $\Per(W') \neq 0$.

We now argue the case that $\Haf(B) = \Per(W')^2 = 0$. 
By following the previous arguments, we have 
\begin{align}
    \left\vert  \Haf(B + C) \right\vert 
    &= \left\vert \sum_{k=1}^{N}\sum_{\substack{J \subseteq [2N]\\|J|= 2N -2k}}\Haf(B_{J})\Haf(C_{\bar{J}}) \right\vert   
    \leq  \sum_{k=1}^{N}\binom{2N}{2k}(N_0!)^2(2k-1)!!(C_{max})^k \\
    &\leq 3(N_0!)^2N^2C_{max}  
    \leq \frac{6(N_0!)^2N^2\Vert W \Vert_2}{\tanh r_0}  \left\Vert \Sigma_{Q}^{-1}\right\Vert_{F}^2  \Vert \delta\Sigma \Vert_{M} 
    \leq \frac{2}{3}\varepsilon', 
\end{align}
where in the last inequality we assumed the condition given in Eq.~\eqref{eq: condition 2}.
Then, we have 
\begin{align}
    \left\vert \frac{\sqrt{|\Sigma_Q|}}{\sqrt{|\Sigma'_Q|}}  \Haf(B + C) \right\vert 
    \leq \frac{2}{3}\frac{\sqrt{|\Sigma_Q|}}{\sqrt{|\Sigma'_Q|}}\varepsilon' 
    \leq \frac{2}{3}(1 + \Vert \delta\Sigma \Vert_{M} \Vert \Sigma_{Q}^{-1}\Vert_{F})\varepsilon' 
    \leq \varepsilon' ,
\end{align}
where we used Lemma~\ref{lemma: multiplicative coefficient is bounded} in the third inequality and $\Vert \delta\Sigma \Vert_{M} \Vert \Sigma_{Q}^{-1}\Vert_{F}  \leq 1/3$ in the last inequality. 
This gives the second bound for the case that $\Haf(B) = \Per(W')^2 = 0$, and thus, we conclude the proof.

\end{proof}

\subsection{Proof of Lemma~\ref{lemma: trace of power vs power of trace}}\label{appendix: section: proof for trace power}

\begin{proof}[Proof of Lemma~\ref{lemma: trace of power vs power of trace}]
Given that both $A$ and $B$ are positive semi-definite and symmetric matrices, we can decompose $A = ODO^T$ and $B = V\Lambda V^T$ for orthogonal matrices $O$ and $V$, and diagonal matrices $D$ and $\Lambda$ whose elements are all positive. 
Using these conventions, we can also define the square root of the matrices $A^{\frac{1}{2}} = OD^{\frac{1}{2}}O^T $ and $B^{\frac{1}{2}} = V\Lambda^{\frac{1}{2}} V^T$
Then, note that a matrix $A^{\frac{1}{2}}BA^{\frac{1}{2}}$ is a positive semi-definite and symmetric matrix, because 
\begin{align}
    A^{\frac{1}{2}}BA^{\frac{1}{2}} = OD^{\frac{1}{2}}O^TV\Lambda V^TOD^{\frac{1}{2}}O^T = \left(OD^{\frac{1}{2}}O^TV\Lambda^{\frac{1}{2}}\right)\left(OD^{\frac{1}{2}}O^TV\Lambda^{\frac{1}{2}}\right)^{T}.
\end{align}

Now, let us denote $\{\lambda_i\}$ as a set of all eigenvalues of $A^{\frac{1}{2}}BA^{\frac{1}{2}}$, which are all real and non-negative due to the positive semi-definiteness of $A^{\frac{1}{2}}BA^{\frac{1}{2}}$. 
Then, for an arbitrary positive integer $k$, we have 
\begin{align}
     \Tr((AB)^k) &= \Tr( \left(A^{\frac{1}{2}}BA^{\frac{1}{2}} \right)^k) 
     = \sum_{i}\lambda_i^k
     \leq \left(  \sum_{i} \lambda_i \right)^k \\
     &= \Tr(A^{\frac{1}{2}}BA^{\frac{1}{2}})^k = \Tr(AB)^{k}   , 
\end{align}
concluding the proof. 

\end{proof}

\section{Achieving universal brickwork cluster state from regular cluster states}

We now detail the physical procedure for compiling the universal brickwork cluster state $\ket{G_{U}}$ (Definition~\ref{def: brickwork graph and angles}) from standard regular cluster-state lattices.
Fig.~\ref{fig: grid to universal} illustrates the overall procedure by which the universal brickwork cluster state $\ket{G_{U}}$ can be obtained from grid and hexagonal lattices via the graph-transformation operations known as \textit{vertex deletion} and \textit{wire shortening}~\cite{gu2009quantum, dahlberg2018transforming, ghosh2023complexity}.
Specifically, performing a homodyne measurement in the $\hat{q}$-basis on the $i$th mode of a cluster state (red circle in Fig.~\ref{fig: grid to universal}), corresponding to a vertex $v_{i}$ in the underlying graph $G$, removes that vertex along with all edges incident to $v_{i}$, resulting in a cluster state whose underlying graph is $G \setminus \{v_i\}$. 
Also, for a one-dimensional chain graph, performing a homodyne measurement in the $\hat{p}$-basis (blue circle in Fig.~\ref{fig: grid to universal}) on an interior vertex shortens the chain by removing that vertex and directly connecting its two neighboring vertices.

\begin{figure}[t]
\centering
\includegraphics[width=\linewidth]{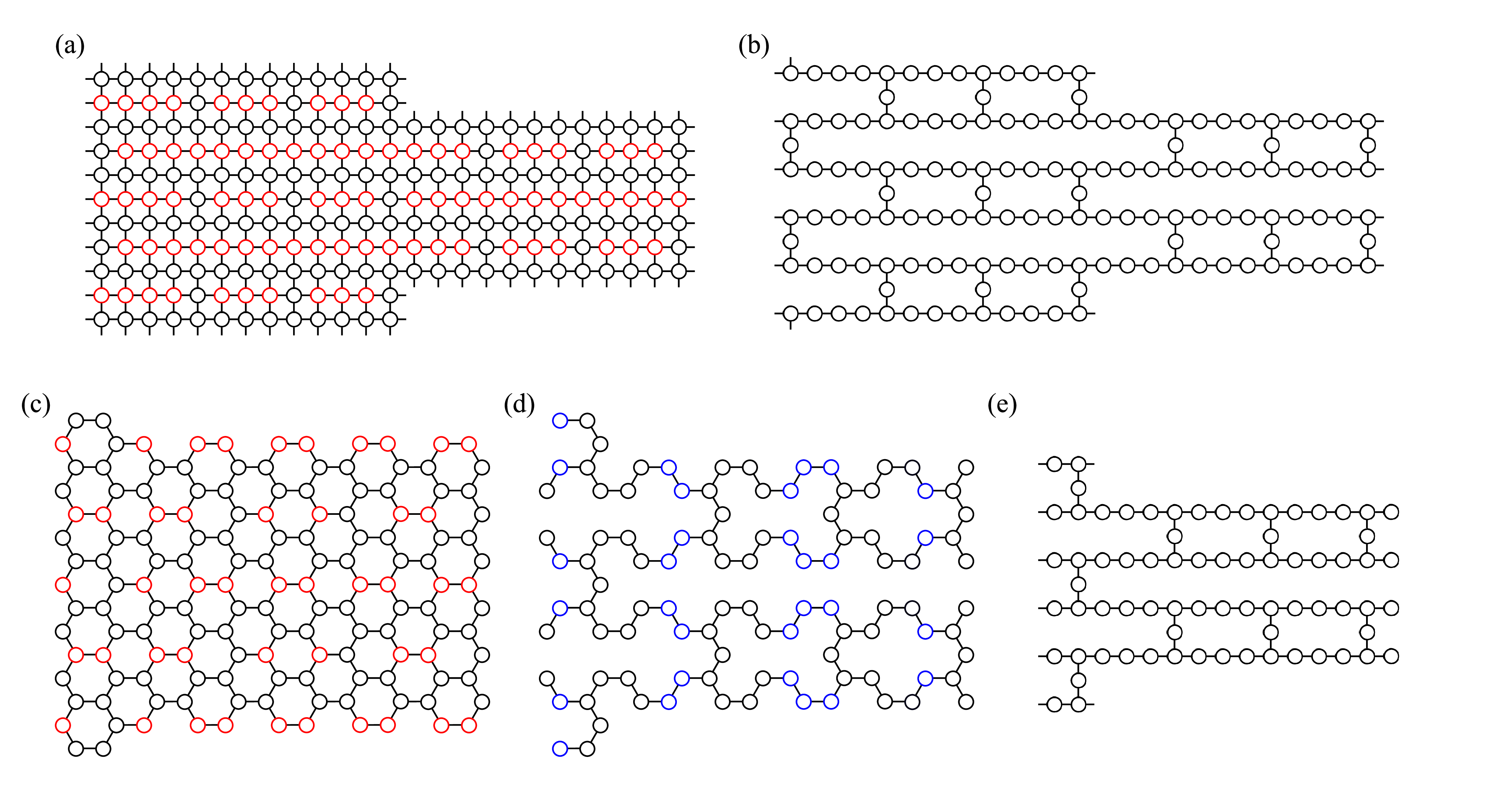}
\caption{Schematic of constructing the universal brickwork cluster state $\ket{G_{U}}$ from regular lattice cluster states via graph-transformation operations.
(a) Grid lattice. 
By applying vertex deletion to the red-highlighted nodes, which removes the vertex along with all incident edges, the universal brickwork graph $G_{U}$ shown in (b) is obtained.
(c) Hexagonal lattice.
Applying vertex deletion to the red-highlighted nodes yields the intermediate graph shown in (d).
Subsequently, applying wire shortening to the blue-highlighted nodes in (d) finally produces the universal brickwork graph $G_{U}$ illustrated in (e). 
}
\label{fig: grid to universal}
\end{figure}

\bibliographystyle{unsrt}
\bibliography{Reference.bib}